\documentclass[11pt]{article}

\usepackage[letterpaper]{geometry}
\usepackage{caption}%labelsep=space
\usepackage{mathtools,amsmath}
\usepackage{dsfont}  
\usepackage{endnotes}
\usepackage{indentfirst}  
\usepackage{graphicx}
\usepackage{epstopdf}
\usepackage{natbib}
\usepackage{multibib}
\usepackage{amsmath, amsthm, amssymb}   
\usepackage{amsfonts}
\usepackage{rotating}
\usepackage{bbm}
\usepackage{tikz}
\usetikzlibrary{arrows, decorations.markings}
\usetikzlibrary{positioning,calc,decorations.pathreplacing}
\usetikzlibrary{decorations.pathreplacing,calligraphy}
\usepackage{varwidth}
\usepackage{array}
\usepackage{xcolor}
\usepackage{comment}
\usepackage{float}
\usepackage{booktabs}
\usepackage{subcaption}

\usepackage{algorithm}
\usepackage{algpseudocode}

\floatstyle{ruled}
\restylefloat{algorithm}

\usepackage[normalem]{ulem}

\definecolor{Maroon}{RGB}{128,0,0}
\definecolor{Mygreen}{RGB}{0, 100, 0}
\definecolor{Pool}{RGB}{28,107,160}
\definecolor{Chase}{RGB}{88,14,235}

\usetikzlibrary{shapes,arrows}
\tikzstyle{vecArrow} = [thick, decoration={markings,mark=at position
   1 with {\arrow[semithick]{open triangle 60}}},
   double distance=1.4pt, shorten >= 5.5pt,
   preaction = {decorate},
   postaction = {draw,line width=1.4pt, white,shorten >= 4.5pt}]
\tikzstyle{innerWhite} = [semithick, white,line width=1.4pt, shorten >= 4.5pt]

\usepackage{xr-hyper}
\usepackage[colorlinks  = true,
            linkcolor   = blue,
            urlcolor    = blue,
            citecolor   = blue,
            anchorcolor = blue]{hyperref}
\usepackage{accents}
\long\def\symbolfootnote[#1]#2{\begingroup%
\def\thefootnote{\fnsymbol{footnote}}\footnote[#1]{#2}\endgroup}
\makeatletter
\def\@biblabel#1{\hspace*{-\labelsep}}
\makeatother

\usepackage{bigints}

\newtheorem{prop}{Proposition}
\newtheorem{cor}{Corollary}
\newtheorem{lem}{Lemma}
\newtheorem{definition}{Definition}

\newtheorem{rem}{Remark}

\newtheorem*{linearexample*}{Linear Example}

\usepackage{soul}

\newcommand{\E}[0]{\mathbb{E}}
\renewcommand{\P}[0]{\mathbb{P}}

\newcommand{\cZ}{\mathcal{Z}}

\DeclareMathOperator*{\argmax}{arg\,max}

\usepackage{threeparttable}% For Notes below table

\usepackage{siunitx}
\def\yyy{%
  \bgroup\uccode`\~\expandafter`\string-%
  \uppercase{\egroup\edef~{\noexpand\text{\llap{\textendash}\relax}}}%
  \mathcode\expandafter`\string-"8000 }

\def\xxxl#1{%
\bgroup\uccode`\~\expandafter`\string#1%
\uppercase{\egroup\edef~{\noexpand\text{\noexpand\llap{\string#1}}}}%
\mathcode\expandafter`\string#1"8000 }

\def\xxxr#1{%
\bgroup\uccode`\~\expandafter`\string#1%
\uppercase{\egroup\edef~{\noexpand\text{\noexpand\rlap{\string#1}}}}%
\mathcode\expandafter`\string#1"8000 }

\usepackage{pdflscape}
\usepackage{pdfpages}
\usepackage{setspace}
\begin{document}

%\setstretch{2} %this sets the space between lines

\title{Latent Fragility and Clustered Withdrawals in \\Dynamic Banks Runs}

\author{Jodi Dianetti, Giorgio Ferrari, Yunzhi Hu,  and  Hao Xing
\footnote{Dianetti: Department of Economics and Finance, University of Rome Tor Vergata; \href{jodi.dianetti@uniroma2.it}{jodi.dianetti@uniroma2.it}; Ferrari: Center for Mathematical Economics (IMW), Bielefeld University, Germany; \href{giorgio.ferrari@uni-bielefeld.de}{giorgio.ferrari@uni-bielefeld.de}; Hu: Kenan-Flagler Business School, UNC Chapel Hill; \href{yunzhi_hu@kenan-flagler.unc.edu}{Yunzhi\_Hu@kenan-flagler.unc.edu}. Xing: Questrom School of Business, Boston University; \href{haoxing@bu.edu}{haoxing@bu.edu}.\\  \hspace*{14pt}  }}
\date{\today}
% \author{}
% \date{}

\maketitle

\thispagestyle{empty} % this line does not skip numbering the first page but hides the page number

\begin{abstract}

\noindent Using a mean-field game framework, we study a dynamic model of bank runs in which more withdrawals raise the risk of bank failure.  Even though depositors receive gradual and idiosyncratic shocks, withdrawals occur in clusters. The main mechanism is latent fragility: run-prone depositors accumulate gradually over time and may prefer to wait individually, but they withdraw together once collective exit becomes self-fulfilling. We establish equilibrium existence and characterize earliest-run and latest-run equilibria. The clustering mechanism arises whether depositor heterogeneity is discrete or continuous. A common aggregate state coordinates withdrawal timing and leads to a unique threshold equilibrium.

\vspace{3mm}
\noindent \textbf{Keywords}: Bank Run, Strategic Complementarity, Mean-Field Game, Optimal Stopping 
\end{abstract}

\newpage
\section{Introduction}
%\yunzhi{A few things we need to take care: 1) introduce heterogeneity; 2) welfare implication; 3) policy implications}
Bank runs unfold as sudden but sometimes partial events. For an extended period, observed withdrawals may remain modest; then a large group of depositors exits within a short window while others remain in the bank. This pattern is inherently dynamic: fragility can build while observed withdrawals remain low and then materialize in a concentrated wave. Modern payment technology and social media can compress the time required to communicate and execute withdrawals, while depositor responses vary with insurance, bank relationships, and information networks \citep{iyer2012understanding,cookson2025social}. Yet neither faster execution nor cross-sectional heterogeneity explains why many depositors withdraw together at a particular moment, why others remain, or what determines the size of the run.

This pattern is particularly puzzling when the underlying shocks are gradual and idiosyncratic. Depositors experience changes in liquidity needs, outside options, and information at different times. In a large population, one would expect these independent shocks to average out, producing smooth aggregate outflows. A standard sunspot can coordinate withdrawal timing, but it typically generates an all-or-nothing run and does not link the size of the run to an evolving distribution of withdrawal incentives. Why, then, do gradual and heterogeneous withdrawal incentives aggregate into a discontinuous cluster? What determines which depositors join, when they withdraw, and how large the cluster becomes? How do aggregate conditions coordinate this process?

We answer these questions using a dynamic mean-field game of bank runs. A continuum of depositors choose when to withdraw, and the bank's failure intensity increases with the withdrawn share. Heterogeneous private states build \emph{latent fragility}: run-prone depositors accumulate inside the bank while waiting individually. Strategic complementarity then converts this latent fragility into a clustered withdrawal, because an increase in withdrawals raises failure risk and makes further withdrawals optimal. Finally, a common state coordinates when the accumulated run-prone depositors withdraw. The mean-field framework allows us to study this sequence while retaining a rich distribution of depositor states.

We first establish equilibrium existence and ordering in a non-stationary environment. A larger withdrawn share induces remaining depositors to withdraw earlier. This monotonicity leads to an earliest-run equilibrium and a latest-run equilibrium, with all other equilibria sandwitched between them. It also implies that stronger strategic complementarity leads to earlier withdrawals and a larger withdrawn share. 

We next characterize clustered withdrawals with both discrete and continuous depositor heterogeneity. In a tractable two-type model, depositors become impatient after gradual idiosyncratic shocks but may continue to wait. Once enough impatient depositors have accumulated, their collective withdrawal raises failure risk sufficiently to make the cluster self-fulfilling. We characterize the timing and size of this cluster in closed form. With continuous private states, a distribution of run-prone depositors instead accumulates near an endogenous withdrawal boundary. Strategic complementarity can make this boundary jump, causing a positive but partial mass of depositors to withdraw together. Thus, clustering does not rely on a discrete patient-impatient distinction or a mass point in depositor types.

We then introduce a common state that coordinates withdrawal timing and can generate multiple clusters. As shown by \cite{FrankelPauzner2000}, when depositors receive idiosyncratic withdrawal opportunities at a finite rate, the common state selects a unique threshold equilibrium. When immediate withdrawal is feasible in this frictionless setting, however, their original uniqueness argument does not apply. One contribution of our paper is to further establish uniqueness when depositors can instead withdraw at arbitrary stopping times by using a strong maximum principle. Each time the common state crosses the equilibrium boundary, the run-prone depositors accumulated since the previous crossing may withdraw together, generating multiple clusters.

The model delivers several empirical predictions. Smooth depositor-level shocks can generate discrete and partial jumps in aggregate withdrawals. Holding other conditions fixed, later clusters should typically be larger because more run-prone depositors have had time to accumulate. Cluster participants should be drawn disproportionately from depositors whose characteristics or recent shocks place them near the withdrawal margin, with the relevant characteristics depending on the source of stress. Stronger strategic complementarity brings a cluster forward but need not make it larger, because cluster size depends on how many run-prone depositors have accumulated when coordination occurs. Finally, the same aggregate deterioration should generate a larger response when latent fragility is already high, implying an interaction between depositor-level withdrawal propensities and bank-level balance-sheet exposure.

\medskip

The paper contributes to several literatures. First, it contributes to the theoretical literature on bank runs. The classic model of \citet{diamond1983bank} shows how demand deposits can create self-fulfilling runs, while \citet{goldstein2005demand} and \citet{rochet2004coordination} study global-game and coordination perspectives on run risk. Our contribution is to make depositor heterogeneity and withdrawal timing central. Rather than asking whether a run occurs at a point in time, we study how latent fragility accumulates and how the size and timing of clustered withdrawals are determined endogenously.

Second, the paper relates to dynamic models of runs and synchronized exit. \citet{abreu2003bubbles} show how agents may delay exit before a coordinated crash, and dynamic run models such as \citet{he2012dynamic}, \citet{he2016information}, and \citet{zhong2026dynamic} also feature waiting before a run. These papers emphasize the dynamic nature of runs and the role of information, rollover timing, or regulation. We share the focus on dynamics, but our mechanism is different. In our model, rich depositor heterogeneity creates a stock of run-prone depositors, and strategic complementarity determines when that stock exits as a cluster. This allows us to predict both the timing and the size of clustered withdrawals.

The paper also relates to dynamic models of strategic complementarities outside banking. \citet{alvarez2026p2p} study the adoption of peer-to-peer digital payments, where the benefit of adoption increases with the share of adopters. Their model generates gradual adoption, and their quantitative exercise shows that no jumps (similar to our withdrawal waves) occur in the aggregate adoption path. Moreover, much of their analysis starts from the stationary distribution of a reflected Brownian motion, whereas in our model, latent fragility builds up endogenously. In the two-type Poisson case, we prove analytically that discontinuous equilibria exist; when there are exogenous withdrawals, all equilibria are discontinuous. In the continuous-state case, we provide numerical examples in which strategic complementarity generates a jump in the withdrawn share. 

Third, the paper is related to empirical work on depositor behavior and bank fragility. \citet{iyer2012understanding} use depositor-level data to show that insurance, relationships, and social networks affect withdrawal decisions. \citet{egan2017deposit} show that uninsured deposit demand is sensitive to bank distress. \citet{jiang2024monetary} emphasize the interaction between uninsured deposits and balance-sheet losses in the 2023 banking stress. Our model provides a theoretical interpretation of these findings: such variables shift the distribution of depositor withdrawal incentives and affect how much latent fragility is present before a run.

Finally, the paper contributes methodologically by applying mean-field games to optimal stopping problems. Mean-field methods have been widely used in heterogeneous-agent macroeconomics and related continuous-time models, as in \citet{achdou2022income}, and have recently been used to study macroeconomic strategic complementarities \citep{alvarez2024price}. Stationary equilibria with entry and exit decisions are studied by \cite{miao2005optimal} and \cite{luttmer2007selection} to understand the size distribution of firms and their capital structure choices. Mean-field methods have also been applied to study the optimal timing problems. \cite{carmona2017mean} study a bank run model in a mean-field game setting, where the bank fails when its withdrawn share exceeds its liquidation value. Under a strategic complementarity condition, they establish the existence of mean-field equilibria. \cite{nutz2018mean} models the bank failure as the first jump time of a Cox process, whose intensity depends on a common, a private component, and the withdrawn share. Using the similar intensity-based approach to model bank failure risk, we obtain an easy to check condition for the strategic complementarity, which allows us to cast the bank run problem as a supermodular mean-field game studied by \cite{dianetti2021submodular, dianetti2023unifying}. Weak formulation of the mean-field games with optimal stopping, where agents randomize their stopping time, has been studied by \cite{carmona2017mean}, \cite{bouveret2020mean}, \cite{dumitrescu2021control}, and has been used to model energy transition by \cite{aid2021entry} and \cite{dumitrescu2024energy}. Contributing to this literature, we highlight the discontinuity of the mean-field dynamics in our bank run model and establish the uniqueness of threshold equilibrium when a common state is present. 

The rest of the paper is as follows. Section \ref{Section: model setup} presents the model, and Section \ref{Section: equilibrium existence} establishes equilibrium existence and ordering. Section \ref{Section: clustered withdrawals} develops the clustering mechanism and studies equilibrium selection through a common state. Section \ref{Section: robustness empirical relevance} extends the mechanism to continuous private states and discusses empirical implications. Section \ref{Section: conclusion} concludes with policy implications.

\section{Model}
\label{Section: model setup}

\subsection{Depositors, Bank, and Payoffs}

The model is in continuous time. A continuum of depositors $i \in [0,1]$ each begin with an endowment of \$1 at time $t=0$. All depositors are risk-neutral and deposit their endowments in a bank. At time $t$, depositor $i$ discounts future payoffs according to
\begin{equation}
\label{def:beta_i}
\beta^i_t = \beta(Z_t, X^i_t).
\end{equation}
The common state $Z_t$ is publicly observable and represents macroeconomic conditions affecting all depositors symmetrically. The private state $X^i_t$ represents depositor-specific conditions. We assume the private states $\{X^i_t\}_{i \in [0,1]}$ are i.i.d. across depositors and independent of $Z_t$.

The bank operates mechanically without making strategic decisions. At time $t=0$, it collects all depositor endowments and invests them in a long-term project. The bank pays interest at rate $r > 0$ to each depositor, which is consumed immediately rather than reinvested. Depositors also receive a convenience yield at rate $\delta \geq 0$ on their deposited funds, representing non-pecuniary benefits such as payment services.

Depositors are allowed to withdraw their funds at any time. We distinguish between two types of withdrawal: exogenous and endogenous. Exogenous withdrawal occurs at an idiosyncratic random time $\tau_{\nu}^i$ arriving at rate $\nu \geq 0$, when depositor $i$ experiences a liquidity shock and must withdraw the entire deposit for consumption. These shocks are independent across depositors. Depositor $i$ may also strategically withdraw at an endogenous time $\tau_w^i$, reflecting both changes in her discount rate that reduce patience and a coordination motive: the decision depends on beliefs about other depositors' withdrawal behavior, which affects the likelihood of bank failure.

Let
\begin{equation}
\label{def:m_int}
m_t = \int_0^1 \mathbbm{1}_{\{\tau_w^i \wedge \tau_{\nu}^i \leq t\}} \, di
\end{equation}
denote the proportion of depositors who have withdrawn by time $t$, which we refer to as the \emph{withdrawn share}.\footnote{The results of this paper do not depend on whether $m_t$ is unobservable. There is no hidden information or learning, so individual depositors can infer $m_t$ on the equilibrium path.} As more depositors withdraw, the bank faces greater difficulty meeting its obligations, raising the probability of failure. Specifically, bank failure occurs at a random time $t_{\eta}$ with instantaneous arrival rate $\eta m_t$, where $\eta \geq 0$ measures the sensitivity of failure risk to funding shortfalls. When the bank fails, remaining depositors receive a recovery value $\gamma \in [0,1)$, implying a loss of $1-\gamma$ per dollar deposited. While the bank remains solvent, any withdrawing depositor receives the full principal of \$1.

For a given withdrawn share, a representative depositor's optimization problem is
\begin{equation}
\label{pro:dep}
\sup_{\tau_w} \mathbb{E} \left[\int_0^{\tau_\nu \wedge \tau_w \wedge t_\eta} e^{- \int_0^t \beta_u \, du} (r+\delta) \, dt + e^{-\int_0^{\tau_\nu \wedge \tau_w} \beta_u \, du} \mathbbm{1}_{\{\tau_\nu \wedge \tau_w < t_\eta\}} + e^{-\int_0^{t_\eta} \beta_u \, du} \gamma \, \mathbbm{1}_{\{t_\eta < \tau_\nu \wedge \tau_w \}}\right],
\end{equation}
where $\beta_u = \beta(Z_u, X_u)$ and the superscript $i$ on $X_u$ is omitted to simplify notation. Before withdrawal or bank failure, the depositor receives a flow payoff of $r+\delta$ per unit time. The depositor receives her principal of \$1 upon withdrawal if this occurs before bank failure; otherwise, she receives only the recovery value $\gamma<1$.

\subsection{Information Structure}

Each depositor $i$ observes three sources of information when deciding whether to withdraw: the common state $Z_t$, her private state $X^i_t$, and the withdrawn share $m_t$. Importantly, results from the law of exact large numbers imply that private states are idiosyncratic and average out across the continuum of depositors with distributional symmetry.\footnote{See \cite{sun2006exact} and \cite{sun2009individual} for the exact law of large numbers among a continuum of essentially pairwise i.i.d. random variables. In particular, these papers establish the exact law of large numbers on an extended probability space. See \cite[Section 3]{nutz2018mean} for a summary and applications to mean-field games of optimal stopping.} This implies that the withdrawn share $m_t$ in \eqref{def:m_int} depends only on the common state $Z_t$:
\begin{equation}
\label{def:m_prob}
m_t = \mathbb{P}\left(\tau_w \wedge \tau_{\nu} \leq t \,\big|\, \mathcal{F}^Z_t\right),
\end{equation}
where $\tau_{\nu}$ and $\tau_w$ are the exogenous and endogenous withdrawal times for a representative depositor, and $\{\mathcal{F}^Z_t\}_{t\geq 0}$ denotes the filtration generated by $Z$.\footnote{When there is no common state, \eqref{def:m_prob} becomes $m_t = \mathbb{P}(\tau_w \wedge \tau_{\nu}\leq t)$, so $m$ is deterministic.}

\subsection{Mean-Field Equilibrium}

The optimal withdrawal decision of each depositor depends on the likelihood of bank failure, which is determined by the withdrawn share $m_t$. However, $m_t$ itself emerges from the aggregation of individual withdrawal decisions as in \eqref{def:m_int}. This creates a fixed-point problem. The mean-field game framework of \cite{lasry2007mean} and \cite{caines_large_2006} provides a tractable approach for analyzing such a model. Specifically, mean-field games characterize equilibrium through a fixed-point problem between a representative depositor and the aggregate withdrawn share, rather than by tracking individual strategies. We now define the equilibrium notion in our model.

\begin{definition}[Mean-Field Equilibrium]
\label{def:mfe}
A mean-field equilibrium is a pair $(\tau_w, m)$ such that:
\begin{enumerate}
\item\label{Cond:individual} Given the withdrawn share process $(m_t)_{t\geq 0}$, the withdrawal time $\tau_w$ solves the optimization problem \eqref{pro:dep} for a representative depositor.
\item\label{Cond:consistent} Given the optimal withdrawal strategy $\tau_w$, the withdrawn share $m$ is consistent with \eqref{def:m_prob}.
\end{enumerate}
\end{definition}

Condition \ref{Cond:individual} requires that each depositor optimally responds to aggregate withdrawal behavior, accounting for her liquidity needs and bank failure risk. Condition \ref{Cond:consistent} ensures aggregate behavior is consistent with individual optimal strategies.

\subsection{Discussion of Assumptions}

\paragraph{Deposit contract.} The depositor's payoff structure features two-sided strategic complementarity \citep{rochet2004coordination}: a depositor's withdrawal incentive always increases when more depositors have withdrawn. This contrasts with standard deposit contracts featuring only one-sided complementarity \citep{goldstein2005demand}. Our approach simplifies the problem and allows us to focus on dynamic aspects of bank runs. Hence, our setup constitutes a fully dynamic regime-shift game where regime change, namely bank failure, occurs endogenously based on aggregate behavior.

\paragraph{Heterogeneity.} While we introduce heterogeneity through time-varying discount rates, the model can be equivalently interpreted with alternative sources of heterogeneity in other factors, such as convenience yields, expected returns from the bank's investment, or returns from safe storage, as long as these factors drive the marginal withdrawal decision.\footnote{For instance, suppose all depositors share a common discount rate $\beta$ but experience heterogeneous convenience yields
\begin{equation}
\label{def:delta_heterogeneous}
\delta^i_t = \delta(Z_t, X^i_t),
\end{equation}
for a function $\delta(\cdot, \cdot)$. Alternatively, consider a bank investing in a long-term project maturing at random time $t_{\lambda}$ with arrival rate $\lambda$. When the project matures before the bank fails, it pays out to remaining depositors. Depositors have heterogeneous beliefs about the expected payoff:
\begin{equation}
\label{def:R_heterogeneous}
R^i_t = R(Z_t, X^i_t),
\end{equation}
for a function $R(\cdot, \cdot)$.} These specifications lead to equivalent mathematical structures.

\paragraph{Bank failure probability.} We assume the instantaneous bank failure rate is proportional to $m_t$, the withdrawn share. Our choice reflects the view that bank solvency depends on the total funding shortfall. When a larger proportion of depositors has exited, the bank holds fewer liquid assets to meet obligations, is more vulnerable to liquidity shocks, and may be forced to liquidate long-term investments at fire-sale prices. This specification makes the representative depositor's best response monotone in $m$: a higher withdrawn share lowers the value of waiting for all remaining depositors. This monotonicity is the key property used below to order equilibria and apply the fixed-point argument.

\section{Equilibrium Existence and Characterization}
\label{Section: equilibrium existence}
In this section, we establish the existence of mean-field equilibria and characterize their structure. We show that there exists an earliest-run equilibrium and a latest-run equilibrium, and that all other equilibria, if they exist, are sandwiched between these two extremes. We also show that stronger strategic complementarity, captured by a higher $\eta$, leads to earlier withdrawals and a larger withdrawn share.

\subsection{The Depositor's Optimal Withdrawal Problem}

To characterize mean-field equilibria, we first examine the representative depositor's optimal stopping problem \eqref{pro:dep}. For a given withdrawn share $(m_t)_{t\geq 0}$, we define the value function for \eqref{pro:dep} as
\begin{equation}\label{def:dep_V}
\begin{split}
V(t, Z_t, X_t; m):=  \sup_{\tau_w} \mathbb{E}_t \Big[&\int_t^{\tau_\nu \wedge \tau_w \wedge t_\eta} e^{- \int_t^s \beta_u \, du} (r+\delta) \,ds + e^{-\int_t^{\tau_\nu \wedge \tau_w} \beta_u \, du} \mathbbm{1}_{\{\tau_\nu \wedge \tau_w < t_\eta\}} \\
&+ e^{-\int_t^{t_\eta} \beta_u \, du} \gamma \, \mathbbm{1}_{\{t_\eta < \tau_\nu \wedge \tau_w \}}\Big],
\end{split}
\end{equation}
where $\mathbb{E}_t[\cdot]$ denotes the conditional expectation $\mathbb{E}\big[\cdot \,|\, t< \tau_{\nu} \wedge t_{\eta}\big]$. The following result presents an equivalent formulation that makes the depositor's tradeoff more transparent.

\begin{lem}
\label{lem:obj}
For a given withdrawn share $(m_t)_{t\geq 0}$,
 \begin{align}
 V(t, Z_t, X_t; m) = & \, 1 + \sup_{\tau_w} J (t, Z_t, X_t; m, \tau_w), \quad \text{where}\label{pro:os}\\
  J (t, Z_t, X_t; m, \tau_w):=&\, \mathbb{E}_t \Big[\int_t^{\tau_w} e^{-\int_t^s (\nu + \beta_u + \eta m_u) \, du} \Big(r+ \delta - \beta_s - (1-\gamma) \eta m_s \Big) \, ds\Big].\label{def:hat_J}
 \end{align}
\end{lem}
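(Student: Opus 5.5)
The plan is to integrate out the two exogenous Poisson clocks, $\tau_\nu$ (rate $\nu$) and $t_\eta$ (intensity $\eta m_t$), and then subtract the payoff from immediate withdrawal using an integration-by-parts identity. Fix $t$ and a stopping time $\tau_w \ge t$ adapted to the filtration of $(Z,X)$. Conditional on $\mathcal{F}^{Z,X}$ and on $\{t<\tau_\nu\wedge t_\eta\}$, the clocks are independent of each other and of the state paths. Hence the survival probability of both clocks on $[t,s]$ is $e^{-\int_t^s(\nu+\eta m_u)\,du}$. Since $m$ is $\mathcal{F}^Z$-adapted, this remains valid in the presence of a common state, because we condition on the whole path.

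Write $\Lambda_s:=e^{-\int_t^s(\nu+\beta_u+\eta m_u)\,du}$. Taking conditional expectations over the clocks, the three terms in \eqref{def:dep_V} become, respectively:
\begin{align*}
&\int_t^{\tau_w}\Lambda_s (r+\delta)\,ds, \qquad
\int_t^{\tau_w}\Lambda_s\,\nu\,ds+\Lambda_{\tau_w}, \qquad
\int_t^{\tau_w}\Lambda_s\,\gamma\eta m_s\,ds.
\end{align*}
The flow term arises because the payoff accrues only while neither clock has rung. The middle expression collects withdrawal before failure: exogenous withdrawal occurs at density $\nu\Lambda_s$ with payoff $1$, and strategic withdrawal at $\tau_w$ occurs if neither clock has rung, giving $\Lambda_{\tau_w}$. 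The last expression is failure first, at density $\eta m_s\Lambda_s$ with payoff $\gamma$. Ties have probability zero. On $\{\tau_w=\infty\}$ the term $\Lambda_{\tau_w}$ is read as $0$, which requires $\Lambda_s\to0$. This holds if $\beta$ is bounded below by a positive constant (or $\nu>0$), and I would state this as a standing integrability assumption.

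Next, use the identity
\begin{equation*}
\Lambda_{\tau_w}=1-\int_t^{\tau_w}\Lambda_s(\nu+\beta_s+\eta m_s)\,ds,
\end{equation*}
which is the fundamental theorem of calculus applied pathwise, since $s\mapsto\Lambda_s$ is absolutely continuous. Substituting and collecting terms, the $\nu$ terms cancel and $\gamma\eta m_s-\eta m_s=-(1-\gamma)\eta m_s$. The expected payoff therefore equals $1+J(t,Z_t,X_t;m,\tau_w)$ with $J$ as in \eqref{def:hat_J}. Taking the supremum over $\tau_w$ gives \eqref{pro:os}.

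The main obstacle is rigor in the first step rather than the algebra. One must justify the conditional independence that lets the clocks be integrated out with $m$ random, and the exchange of expectation and integral via Fubini or dominated convergence. The latter needs $r+\delta$, $m\le1$ and $\beta$ to keep $\int_t^\infty\Lambda_s(1+\beta_s)\,ds$ integrable. I would handle this by first proving the formula for bounded $\tau_w\wedge T$, where everything is bounded, and then letting $T\to\infty$ by monotone/dominated convergence, using $\Lambda_{T}\to0$.
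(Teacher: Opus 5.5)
Your proof is correct and matches the paper's argument: the paper also integrates out $\tau_\nu$ and $t_\eta$ using the conditional survival probability $e^{-\int_t^s(\nu+\eta m_u)\,du}$. It splits your middle term into an exogenous-withdrawal part (II) and a strategic-withdrawal part (III), rewrites $\mathbb{E}_t\big[e^{-\int_t^{\tau_w}(\nu+\beta_u+\eta m_u)\,du}\big]$ with the same fundamental-theorem-of-calculus identity, and then collects terms. Your extra remarks address technical points the paper passes over silently: ties, reading the payoff as $0$ on $\{\tau_w=\infty\}$, and localizing with $\tau_w\wedge T$.
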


Lemma \ref{lem:obj} separates the value of immediate withdrawal, $1$, from $J$, the net benefit from staying in the bank. This net benefit weighs the interest and convenience yield from keeping funds deposited against two forces that make waiting costly: the depositor's own impatience and the expected loss if the bank fails. A higher withdrawn share lowers this net benefit because it raises the bank's failure intensity. Thus, withdrawals are strategic complements: when more depositors have already exited, remaining depositors have a stronger reason to exit as well.

For a given withdrawn share, the optimal withdrawal time $\tau_w$ for \eqref{pro:dep} may not be unique. We focus on the earliest and latest optimal stopping times, denoted by $\underline{\tau}_w$ and $\overline{\tau}_w$.\footnote{For the current generality, the optimal stopping time may not be unique. Therefore, we focus on the extreme stopping times in Lemma \ref{lemma monotonicity of brm}. These extreme stopping times are characterized using a representation theorem by \cite{bank2004stochastic}, which provides a convenient representation for the latest optimal stopping time.} The next lemma says that both move earlier when the withdrawn share is higher.

\begin{lem}\label{lemma monotonicity of brm}
Suppose that $\beta$ is bounded and
\begin{equation}
    \label{ass:Rmin}
   r+\delta + (1-\gamma) \nu -\gamma \beta_{\max} \geq 0, \quad \text{where} \quad \beta_{\max} := \sup_{z, x\in \mathbb{R}} \beta(z, x).
\end{equation}
Then the minimal $\underline{\tau}_w$ and maximal $\overline{\tau}_w$ optimal stopping times are nonincreasing with respect to the withdrawn share $m$. That is, if $\bar{m}_t \leq m_t$ for all $t \geq 0$, then
$\underline{\tau}_w(m) \leq \underline{\tau}_w(\bar{m})$ and $\overline{\tau}_w(m) \leq \overline{\tau}_w(\bar{m})$.
\end{lem}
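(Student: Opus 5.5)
The plan is to recast the comparison as a lattice (Topkis-type) argument on stopping times. The one-step increments of the objective $J$ of Lemma \ref{lem:obj} will be shown to be pathwise nonincreasing in $m$; condition \eqref{ass:Rmin} is what makes this true.

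\textbf{Step 1: rewrite the running payoff.} Write $D_{t,s}(m):=e^{-\int_t^s(\nu+\beta_u+\eta m_u)du}$. The running payoff in \eqref{def:hat_J} decomposes as
\begin{equation*}
r+\delta-\beta_s-(1-\gamma)\eta m_s=-(1-\gamma)(\nu+\beta_s+\eta m_s)+c_s,\qquad c_s:=r+\delta+(1-\gamma)\nu-\gamma\beta_s .
\end{equation*}
By \eqref{ass:Rmin}, $c_s\ge 0$. Since $\partial_s D_{t,s}=-(\nu+\beta_s+\eta m_s)D_{t,s}$, the first term integrates explicitly. For stopping times $\sigma\le\tau$ this gives
\begin{equation*}
\int_\sigma^\tau D_{\sigma,s}(m)\big(r+\delta-\beta_s-(1-\gamma)\eta m_s\big)ds=(1-\gamma)\big(D_{\sigma,\tau}(m)-1\big)+\int_\sigma^\tau D_{\sigma,s}(m)c_s\,ds=:\Delta_{\sigma,\tau}(m).
\end{equation*}
Both terms are pathwise nonincreasing in $m$: the discount factor decreases when $m$ increases, $1-\gamma>0$, and $c\ge0$. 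Hence if $\bar m\le m$, then $\Delta_{\sigma,\tau}(m)\le\Delta_{\sigma,\tau}(\bar m)$ pathwise. This is the only place where \eqref{ass:Rmin} and boundedness of $\beta$ enter (boundedness guarantees integrability).

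\textbf{Step 2: conditional value of waiting and the minimal stopping time.} Define the Snell-type process
\begin{equation*}
U_t(m):=\operatorname*{ess\,sup}_{\tau\ge t}\mathbb{E}\big[\Delta_{t,\tau}(m)\,\big|\,\mathcal F_t\big]\ge0 .
\end{equation*}
Step 1 gives $U_t(m)\le U_t(\bar m)$ for all $t$. By Lemma \ref{lem:obj} and standard optimal stopping theory, the minimal optimal time is $\underline\tau_w(m)=\inf\{t:U_t(m)=0\}$. Since $U_t(\bar m)=0$ forces $U_t(m)=0$, we obtain $\underline\tau_w(m)\le\underline\tau_w(\bar m)$.

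\textbf{Step 3: the maximal stopping time.} Here I would argue by lattice exchange. Let $\tau=\overline\tau_w(m)$ and $\bar\tau=\overline\tau_w(\bar m)$. On $A:=\{\bar\tau<\tau\}$, optimality of $\tau$ for $m$ implies that continuing from $\bar\tau$ to $\tau$ is weakly profitable under $m$. Conditionally on $\mathcal F_{\bar\tau}$, this means $\mathbb{E}[\Delta_{\bar\tau,\tau}(m)\mid\mathcal F_{\bar\tau}]\ge0$ on $A$; otherwise stopping at $\tau\wedge\bar\tau$ would strictly improve $J(m,\cdot)$, because the outer factor $D_{t,\bar\tau}(m)$ is strictly positive. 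Step 1 then gives $\mathbb{E}[\Delta_{\bar\tau,\tau}(\bar m)\mid\mathcal F_{\bar\tau}]\ge0$ on $A$. Consequently $\tau\vee\bar\tau$ is optimal for $\bar m$ and is $\ge\bar\tau$. By maximality of $\bar\tau$ we get $\tau\vee\bar\tau=\bar\tau$, that is, $\tau\le\bar\tau$.

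\textbf{Main obstacle.} I expect Step 3 to be the hardest. The outer discount $D_{t,\sigma}(m)$ also depends on $m$, so the unconditional increments $J(m,\tau\vee\sigma)-J(m,\sigma)$ do not have increasing differences in a naive sense. The argument must therefore be carried out conditionally at $\bar\tau$, where only the sign of the increment matters. A further requirement is that the maximal optimal stopping time exists and can be characterized. For this I would rely on the representation of \cite{bank2004stochastic}, as indicated in the footnote. That representation writes $\overline\tau_w$ as the first time an associated index process falls below the relevant level, and Step 1 makes this index monotone in $m$. This yields an alternative proof of the maximal-time comparison, should the exchange argument need more regularity than the setting allows.
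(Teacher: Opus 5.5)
Your proof is correct, but it takes a different route from the paper.

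\textbf{Common starting point.} Both arguments rest on the same decomposition of the running payoff into $-(1-\gamma)(\nu+\beta_s+\eta m_s)$ plus the nonnegative term $c_s=r+\delta+(1-\gamma)\nu-\gamma\beta_s$. This is \eqref{decomp:1+J}, and it is the only place where \eqref{ass:Rmin} enters.

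\textbf{The paper's route.} The paper feeds this decomposition into the Bank--El Karoui representation. It constructs an index process $\xi^{(m)}$ as an essential infimum of ratios $\ell_{S,\zeta}$. Each $\ell_{S,\zeta}$ equals $(1-\gamma)$ minus a ratio whose numerator decreases in $m$ and whose denominator increases in $m$. Hence $\xi^{(m)}\ge\xi^{(\bar m)}$. Both extremal times are then read off as the level passage times $\inf\{\xi\ge 0\}$ and $\inf\{\xi>0\}$.

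\textbf{Your route.} You bypass the representation entirely. The Snell-envelope comparison $0\le U_t(m)\le U_t(\bar m)$ handles the minimal time. The exchange argument at $\bar\tau$ handles the maximal time. Because that argument uses only the sign of the conditional increment, it is unaffected by the $m$-dependence of the outer discount factor.

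\textbf{Comparison.} Your route is more elementary. It needs only standard optimal stopping theory plus the existence of a maximal optimal time, which the statement presupposes. It also proves more. If $\tau$ is optimal for $m$ and $\bar\tau$ is optimal for $\bar m$, then $\tau\vee\bar\tau$ is optimal for $\bar m$. By the symmetric argument, $\tau\wedge\bar\tau$ is optimal for $m$. Together these give a Topkis-type strong-set-order monotonicity of the whole argmax, which is the natural input for the lattice fixed-point argument in Proposition \ref{theorem existence minimal maximal MFGE}. The paper's route buys something different: a single monotone index that delivers existence and characterization of both extremal times at once, and that connects to the free-boundary picture.

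\textbf{One small repair in your Step 3.} The improving deviation should stop at $\bar\tau$ only on $B:=A\cap\{\mathbb{E}[\Delta_{\bar\tau,\tau}(m)\mid\mathcal F_{\bar\tau}]<0\}\in\mathcal F_{\bar\tau}$, and at $\tau$ elsewhere. Stopping at $\tau\wedge\bar\tau$ on all of $A$ does not work as stated, because it need not improve the payoff when the conditional increment changes sign within $A$.
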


The economic meaning of condition \eqref{ass:Rmin} becomes clearer if we split the promised repayment of $1$ into two components. The amount $\gamma$ is received whether the deposit ends in withdrawal or bank failure. The remaining amount, $1-\gamma$, is received only if withdrawal occurs before failure and therefore captures the value of obtaining full repayment. The net value of waiting in \eqref{def:hat_J} can be written as
\begin{equation}\label{decomp:1+J}
\begin{split}
&J(t, Z_t, X_t; m, \tau_w)=\\
& \underbrace{\mathbb{E}_{t}\Big[\int_{t}^{\tau_{w}}e^{-\int_{t}^{s}(\nu+\beta_{u}+\eta m_{u})du}\big(r+\delta+(1-\gamma)\nu-\gamma\beta_s\big)ds\Big]}_{\text{Net payoff while waiting}}
-\underbrace{(1-\gamma)\Big\{1-\mathbb{E}_{t}\Big[e^{-\int_{t}^{\tau_{w}}(\nu+\beta_{u}+\eta m_{u})du}\Big]\Big\}}_{\text{Value lost by postponing strategic withdrawal}}.
\end{split}
\end{equation}
The first term collects the net payoff received while the depositor waits. She receives the interest and convenience yield $r+\delta$. Because the recovery amount $\gamma$ is paid under either withdrawal or failure, delaying its receipt imposes a discounting cost $\gamma\beta_s$. An exogenous liquidity shock instead leads to full repayment and realizes the additional amount $1-\gamma$; because such shocks arrive at rate $\nu$, they contribute $(1-\gamma)\nu$ to the payoff from waiting.

The second term concerns the timing of strategic withdrawal. Withdrawing immediately secures the additional amount $1-\gamma$ now. If the depositor plans to withdraw at $\tau_w$, she receives this amount at that time only if neither an exogenous withdrawal nor bank failure has already occurred. Its expected discounted value is therefore $(1-\gamma)\mathbb{E}_t[e^{-\int_t^{\tau_w}(\nu+\beta_u+\eta m_u)du}]$. Condition \eqref{ass:Rmin} ensures that the net payoff in the first term is nonnegative in every state. We impose \eqref{ass:Rmin} throughout the rest of the paper to focus on bank-failure risk and coordination motives rather than withdrawals driven only by high impatience.

\subsection{Equilibrium Existence and Comparative Statics}

The monotonicity result in Lemma \ref{lemma monotonicity of brm} helps deliver equilibrium existence. If depositors expect a larger withdrawn share, they withdraw earlier; earlier withdrawals, in turn, generate a larger withdrawn share. This feedback admits fixed points, and the monotone structure lets us order them.

\begin{prop}
    \label{theorem existence minimal maximal MFGE}
    Suppose that \eqref{ass:Rmin} holds.
    There exist an earliest-run equilibrium $(\underline \tau_w, \underline m) $ and a latest-run equilibrium $(\overline \tau_w, \overline m) $ such that, for any other equilibrium $( \hat \tau_w,  \hat m)$,
    $$
   \underline \tau_w \leq \hat \tau_w \leq  \overline \tau_w \quad \text{and} \quad
   \underline m_{t} \geq \hat m _t \geq \overline m _{t}, \quad \text { for any } t\geq 0.
    $$
\end{prop}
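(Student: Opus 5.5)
We use Tarski's fixed point theorem on a complete lattice of withdrawn-share processes, with Lemma \ref{lemma monotonicity of brm} supplying monotonicity. Let $\mathcal{M}$ be the set of $\mathbb{F}^Z$-adapted processes $m$ that are nondecreasing, right-continuous in $t$ and $[0,1]$-valued, ordered pointwise: $m \preceq m'$ iff $m_t \le m'_t$ for all $t$, a.s. The definition of $m$ in \eqref{def:m_prob} forces every equilibrium share to lie in $\mathcal{M}$.

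\textbf{Step 1: $\mathcal{M}$ is a complete lattice.} For a family $\{m^\alpha\}\subset\mathcal{M}$, take the pointwise (essential) supremum. Then take its right-continuous regularization $t\mapsto \lim_{s\downarrow t}$. The result is again nondecreasing and $[0,1]$-valued, it is adapted because the filtration is right-continuous, and it is the least upper bound in $\mathcal{M}$. Infima are handled the same way.

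\textbf{Step 2: define two maps.} For each $m\in\mathcal{M}$, let
\[
\underline{\Phi}(m)_t = \mathbb{P}\big(\underline\tau_w(m)\wedge\tau_\nu\le t \,\big|\, \mathcal{F}^Z_t\big),
\]
and define $\overline\Phi$ analogously with $\overline\tau_w(m)$. Both maps send $\mathcal{M}$ into $\mathcal{M}$, since distribution functions of stopping times are nondecreasing and right-continuous. We use that $\tau_\nu$ is independent of everything else and that the private states are independent of $Z$.

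\textbf{Step 3: monotonicity.} By Lemma \ref{lemma monotonicity of brm}, which applies under \eqref{ass:Rmin}, if $m\preceq m'$ then $\underline\tau_w(m')\le\underline\tau_w(m)$ pathwise. Hence $\{\underline\tau_w(m)\wedge\tau_\nu\le t\}\subseteq\{\underline\tau_w(m')\wedge\tau_\nu\le t\}$, which gives $\underline\Phi(m)\preceq\underline\Phi(m')$. The same argument shows $\overline\Phi$ is nondecreasing.

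\textbf{Step 4: extremal fixed points.} Tarski gives a greatest fixed point $\underline m$ of $\underline\Phi$ and a least fixed point $\overline m$ of $\overline\Phi$. Each is an equilibrium with stopping time $\underline\tau_w:=\underline\tau_w(\underline m)$, respectively $\overline\tau_w:=\overline\tau_w(\overline m)$.

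\textbf{Step 5: sandwich property.} Let $(\hat\tau_w,\hat m)$ be any equilibrium. Since $\hat\tau_w$ is optimal for $\hat m$,
\[
\underline\tau_w(\hat m)\le\hat\tau_w\le\overline\tau_w(\hat m),
\]
so $\overline\Phi(\hat m)\preceq \hat m\preceq \underline\Phi(\hat m)$.

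For the upper bound, $\hat m$ is a sub-fixed point of $\underline\Phi$. Tarski's characterization $\sup\{m: m\preceq\underline\Phi(m)\}$ of the greatest fixed point gives $\hat m\preceq\underline m$. Then monotonicity yields $\underline\tau_w(\underline m)\le\underline\tau_w(\hat m)\le\hat\tau_w$.

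For the lower bound, $\hat m$ is a super-fixed point of $\overline\Phi$, so $\overline m\preceq\hat m$. Then $\hat\tau_w\le\overline\tau_w(\hat m)\le\overline\tau_w(\overline m)$.

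This completes the ordering claimed in Proposition \ref{theorem existence minimal maximal MFGE}.

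\textbf{Main obstacle.} The delicate step is making the lattice and the maps rigorous in the presence of the common state $Z$. Specifically:
\begin{itemize}
\item The essential supremum over uncountable families must remain adapted and right-continuous. We plan to define $\mathcal{M}$ via the regularized version, or to work with versions $m_t=\mu(t,Z_{\cdot\wedge t})$ as functionals of the $Z$-path.
\item Lemma \ref{lemma monotonicity of brm} must be invoked pathwise in $Z$. This requires that the comparison $\bar m\le m$ need only hold almost surely, and that the minimal and maximal stopping times, obtained via the representation of \cite{bank2004stochastic}, are measurable in the appropriate sense.
\end{itemize}
If these technical points prove awkward, the fallback is to first treat the deterministic case with no common state. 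There $\mathcal{M}$ is simply the set of nondecreasing càdlàg functions $[0,\infty)\to[0,1]$, which is plainly a complete lattice. We would then extend by conditioning on $\mathcal{F}^Z$.
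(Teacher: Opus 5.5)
Your argument is correct, but it is organized differently from the paper's. The paper puts the lattice on stopping times $\mathcal{T}$ and defines the set-valued best response $\mathcal{R}(\tau)=\tau_w(m(\tau))$. It observes that $\tau\mapsto m(\tau)$ is order-reversing, so Lemma \ref{lemma monotonicity of brm} makes $\mathcal{R}$ order-preserving. It then applies Tarski to get smallest and largest fixed points and reads off the ordering of the shares from the antitone map $\tau\mapsto m(\tau)$. You instead put the lattice on share processes and replace the correspondence by its two extremal selections $\underline\Phi$ and $\overline\Phi$. You then obtain the sandwich from the Knaster--Tarski formulas for the greatest fixed point of $\underline\Phi$ and the least fixed point of $\overline\Phi$, using the observation $\overline\Phi(\hat m)\preceq\hat m\preceq\underline\Phi(\hat m)$ for every equilibrium. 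What your route buys is rigor at the one point the paper passes over quickly. Tarski's theorem concerns single-valued maps, so applying it to the multivalued $\mathcal{R}$ would need something like Zhou's extension (sublattice values, monotonicity in the strong set order). Lemma \ref{lemma monotonicity of brm} only delivers monotonicity of the minimal and maximal optimal stopping times, which is exactly what your two maps need. Your pre/post-fixed-point step also explicitly covers equilibria that use non-extremal optimal stopping times. Your maps are the fixed-point counterparts of the monotone schemes in Appendix \ref{app:m_schemes} (earliest optimal times iterated from $m\equiv 1$, latest from $m\equiv 0$). You thereby avoid the limit passage and continuity of $J$ used in Proposition \ref{prop:convergence}, but you lose the constructive, numerically implementable characterization. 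The technical overhead is comparable on both sides: regularized essential suprema in $\mathcal{M}$ for you, essential suprema and infima of stopping times for the paper. One caveat: the claim that $\underline\Phi$ and $\overline\Phi$ land in $\mathcal{M}$ does not follow just from monotonicity of distribution functions, because the conditioning $\sigma$-field $\mathcal{F}^Z_t$ grows with $t$. It uses the independence of the private noise and of $\tau_\nu$ from $Z$ to replace $\mathcal{F}^Z_t$ by $\mathcal{F}^Z_\infty$, an assumption the paper also uses silently.
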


We offer a heuristic proof for the existence of these equilibria by starting from an arbitrary withdrawal rule $\tau_{w}$ and letting $m_t(\tau_{w}) := \mathbb{P}(\tau_{\nu} \wedge \tau_{w} \leq t\,|\, \mathcal{F}^Z_t)$ be the implied withdrawn share. The best response map is
    \begin{equation*}
        \mathcal{R}(\tau_{w}) := \tau_w\big(m(\tau_{w})\big),
    \end{equation*}
where $m\mapsto \tau_w(m)$ maps an aggregate withdrawal path to the representative depositor's optimal withdrawal time. Fixed points of $\mathcal{R}$ are mean-field equilibria. Tarski's fixed point theorem then gives a smallest and largest fixed point in the natural ordering, which correspond to the earliest-run and latest-run equilibria. Proposition \ref{theorem existence minimal maximal MFGE} therefore gives an equilibrium interval. Runs cannot occur before the earliest-run equilibrium, and equilibria cannot be delayed beyond the latest-run equilibrium. The iterative scheme in Appendix \ref{app:m_schemes} constructs these two extremes and provides the numerical procedure used in the examples presented later.

The same argument also gives comparative statics in the strength of strategic complementarity, $\eta$. A larger $\eta$ means that each additional withdrawal has a stronger effect on the withdrawal incentives of the remaining depositors. As a result, stronger complementarity moves withdrawals earlier and raises the withdrawn share.

\begin{prop}\label{prop comparative statics strength parameter}
Let $(\underline{\tau}^{\eta}_w,\underline m^{\eta})$ and $(\overline{\tau}^{\eta}_w,\overline m^{\eta})$ denote the earliest-run and latest-run equilibria under $\eta$. If $\eta'\geq \eta$, then
\[
\underline{\tau}^{\eta'}_w \leq \underline{\tau}^{\eta}_w,
\quad
\overline{\tau}^{\eta'}_w \leq \overline{\tau}^{\eta}_w,
\]
and, for every $t\geq0$,
\[
\underline m^{\eta'}_t \geq \underline m^{\eta}_t,
\quad
\overline m^{\eta'}_t \geq \overline m^{\eta}_t.
\]
\end{prop}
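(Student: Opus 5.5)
The plan is to treat $\eta$ as an additional ordered parameter in the best-response map and use the parametric version of Tarski's theorem (Topkis / Milgrom--Roberts monotone comparative statics for fixed points). Write $\mathcal{R}^\eta(\tau_w):=\tau_w^\eta(m(\tau_w))$, where $\tau_w^\eta(m)$ denotes the minimal (resp.\ maximal) optimal stopping time for problem \eqref{pro:os} with failure intensity $\eta m$. The map $\tau_w\mapsto m(\tau_w)$ does not depend on $\eta$ and is order-reversing: earlier stopping gives a larger withdrawn share. By Lemma \ref{lemma monotonicity of brm}, $m\mapsto\tau_w^\eta(m)$ is also order-reversing, so $\mathcal{R}^\eta$ is order-preserving on the complete lattice of stopping times; this is what Proposition \ref{theorem existence minimal maximal MFGE} uses. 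It therefore suffices to show that $\mathcal{R}^{\eta'}(\tau)\le\mathcal{R}^{\eta}(\tau)$ for every $\tau$ whenever $\eta'\ge\eta$. The fixed-point comparison then follows.

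\textbf{Pointwise comparison.} Fix $m$ and compare $\tau_w^{\eta'}(m)$ with $\tau_w^{\eta}(m)$. In $J$, the parameters $\eta$ and $m$ enter only through the product $\eta m_t$. Hence problem \eqref{pro:os} with $(\eta', m)$ is identical to problem \eqref{pro:os} with $(\eta,\tilde m)$, where $\tilde m:=(\eta'/\eta)m\ge m$ (assume $\eta>0$ for now). The proof of Lemma \ref{lemma monotonicity of brm} only uses the ordering of the failure intensity path $\eta m_t$. Its argument relies on \eqref{ass:Rmin} and the decomposition \eqref{decomp:1+J}, in which a larger intensity lowers the net benefit of waiting with increasing differences in the stopping time. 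That argument never uses $\tilde m\le 1$, so it applies verbatim to the larger intensity $\eta'm_t\ge\eta m_t$. This gives $\tau_w^{\eta'}(m)\le\tau_w^{\eta}(m)$ for both the minimal and maximal stoppers. The degenerate case $\eta=0$ is handled the same way, comparing intensity $0$ with $\eta' m\ge0$. Taking $m=m(\tau)$ yields $\mathcal{R}^{\eta'}\le\mathcal{R}^{\eta}$ pointwise.

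\textbf{Ordering the extremal fixed points.} For the earliest-run equilibria, use the Tarski characterization $\underline\tau^{\eta}_w=\inf\{\tau:\mathcal{R}^\eta(\tau)\le\tau\}$. If $\mathcal{R}^\eta(\tau)\le\tau$, then $\mathcal{R}^{\eta'}(\tau)\le\mathcal{R}^\eta(\tau)\le\tau$. So the set for $\eta$ is contained in the set for $\eta'$, and $\underline\tau^{\eta'}_w\le\underline\tau^{\eta}_w$.

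For the latest-run equilibria, use $\overline\tau^{\eta'}_w=\sup\{\tau:\tau\le\mathcal{R}^{\eta'}(\tau)\}$. Since $\overline\tau^{\eta'}_w=\mathcal{R}^{\eta'}(\overline\tau^{\eta'}_w)\le\mathcal{R}^{\eta}(\overline\tau^{\eta'}_w)$, the point $\overline\tau^{\eta'}_w$ lies in the corresponding set for $\eta$. Hence $\overline\tau^{\eta'}_w\le\overline\tau^{\eta}_w$.

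Finally, the equilibrium shares are $m^\eta=m(\tau^\eta_w)$, and $m(\cdot)$ is order-reversing, so $\underline m^{\eta'}_t\ge\underline m^\eta_t$ and $\overline m^{\eta'}_t\ge\overline m^\eta_t$. Alternatively, one can run the iterative scheme of Appendix \ref{app:m_schemes} for both parameters from the same starting point and show by induction that the iterates stay ordered, then pass to the limit.

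\textbf{Main obstacle.} The delicate step is the pointwise comparison, namely that Lemma \ref{lemma monotonicity of brm} really depends only on the intensity path, including the maximal stopper characterized via \cite{bank2004stochastic}. The intensity $\eta' m$ can exceed the range $\eta\cdot[0,1]$ on which the lemma is formally stated. I would first re-check that the lemma's proof compares the obstacle/representation processes only through $\eta m_t$ and uses \eqref{ass:Rmin} without any bound on $m$. If that proof instead relies on $m\le1$, I would instead prove directly, by the same supermodularity argument applied to \eqref{decomp:1+J}, that $J$ has decreasing differences in (stopping time, intensity).
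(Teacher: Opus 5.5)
Your argument is correct, but it takes a different route from the paper's. The paper does not compare the extremal fixed points through Tarski's characterization. Instead, it runs the monotone schemes of Appendix \ref{app:m_schemes} for $\eta$ and $\eta'$ from the same initialization: $m^{(0)}\equiv 1$ for the earliest run and $m^{(0)}\equiv 0$ for the latest. Writing hats for the $\eta'$-iterates, it shows by induction that $\tau^{(n)}_w\ge\widehat\tau^{(n)}_w$ and $m^{(n)}\le\widehat m^{(n)}$ for all $n$, and then passes to the limit using Proposition \ref{prop:convergence}. You instead order the best-response selections pointwise and compare $\inf\{\tau:\mathcal R(\tau)\le\tau\}$ and $\sup\{\tau:\tau\le\mathcal R(\tau)\}$ directly.

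Both routes rely on the same input: Lemma \ref{lemma monotonicity of brm} read as a statement about the intensity path $\eta m_t$. The paper uses this tacitly when it compares $\eta m^{(0)}$ with $\eta' m^{(0)}$. Your check of this point is right. The lemma's proof sees $m$ only through $\eta m$ in \eqref{eq:rvell-equiv}; the mechanism is monotonicity of the index $\xi$, not an increasing-differences argument. Boundedness of $\eta' m$ is enough for $G$ to be bounded, so your supermodularity fallback is not needed.

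What each route buys: yours needs nothing beyond Tarski's theorem, which is already used for Proposition \ref{theorem existence minimal maximal MFGE}. It therefore avoids the limit passage in Proposition \ref{prop:convergence}, which relies on continuity of $J$ and, for the increasing scheme, on only $dt$-a.e.\ convergence of $m^{(n)}$. The paper's route instead ties the comparative statics to the constructive iteration used in the numerics.

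One step you should make explicit, because the paper's best response is set-valued. The least fixed point of the minimal selection $\underline{\mathcal R}$ and the greatest fixed point of the maximal selection $\overline{\mathcal R}$ are indeed the extremal equilibria. This holds because every equilibrium $\hat\tau$ satisfies $\underline{\mathcal R}(\hat\tau)\le\hat\tau\le\overline{\mathcal R}(\hat\tau)$.
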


\section{Clustered Withdrawals}
\label{Section: clustered withdrawals}

\subsection{A Two-Type Private State Model}
\label{Section: two type illustration}

We first illustrate the clustered-withdrawal mechanism in a simple environment with two private states and no common state. The depositor's discount rate is either $\beta_L$ or $\beta_H$, with $\beta_L<\beta_H$. We refer to depositors with discount rate $\beta_L$ as patient and depositors with discount rate $\beta_H$ as impatient. All depositors are patient at time $0$. A patient depositor becomes impatient at an idiosyncratic exponential time with rate $\theta_{LH}$. In this subsection, we focus on the case in which this transition is absorbing, so that the intensity, $\theta_{HL}$, of transition from impatient to patient is zero. State transitions are independent across depositors. We discuss the result under $\theta_{HL}>0$ at the end of this subsection.

The purpose of this example is to illustrate the economic force behind clustered withdrawals. We assume
\begin{equation}\label{ass:Dis_betaH}
r+\delta + (1-\gamma)\nu - \gamma \beta_H \geq 0
\end{equation}
so that \eqref{ass:Rmin} holds, and
\begin{equation}\label{ass:two_type_main}
\beta_{L}<r+\delta-(1-\gamma)\eta<\beta_{H}.
\end{equation}
The first part of the inequality in \eqref{ass:two_type_main} implies that patient depositors do not withdraw strategically even if all other depositors have already withdrawn. The second part implies that impatient depositors are run-prone: they may prefer to wait when few others have withdrawn but prefer to exit after sufficiently many withdrawals. When a depositor becomes impatient, she therefore does not need to withdraw immediately. Over time, the mass of impatient depositors grows in the background. This creates \emph{latent fragility}: a stock of run-prone depositors who are individually willing to wait as long as others wait, but who would all like to exit once enough others exit. Formally, if the withdrawn share is sufficiently high, the net benefit from staying, $r+\delta - \beta_H - (1-\gamma) \eta \, m_t$ in \eqref{def:hat_J}, becomes negative. Given that impatient depositors do not become patient again and $m_t$ is weakly increasing, the net benefit from staying does not return to positive once the withdrawn share exceeds the threshold 
\begin{equation}\label{def:m_star_two_type}
m^*:=\frac{r+\delta-\beta_H}{(1-\gamma)\eta}.
\end{equation}
When $m_t<m^*$, an impatient depositor is willing to stay; when $m_t \geq m^*$, it is optimal for an impatient depositor to withdraw.

Before any strategic withdrawal, the withdrawn share consists only of exogenous withdrawals and is equal to $1-e^{-\nu t}$. Meanwhile, a mass $e^{-\nu t}-e^{-(\nu+\theta_{LH})t}$ of depositors have become impatient but remain in the bank. This accumulated mass captures latent fragility: it is not yet included in $m_t$, but it determines how much the withdrawn share would jump if all impatient depositors withdrew together. 

In the earliest-run equilibrium, a cluster occurs as soon as the sum of exogenous withdrawals and accumulated impatient depositors reaches $m^*$. In the latest-run equilibrium, by contrast, accumulated impatient depositors continue to wait until exogenous withdrawals alone reach $m^*$. Thus, the latest cluster time $\overline{t}$ is determined by $1-e^{-\nu t}=m^*$. Between $\underline{t}$ and $\overline{t}$, impatient depositors prefer to stay if others stay but prefer to withdraw if the accumulated impatient depositors withdraw together. The next proposition formalizes this range of equilibrium cluster times.

\begin{prop}\label{prop:Poisson}
Suppose \eqref{ass:two_type_main} holds. If $\beta_H<r+\delta$, define
\begin{equation}\label{def:t_poisson}
 \underline{t}:= - \frac{1}{\nu+\theta_{LH}}
\log\left(1-\frac{r+\delta-\beta_H}{(1-\gamma)\eta}\right),
\quad
\overline{t}:=
\begin{cases}
- \frac{1}{\nu}\log\left(1-\frac{r+\delta-\beta_H}{(1-\gamma)\eta}\right), & \nu>0,\\
\infty, & \nu=0.
\end{cases}
\end{equation}
For any $\widetilde{t}\in[\underline{t},\overline{t}]$, define
\begin{equation}\label{def:tilde_m}
\widetilde{m}_s=
\begin{cases}
1-e^{-\nu s}, & s<\widetilde{t},\\
1-e^{-(\nu+\theta_{LH})s}, & s\geq \widetilde{t}.
\end{cases}
\end{equation}
Then any equilibrium is characterized by a cluster time $\widetilde{t}\in[\underline{t},\overline{t}]$ and the associated withdrawn share $\widetilde{m}$. At $\widetilde{t}$, all remaining impatient depositors withdraw together, generating a jump in $m$ of size
\[
e^{-\nu \widetilde{t}}-e^{-(\nu+\theta_{LH})\widetilde{t}}.
\]
After $\widetilde{t}$, depositors withdraw strategically as soon as they become impatient, because the withdrawn share is then higher and the failure risk only increases over time. If $\beta_H\geq r+\delta$, the equilibrium is unique, $\underline{t}=\overline{t}=0$, and depositors withdraw as soon as they become impatient.
\end{prop}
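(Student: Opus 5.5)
We work directly with the representation of Lemma \ref{lem:obj}. With no common state, $m$ is deterministic and nondecreasing, since every withdrawal is permanent. The private state is absorbing once it reaches $\beta_H$.

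\emph{Patient depositors.} By the first inequality in \eqref{ass:two_type_main}, the running payoff in \eqref{def:hat_J},
\[
r+\delta-\beta_L-(1-\gamma)\eta m_s ,
\]
is strictly positive for every $m_s\le 1$. A patient depositor's future, however, includes possible transition to $H$, so positivity of today's integrand is not enough. We therefore argue as follows. An impatient depositor's optimal continuation value $J_H\ge 0$, because she can always stop immediately. A patient depositor can adopt the rule ``wait while patient, then behave optimally once impatient.'' Its value is the integral of a positive flow plus a discounted nonnegative continuation, and is therefore strictly positive. Hence patient depositors never stop. We will pin down the minimal and maximal optimal times $\underline\tau_w,\overline\tau_w$ accordingly.

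\emph{Impatient depositors.} Given a nondecreasing $m$, consider the running reward
\[
f(s)=r+\delta-\beta_H-(1-\gamma)\eta m_s ,
\]
which is deterministic and nonincreasing in $s$. By \eqref{def:m_star_two_type}, $f\ge 0$ exactly when $m_s\le m^*$ (up to ties). An optimal stopping problem with deterministic nonincreasing reward and constant discount rate $\nu+\beta_H+\eta m$ is solved by stopping at the first time $f$ becomes negative. Where $f=0$ the depositor is indifferent, which is exactly what produces the earliest and latest selections. Consequently, an impatient depositor stays while $m_s<m^*$ and withdraws once $m_s\ge m^*$ (or $>m^*$ in the latest selection).

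If $\beta_H\ge r+\delta$, then $m^*\le 0$. In that case $f<0$ whenever $m_s>0$, and $f\le 0$ always, so impatient depositors stop immediately. Then $m_t=1-e^{-(\nu+\theta_{LH})t}$, the equilibrium is unique, and $\underline t=\overline t=0$.

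\emph{Fixed point.} Assume $0<m^*<1$. Let $\widetilde t$ be the first time the equilibrium $m$ reaches $m^*$, that is, when impatient depositors begin withdrawing. Before $\widetilde t$, only exogenous withdrawals occur, so $m_s=1-e^{-\nu s}$. From $\widetilde t$ on, every impatient depositor still in the bank withdraws. The jump at $\widetilde t$ is the stock of impatient depositors still inside,
\[
e^{-\nu\widetilde t}-e^{-(\nu+\theta_{LH})\widetilde t}.
\]
After $\widetilde t$, a depositor leaves at rate $\nu+\theta_{LH}$, giving \eqref{def:tilde_m}.

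Consistency requires two conditions.
\begin{itemize}
\item[(i)] No one wants to withdraw before $\widetilde t$. This needs $1-e^{-\nu s}\le m^*$ for $s<\widetilde t$, i.e.\ $\widetilde t\le\overline t$.
\item[(ii)] Withdrawing at $\widetilde t$ is optimal. This needs the post-jump level $1-e^{-(\nu+\theta_{LH})\widetilde t}\ge m^*$, i.e.\ $\widetilde t\ge\underline t$. Since $m$ keeps increasing afterwards, $f$ stays negative and later impatient depositors exit immediately.
\end{itemize}
Conversely, these two conditions make any $\widetilde t\in[\underline t,\overline t]$ self-fulfilling. Any equilibrium must take this form because of the threshold structure above: $m$ is nondecreasing, so once impatient depositors start withdrawing, they all do.

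\emph{Main obstacle.} The delicate step is the optimal-stopping characterization for the impatient type when $m$ is a general nondecreasing, possibly discontinuous, path with a tie at $m^*$. This step must rule out equilibria in which only part of the impatient stock withdraws, or in which withdrawals are gradual near $m^*$. I would first show that a partial jump landing strictly below $m^*$ leaves $f>0$, so stopping is strictly suboptimal. I would then show that a jump landing at or above $m^*$ makes every remaining impatient depositor strictly prefer to stop once $m>m^*$. The boundary case $m=m^*$ exactly would be handled through the definition of the earliest and latest optimal stopping times.
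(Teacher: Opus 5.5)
Your argument is the paper's own proof. Patient depositors never stop because their flow is strictly positive and $V_H\ge 0$. Impatient depositors face a deterministic nonincreasing flow $f$ and stop once it becomes nonpositive. (The discount rate $\nu+\beta_H+\eta m_s$ is not constant as you write, but only the sign of the integrand matters.) Consistency then forces $m$ to switch from $1-e^{-\nu s}$ to $1-e^{-(\nu+\theta_{LH})s}$ at some $\widetilde t\in[\underline t,\overline t]$.

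The step that fails as planned is your ``main obstacle'' paragraph. For $\nu>0$ your plan works. Here $m$ is strictly increasing, so after any jump landing at $m^*$ one has $m_s>m^*$ for all $s>\widetilde t$. Every remaining impatient depositor then strictly prefers to stop at $\widetilde t$ itself, which rules out partial or gradual clusters.

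For $\nu=0$, however, partial clusters cannot be ruled out, because they are genuine equilibria under Definition \ref{def:mfe}. Take $\beta_H<r+\delta$ and any $t_0>\underline t$. Let $\tau_w=t_0$ on $\{\tau_\theta\le\underline t\}$ and $\tau_w=\infty$ otherwise; this is a stopping time for the private filtration. It gives $m_s=0$ for $s<t_0$ and $m_s=1-e^{-\theta_{LH}\underline t}=m^*$ for $s\ge t_0$. The integrand in \eqref{def:hat_J} is then strictly positive before $\tau_\theta\vee t_0$ and identically zero afterwards. Hence every $\tau_w\ge \tau_\theta\vee t_0$, including this one, is optimal. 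The result is an equilibrium whose cluster has size $m^*<1-e^{-\theta_{LH}t_0}$, which is not of the form \eqref{def:tilde_m}.

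Similarly, if $\beta_H=r+\delta$ and $\nu=0$, then $m\equiv 0$ with $\tau_w=\infty$ is an equilibrium, so uniqueness fails. Your own ``latest selection'' (withdraw only when $m_s>m^*=0$) generates exactly this equilibrium.

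The paper's proof sidesteps both cases through its tie-breaking condition: an indifferent impatient depositor withdraws, i.e.\ $\tau^*_w(H;m)=\inf\{s\ge t:\ r+\delta-\beta_H-(1-\gamma)\eta m_s\le 0\}$. The ``any equilibrium'' and ``unique'' claims hold only under this convention, or for $\nu>0$, where ties last only an instant. You should therefore state this convention explicitly rather than try to prove the boundary case away. Once you do, your fixed-point argument is complete and coincides with the paper's.
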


Proposition \ref{prop:Poisson} shows that withdrawal can occur at a cluster even though private shocks arrive idiosyncratically and smoothly over time. The reason is latent fragility. Before the cluster, impatient depositors prefer not to be the first to run, because their own withdrawal has a negligible effect on the withdrawn share. However, when the accumulated mass of impatient depositors is sufficiently high, if all of them withdraw together, their collective exit raises $m_t$ enough to make withdrawal optimal for each of them. The run is therefore not caused by a sudden change in individual states. It is caused by the coordination of a stock of already run-prone depositors.

The cluster date is reminiscent of the sunspot equilibrium in bank runs in a dynamic context. If run-prone depositors expect others to wait until $\widetilde{t}$, waiting is optimal before $\widetilde{t}$; if they expect others to withdraw at $\widetilde{t}$, withdrawing is optimal then. This is related to the self-fulfilling logic in \citet{diamond1983bank}. The difference is that the size of the run is not arbitrary: it is pinned down by the stock of run-prone depositors that has accumulated by the coordination date. Later coordination dates are associated with larger clusters because more run-prone depositors have accumulated inside the bank. The earliest-run equilibrium occurs at $\underline{t}$, the first time at which exogenous withdrawals and accumulated impatient depositors can jointly reach the cutoff $m^*$. The latest-run equilibrium occurs at $\overline{t}$, when exogenous withdrawals alone push the withdrawn share to $m^*$; if $\nu=0$, this latest-run date is infinite.  

\begin{figure}
\centering
\includegraphics[scale=0.6]{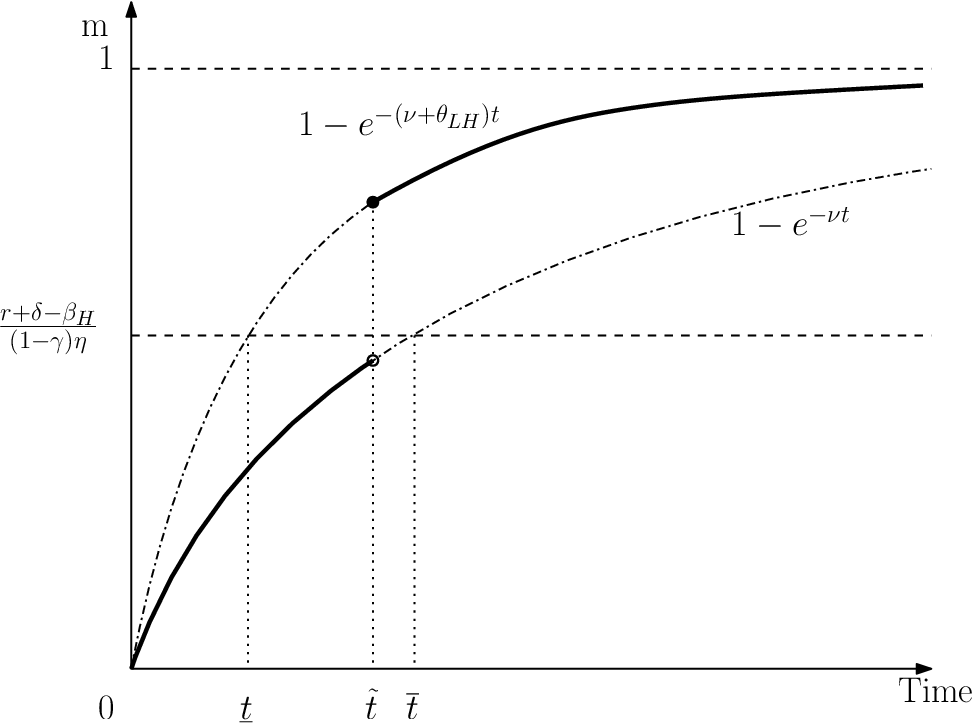}
\caption{Equilibrium withdrawn share}
 \begin{minipage}{\textwidth}
	\footnotesize
	\vspace{2mm}
    This figure presents the equilibrium withdrawn share in Proposition \ref{prop:Poisson} when $\beta_H<r+\delta$. The time $\underline{t}$ is the cluster time in the earliest-run equilibrium, and $\overline{t}$ is the cluster time in the latest-run equilibrium. The solid black line represents a generic equilibrium, whose cluster time, $\tilde{t}$, lies between $\underline{t}$ and $\overline{t}$.
    \end{minipage}
    \label{fig:betaL_low}
\end{figure}

Figure \ref{fig:betaL_low} plots one equilibrium path, with time on the horizontal axis and the withdrawn share on the vertical axis. Before the cluster, the withdrawn share grows only through exogenous withdrawals. At the cluster date, the accumulated impatient depositors withdraw together, so $m_t$ jumps discretely. Heterogeneous private states create a stock of run-prone depositors who may wait individually until $\tilde{t}$, and strategic complementarity turns this latent fragility into a clustered withdrawal once collective withdrawal becomes self-confirming.

\begin{cor}\label{cor:two_type_timing}
Suppose $\beta_H<r+\delta$ and $\nu>0$. The earliest cluster time $\underline{t}$ and the latest cluster time $\overline{t}$ are both decreasing in $\eta$ and $\beta_H$. In addition, $\underline{t}$ is decreasing in $\theta_{LH}$, while $\overline{t}$ is independent of $\theta_{LH}$.
\end{cor}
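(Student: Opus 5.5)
The plan is to differentiate the closed-form expressions in \eqref{def:t_poisson} from Proposition \ref{prop:Poisson}, which applies under the stated hypotheses $\beta_H<r+\delta$ and $\nu>0$. Set
\[
m^*=\frac{r+\delta-\beta_H}{(1-\gamma)\eta}
\]
as in \eqref{def:m_star_two_type}. Then $\underline t=-\frac{1}{\nu+\theta_{LH}}\log(1-m^*)$ and $\overline t=-\frac{1}{\nu}\log(1-m^*)$.

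The first step is to check that these expressions are well defined and positive. Because $\beta_H<r+\delta$, we have $m^*>0$. The second inequality in \eqref{ass:two_type_main}, $r+\delta-(1-\gamma)\eta<\beta_H$, is equivalent to $m^*<1$. Hence $\log(1-m^*)\in(-\infty,0)$, so both $\underline t$ and $\overline t$ are finite and strictly positive, and $g(m):=-\log(1-m)$ is strictly increasing on $[0,1)$.

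Next comes monotonicity in $\eta$ and $\beta_H$. The map $\eta\mapsto m^*$ is strictly decreasing because its numerator is positive, and $\beta_H\mapsto m^*$ is strictly decreasing because its numerator decreases in $\beta_H$. Composing with the increasing function $g$ and multiplying by the positive constants $1/(\nu+\theta_{LH})$ and $1/\nu$ shows that both $\underline t$ and $\overline t$ are decreasing in $\eta$ and in $\beta_H$. These comparisons are restricted to the parameter region where the hypotheses of Proposition \ref{prop:Poisson} continue to hold, namely $\beta_H<r+\delta$ and \eqref{ass:two_type_main}. That region is an interval in each parameter: raising $\eta$ or $\beta_H$ only helps the inequality $m^*<1$, and it stays inside the region as long as $\beta_H<r+\delta$ and $\beta_L<r+\delta-(1-\gamma)\eta$ still hold.

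Finally, $\theta_{LH}$ does not enter $m^*$. Therefore $\underline t=g(m^*)/(\nu+\theta_{LH})$ is strictly decreasing in $\theta_{LH}$ since $g(m^*)>0$, and $\overline t=g(m^*)/\nu$ does not depend on $\theta_{LH}$ at all.

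The argument is routine. The only point needing care is the positivity of $g(m^*)$, which relies on $m^*\in(0,1)$ and hence on both hypotheses. I would state that verification explicitly before differentiating.
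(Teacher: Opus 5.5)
Your proposal is correct and follows essentially the paper's route. The paper gives no separate appendix proof; it reads the monotonicity directly off the closed forms in \eqref{def:t_poisson}, through the cutoff $m^*$, as in the discussion following the corollary. Your explicit check that $m^*\in(0,1)$ under the hypotheses, which also forces $\eta>0$ and hence $g(m^*)>0$, is a worthwhile addition that the paper leaves implicit.
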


A larger $\eta$ raises strategic complementarity and lowers the cutoff $m^*$. A larger $\beta_H$ makes impatient depositors less willing to wait. Both forces bring the clustered withdrawal earlier in the earliest-run and latest-run equilibria. A larger $\theta_{LH}$ makes run-prone depositors accumulate faster, so it also brings the earliest-run equilibrium earlier. It does not affect the latest-run date because, in that equilibrium, the run is delayed until exogenous withdrawals alone push the withdrawn share to the cutoff.

We conclude this subsection by allowing impatient depositors to become patient again, so that $\theta_{HL}>0$. The possibility of returning to patience mitigates latent fragility: an impatient depositor may wait for her private state to improve rather than join a coordinated run. When there are no exogenous withdrawals and the return rate to patience is sufficiently high, this force can eliminate clustered withdrawals altogether.

\begin{prop}\label{prop:no_run}
Suppose that $\nu=0$ and $\beta_H > r+\delta - (1-\gamma) \eta \geq \beta_L$. If 
\begin{equation}\label{no_run_cond}
r+\delta - \beta_H - (1-\gamma) \eta \underbrace{\frac{\theta_{LH}}{\theta_{LH} + \theta_{HL}}}_{\text{largest impatient share}} + \theta_{HL} \underbrace{\frac{r+\delta - \beta_L - (1-\gamma) \eta \frac{\theta_{LH}}{\theta_{LH} + \theta_{HL}}}{\beta_L + \theta_{LH} + \eta \frac{\theta_{LH}}{\theta_{LH} + \theta_{HL}}}}_{\text{lowest value in the patient state}} > 0,
\end{equation}
then the unique equilibrium is $\tau^*_w =\infty$ and $m\equiv 0$.
\end{prop}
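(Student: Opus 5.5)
Since $\nu=0$, every withdrawal is strategic, and by Definition \ref{def:mfe} the share $m_t=\mathbb{P}(\tau_w\le t)$ is deterministic and nondecreasing. We first note that $m\equiv 0$, $\tau_w=\infty$ is an equilibrium. With $m\equiv0$ the running payoff in \eqref{def:hat_J} is $r+\delta-\beta_s\ge r+\delta-\beta_H>0$, because \eqref{no_run_cond} forces $r+\delta-\beta_H>0$ (the other terms in \eqref{no_run_cond} are nonpositive once we check the patient-value bracket sign). Hence $J>0$ for every $\tau_w>t$, so never withdrawing is optimal and is the unique optimizer. Given this, uniqueness follows from Proposition \ref{theorem existence minimal maximal MFGE}: it suffices to show that the earliest-run equilibrium $(\underline\tau_w,\underline m)$ satisfies $\underline m\equiv0$, because every equilibrium is sandwiched between $\underline m$ and $\overline m\ge 0$.

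To show this, we bound any equilibrium $m$ from above. Patient depositors never withdraw, since $\beta_L\le r+\delta-(1-\gamma)\eta$ and $m\le 1$ make the patient running payoff nonnegative; we handle the boundary case of equality by using the latest-stopping convention. Hence only impatient depositors can have withdrawn, and $m_t\le\mathbb{P}(X_t=H)\le \bar p:=\theta_{LH}/(\theta_{LH}+\theta_{HL})$. The chain starts in $L$, so $\mathbb{P}(X_t=H)=\bar p(1-e^{-(\theta_{LH}+\theta_{HL})t})$ increases to $\bar p$. Next, fix the constant path $m\equiv\bar p$. By Lemma \ref{lemma monotonicity of brm}, optimal stopping times are nonincreasing in $m$, so it suffices to show that under $\bar m\equiv\bar p$ an impatient depositor strictly prefers to wait, i.e.\ $\overline\tau_w=\underline\tau_w=\infty$. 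Then any equilibrium $m\le\bar p$ yields $\underline\tau_w(m)\ge\underline\tau_w(\bar p)=\infty$, so $m\equiv0$.

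It remains to prove that waiting is strictly optimal under $m\equiv\bar p$. This is a time-homogeneous two-state stopping problem. The patient depositor never stops. Let $v_L$ be her continuation value $J$ when she waits forever in $L$, where she may switch to $H$ and keep waiting there. As a lower bound we take instead the strategy ``wait in $L$ until switching, then stop'' (value $0$ in $H$):
\[
v_L\ge \frac{r+\delta-\beta_L-(1-\gamma)\eta\bar p}{\beta_L+\theta_{LH}+\eta\bar p},
\]
which is the ``lowest value in the patient state'' in \eqref{no_run_cond}. For the impatient state we verify the variational inequality: the HJB generator applied to continuation value $0$ in $H$ gives a drift of
\[
r+\delta-\beta_H-(1-\gamma)\eta\bar p+\theta_{HL}(v_L-0).
\]
This is strictly positive by \eqref{no_run_cond}. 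Therefore stopping in $H$ is never optimal: waiting a short time $\epsilon$ and then stopping beats stopping immediately. Consequently $\underline\tau_w=\infty$.

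The main obstacle is making this verification step rigorous. We must check that positivity of the generator at the stopping value rules out every stopping time, which requires that the value function $v_H$ is strictly positive. We would establish this by a first-step analysis: the strategy ``wait until the first exit from $H$'' yields exactly the displayed expression divided by $(\beta_H+\theta_{HL}+\eta\bar p)$, which is positive. Hence $v_H>0$, so stopping is suboptimal in $H$, and the earliest optimal stopping time is $\infty$.
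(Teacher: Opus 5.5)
\textbf{The proposal has a genuine gap, and the statement itself fails.}

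The step that carries your whole argument, $m_t\le\mathbb{P}(X_t=H)\le\bar p$ for all $t$, is false. The reduction to the constant path $\bar m\equiv\bar p$ via Lemma~\ref{lemma monotonicity of brm} collapses with it. Withdrawal is absorbing. $\mathbb{P}(\tau_w\le t)$ counts everyone who was impatient at some withdrawal date up to $t$, and those depositors no longer flow back to $L$, so nothing ties $m_t$ to the time-$t$ law of the chain.

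For example, suppose every depositor withdraws at the first time $s\ge t_0$ at which she is impatient. Then for $s\ge t_0$,
\[
m_s=1-L_{t_0}e^{-\theta_{LH}(s-t_0)}\to 1,\qquad L_{t_0}=1-\bar p\,\bigl(1-e^{-(\theta_{LH}+\theta_{HL})t_0}\bigr).
\]
The only valid bound is $m_{t_0}\le\bar p$ at the first withdrawal date $t_0$. But the impatient depositor's decision at $t_0$ depends on the whole post-$t_0$ path, along which the patient continuation value can lie well below the constant-$\bar p$ value in \eqref{no_run_cond}.

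Two further slips. First, \eqref{no_run_cond} does not force $r+\delta>\beta_H$: its last term is nonnegative, since its numerator is at least $(1-\gamma)\eta(1-\bar p)$. So the no-run equilibrium rests on the option value $\theta_{HL}V_L$, not on a positive running payoff. Second, in the boundary case $\beta_L=r+\delta-(1-\gamma)\eta$, the pair $(\tau_w,m)=(0,1)$ is an equilibrium: with $m\equiv 1$ the patient running payoff is $0$ and the impatient one is negative, so $\tau_w=0$ is optimal. No ``latest-stopping convention'' can exclude it, because Definition~\ref{def:mfe} admits any optimizer.

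The gap cannot be closed, because the statement is false. The paper's proof makes the same error at the step $V_L(\underline t;1-L_{\underline t})\ge V_L(\underline t;\bar p)$: there \eqref{VL_after_ut} is evaluated along the path $1-L_s$, $s\ge\underline t$, which rises to $1$.

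Here is a counterexample. Take $\gamma=0$, $r+\delta=0.06$, $\eta=0.03$, $\beta_L=0.029$, $\beta_H=0.06$, $\theta_{LH}=1$, $\theta_{HL}=1.2$, so $\bar p=1/2.2$. All hypotheses hold, and the left side of \eqref{no_run_cond} is about $0.0063>0$. Fix $t_0\ge 2$ and use the profile above. For $s\ge t_0$, bound the discount rate from below by $D:=\beta_L+\theta_{LH}+\eta m_{t_0}$. This gives
\[
V_L(t_0)\le\frac{r+\delta-\beta_L-\eta}{D}+\frac{\eta L_{t_0}}{D+\theta_{LH}}\approx 0.0091,
\]
against about $0.0167$ for the constant-$\bar p$ value. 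Since $m_s\ge m_{t_0}$ and $V_L(s)\le V_L(t_0)$ for $s\ge t_0$, the impatient drift satisfies
\[
r+\delta-\beta_H-\eta m_s+\theta_{HL}V_L(s)\le-\eta m_{t_0}+\theta_{HL}V_L(t_0)<0,
\]
which is about $-0.0026$ at $t_0=2$. Hence stopping in $H$ is optimal from $t_0$ on. Before $t_0$ the impatient drift is $\theta_{HL}V_L(s)>0$, so waiting is strictly optimal. Patient depositors never stop, since their running payoff is at least $r+\delta-\beta_L-\eta>0$. The profile is therefore an equilibrium with a cluster of size $1-L_{t_0}$ at $t_0$, contradicting uniqueness.

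What a correct argument can deliver must be localized at the first withdrawal date, as the paper does, and must bound the patient value along the post-run path. With $\beta_L<r+\delta-(1-\gamma)\eta$ strict, and with the patient value in \eqref{no_run_cond} replaced by its value under $m\equiv 1$,
\[
\frac{r+\delta-\beta_L-(1-\gamma)\eta}{\beta_L+\theta_{LH}+\eta},
\]
the no-run conclusion does follow.
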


%To interpret \eqref{no_run_cond}, absent withdrawals, the impatient share converges to $\theta_{LH}/(\theta_{LH}+\theta_{HL})$, its largest possible value. The condition evaluates an impatient depositor's incentive to stay at this worst-case share. It requires that the prospect of returning to the patient state, represented by the last term, more than offsets the depositor's impatience and the failure risk created by a coordinated withdrawal of all impatient depositors. Hence, even in the earliest-run equilibrium, no impatient depositor wishes to initiate or join a run.

If $\nu>0$, exogenous withdrawals keep raising \(m_t\) until the remaining impatient depositors withdraw together. A return to patience can therefore delay, but not eliminate, a clustered withdrawal.

\subsection{Common State as a Coordination Device}
\label{Section: common state coordination}

The two-type model above shows that heterogeneous private states can create a stock of run-prone depositors and that strategic complementarity can turn this stock into a clustered withdrawal. In that environment, however, the timing of the cluster is not pinned down uniquely. Any cluster time between $\underline{t}$ and $\overline{t}$ can be sustained as an equilibrium. We now introduce an aggregate state that coordinates withdrawal timing.

Let the common state $Z_t$ affect the discount rates of patient and impatient depositors, denoted by $\beta_L(Z_t)$ and $\beta_H(Z_t)$, with $\beta_L(z)<\beta_H(z),\;\forall z$. We assume both discount rates are decreasing in $z$, so a lower value of $Z_t$ represents worse aggregate conditions and makes depositors less willing to wait. As in the previous subsection, all depositors are patient at time $0$, patient depositors become impatient at rate $\theta_{LH}$, and impatience is absorbing. Again, we impose
\begin{equation}
\label{ass:beta_H_common}
  r + \delta - (1-\gamma)\eta > \beta_L(z), \quad \text{for any } z,
\end{equation}
so patient depositors never strategically withdraw. For impatient depositors, we assume
\begin{equation}
    \label{ass:beta_L_common}
   r+\delta - (1-\gamma)\eta > \min_z \beta_H(z) \quad \text{and} \quad \max_z \beta_H(z) > r+\delta.
\end{equation}
Thus, when aggregate conditions are sufficiently good, impatient depositors prefer to stay even if the withdrawn share is high. When aggregate conditions are sufficiently bad, impatient depositors prefer to withdraw even if no one else has run. The interesting region is between these extremes, where withdrawal depends jointly on aggregate conditions, the mass of run-prone depositors, and the withdrawn share.

Let $\chi_t$ denote the mass of impatient depositors who remain in the bank. These depositors have not yet contributed to the withdrawn share $m_t$, but they may withdraw together. Thus, $m_t$ measures realized fragility, while $\chi_t$ measures latent fragility. We focus on threshold equilibria with a boundary $\cZ(\chi,m)$ such that an impatient depositor withdraws when the aggregate state falls below the boundary. Her strategic withdrawal time is
\[
\tau_w(H)=\inf\{t\geq \tau_\theta: Z_t\leq \cZ(\chi_t,m_t)\},
\]
where $\tau_\theta$ is the time at which she becomes impatient. The common state therefore does not make all depositors impatient at once. Instead, it coordinates when the already accumulated impatient depositors withdraw.

We motivate this equilibrium using a small withdrawal friction. Suppose impatient depositors receive idiosyncratic opportunities to withdraw at Poisson arrival times with intensity $\vartheta$. For finite $\vartheta$, these opportunities make withdrawals slightly asynchronous. As $\vartheta\rightarrow\infty$, withdrawal opportunities become arbitrarily frequent and depositors can choose any stopping time. The finite-$\vartheta$ economy therefore provides an equilibrium selection for the model without withdrawal frictions.

\begin{prop}[Threshold equilibrium: selection and uniqueness]\label{prop:threshold}
For every finite $\vartheta$, there exists a unique threshold equilibrium, and its threshold is nondecreasing in both $\chi$ and $m$. Suppose the finite-$\vartheta$ equilibria converge as $\vartheta\rightarrow\infty$ and the limiting threshold and value function satisfy the regularity conditions stated in Appendix \ref{app:inf_vartheta}. Then the limit is a threshold equilibrium in which depositors can withdraw at any time, and its threshold remains nondecreasing in $\chi$ and $m$. Moreover, there exists at most one threshold equilibrium satisfying these regularity conditions.
\end{prop}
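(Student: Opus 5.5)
The proof has three parts, in the order of the statement: uniqueness and monotonicity for finite $\vartheta$, passage to the limit, and uniqueness in the limit.

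\emph{Finite $\vartheta$.} I would follow the dominance-solvability logic of \cite{FrankelPauzner2000}, adapted to the three-dimensional aggregate state $(Z_t,\chi_t,m_t)$. With Poisson withdrawal opportunities at rate $\vartheta$, the aggregate state evolves continuously in time. Between opportunities $\chi$ is fed by the inflow $\theta_{LH}(1-m-\chi)$, and $m$ grows at a rate bounded by $\nu+\vartheta$. This continuity removes the coordinated jumps that create multiplicity in Proposition \ref{prop:Poisson}.

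By \eqref{ass:beta_L_common} there are two dominance regions. For $z$ low enough, $\beta_H(z)>r+\delta$, so by Lemma \ref{lem:obj} withdrawing is dominant whatever $m$ is. For $z$ high enough, $\beta_H(z)<r+\delta-(1-\gamma)\eta$, so staying is dominant. Starting from these regions, I would iterate the best-response operator. I would start from the most aggressive strategy (withdraw at every opportunity) and from the least aggressive one (never withdraw strategically). Lemma \ref{lemma monotonicity of brm} and the Tarski argument of Proposition \ref{theorem existence minimal maximal MFGE} give monotone sequences of thresholds converging to a smallest and a largest threshold equilibrium, $\underline{\cZ}_\vartheta \le \overline{\cZ}_\vartheta$.

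To show they coincide, I would use the standard translation argument. Suppose they differ. Take the point where the horizontal gap between them is largest, and shift $\underline{\cZ}_\vartheta$ upward in $z$ until it touches $\overline{\cZ}_\vartheta$. At the touching point, the indifferent impatient depositor under the shifted threshold faces the same dynamics of others' behaviour but a strictly lower discount rate, since $\beta_H$ is decreasing in $z$. Because the finite $\vartheta$ makes the aggregate path continuous, the withdrawn share she faces is no larger. She therefore strictly prefers to stay, which contradicts indifference.

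Monotonicity of the threshold in $\chi$ and $m$ follows from the same comparison. A larger $m$ raises the failure intensity directly. A larger $\chi$ means more future withdrawals, hence a pointwise larger $m$ path, and Lemma \ref{lemma monotonicity of brm} applies. In both cases the value of waiting falls, so the indifference $z$ rises.

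\emph{Limit $\vartheta\to\infty$.} Take the converging sequence of finite-$\vartheta$ equilibria. The finite-$\vartheta$ value functions solve HJB equations of the form
\[
0=\max\{\mathcal{L}V+\dots,\ \vartheta(1-V)^+\text{-type penalty}\}.
\]
As $\vartheta\to\infty$ the penalty forces $V\ge 1$, so the limit is the variational inequality of an optimal stopping problem; this is the penalization method. Under the regularity assumed in Appendix \ref{app:inf_vartheta} (continuity of the threshold and the needed smoothness of $V$), stability of viscosity solutions identifies the limit as the value of the frictionless problem with stopping region $\{z\le \cZ(\chi,m)\}$. Consistency of $m$ and $\chi$ passes to the limit pointwise. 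The monotonicity of the threshold is preserved because pointwise limits of nondecreasing functions are nondecreasing.

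\emph{Uniqueness in the limit (main obstacle).} Here the continuity exploited above fails: $m$ can jump when $Z$ crosses the boundary. So I would not argue along paths. Instead I would argue on the value functions. Suppose $\cZ^1\ne\cZ^2$ are two regular threshold equilibria, and suppose, after ordering, that $\cZ^1\le\cZ^2$ is the only configuration to rule out; the general case reduces to it via the translation argument above. Let $w=V^1-V^2$ on the continuation region of equilibrium 2.

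The key step is to show that $w$ satisfies a linear parabolic inequality there. The nonlinear terms involve $\eta m$, and these are ordered because $m^1\le m^2$ along the relevant region by Lemma \ref{lemma monotonicity of brm}. The function $w$ vanishes on the boundary where both stop. Shifting the boundary as before produces an interior maximum point, or a boundary touching point, of the shifted difference. The strong maximum principle (Hopf lemma at the boundary) then forces $w$ to be constant, hence zero. This contradicts the strict gain coming from the strictly lower discount rate $\beta_H(z)$.

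I expect the most delicate part to be handling the jump of $(\chi,m)$ at the boundary, where $\chi$ drops to $0$ and $m$ increases by $\chi$. I would first treat it as a boundary condition $V(z,\chi,m)=V(z,0,m+\chi)$ evaluated at the post-jump state, and check that the comparison survives it.
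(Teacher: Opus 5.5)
\paragraph{The frictionless uniqueness step has a gap.}

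First, your $w=V^1-V^2$ carries no strict inequality. In the continuation region, $m$ is a state coordinate whose drift is the exogenous $\nu(1-m)$ for every threshold. So $V^1$ and $V^2$ solve the same linear equation $(\mathcal{L}-q)V+f=0$, and $w$ solves the homogeneous one; no ordering ``$m^1\le m^2$'' enters the PDE. The paper obtains strictness by translating the value function, not only the boundary:
\begin{itemize}
\item it takes $c^*=\max_{\mathcal D}(\cZ_1-\cZ_2)>0$, attained at $(h^*,m^*)$, and sets $W(h,m,z)=V^{\cZ_1}(h,m,z)-V^{\cZ_2}(h,m,z-c^*)$;
\item translation invariance of $Z$ together with $\cZ_2+c^*\ge\cZ_1$ gives $W>0$ on $\{z>\cZ_2+c^*\}$ and $W=0$ at $z^*=\cZ_1(h^*,m^*)$;
\item along the deterministic characteristics of $(h,m)$, $W$ is a strict supersolution: $(\mathcal{L}-q)W=(\beta_H(z)-\beta_H(z-c^*))(1+V^{\cZ_2}(h,m,z-c^*))<0$.
\end{itemize}

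Second, and decisively, Hopf's lemma at this boundary minimum does not ``force $w$ to be constant''; it yields $\partial_zW(h^*,m^*,z^*)>0$. The contradiction comes only from the smooth-pasting condition \eqref{V_smooth_pasting} for both equilibria. After the shift, both free boundaries pass through the touching point, so smooth pasting gives $\partial_zW(h^*,m^*,z^*)=0$. Your sketch never uses smooth fit, although it is exactly the extra hypothesis in Proposition \ref{prop:uniqueness_vartheta_inf}. Without it, nothing prevents $W$ from leaving $0$ linearly, and no contradiction arises.

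This, rather than the jump of $m$, is the real obstacle relative to finite $\vartheta$. There, the strict comparison of values at the touching point uses $\mathbb{P}(\tau>0)=1$ for Poisson times. With immediate stopping, both values are $0$ at that point, so you must compare $z$-derivatives instead. The jump also needs no boundary condition of the form $V(z,\chi,m)=V(z,0,m+\chi)$: $V\equiv0$ on the stopping region, and the paper localizes before the hitting time (Lemma \ref{lem:hatW_z}).

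\paragraph{The limit step skips the two nontrivial points of Proposition \ref{proposition intensity limit}.}

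Locally uniform convergence only yields $V^{(\infty)}\ge0$ with $V^{(\infty)}(h,m,\cZ^{(\infty)}(h,m))=0$. To identify $\{z\le\cZ^{(\infty)}\}$ as the stopping region, you need $V^{(\infty)}>0$ above it. The paper gets this from a bound $V^{(\vartheta)}(h,m,z)-V^{(\vartheta)}(h,m,\cZ^{(\vartheta)}(h,m))\ge c(z)>0$ that is uniform in $\vartheta$.

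Consistency also does not simply ``pass to the limit pointwise''. The drift $\vartheta\mathbbm{1}_{\{Z\le\cZ\}}(h-m)$ of $m$ blows up together with your penalty term. You must show that the limit of $m^{(\vartheta)}$ jumps to $h$ at entry times into the stopping region; the paper does this via Helly selection, $L^{(\vartheta)}/\vartheta\to0$, and the oscillation of Brownian paths.

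Finally, in the finite-$\vartheta$ part, monotonicity in $(\chi,m)$ cannot be obtained afterwards ``by the same comparison''. Lemma \ref{lem:VZ_mono} requires a monotone input threshold. The paper therefore builds monotonicity into the lattice on which the indifference map acts.
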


Our main theoretical contribution relative to \cite{FrankelPauzner2000} is to establish equilibrium uniqueness when depositors can withdraw at any time and their withdrawals affect bank-failure risk. For finite $\vartheta$, we first adapt their coordination argument to our setting. The frictionless case, however, requires a different proof. With Poisson withdrawal opportunities, immediate withdrawal is not feasible. With arbitrary stopping times, immediate withdrawal is feasible and the depositor's value is constant throughout the withdrawal region, so the original uniqueness argument no longer applies. We establish uniqueness using a strong maximum principle. Proposition \ref{prop:threshold_finite} in Appendix \ref{app:finite_vartheta} states the finite-$\vartheta$ result formally, while Propositions \ref{proposition intensity limit} and \ref{prop:uniqueness_vartheta_inf} in Appendix \ref{app:inf_vartheta} provide the convergence and uniqueness results for arbitrary stopping times.

The monotonicity of the threshold has a direct economic interpretation. A higher withdrawn share $m$ means the bank is already more fragile, so a less severe aggregate shock is sufficient to trigger a run. A higher $\chi$ means that more impatient depositors have accumulated inside the bank. This also raises the threshold: the larger the stock of latent fragility, the easier it is for collective withdrawal to become self-confirming.

In the limiting equilibrium, when the threshold is crossed, all remaining impatient depositors withdraw at the same time. Thus, the withdrawn share follows
\begin{equation}
    \label{dyn:m_inf_del}
\begin{split}
    dm_t = & \nu(1-m_t)dt, \quad \text{ when } Z_t> \cZ(\chi_t,m_t), \\
    m_t = & 1- e^{-(\theta_{LH}+\nu)t},  \text{ when } Z_t \leq \cZ(\chi_t,m_t).
\end{split}
\end{equation}
When the aggregate state is above the threshold, depositors only withdraw exogenously. Suppose that the aggregate state hits the threshold at time $t$, the mass of patient depositors is $e^{-(\theta_{LH} + \nu)t}$. Therefore, the mass of run-prone depositors right before the hitting time is 
\[
\chi_{t-} = 1- e^{-(\theta_{LH} + \nu) t} - m_{t-}.
\]
All accumulated run-prone depositors coordinate and withdraw together at the hitting time, generating a clustered withdrawal of size $\chi_{t-}$, and the withdrawn share jumps to $1- e^{-(\theta_{LH} +\nu) t}$ at the hitting time. After the hitting time, if the aggregate state remains below the threshold, depositors withdraw immediately when they become impatient. Therefore, the withdrawn share follows $1-e^{-(\theta_{LH} + \nu)t}$.

\paragraph{Multiple Clustered Withdrawals.} The common state can generate multiple clustered withdrawals. After the first threshold crossing, all run-prone depositors withdraw and $m_t$ jumps to $1- e^{-(\theta_{LH} +\nu)t}$. If aggregate conditions remain bad, newly impatient depositors withdraw as soon as they become impatient, so $m_t$ and $1- e^{-(\theta_{LH} +\nu)t}$ move together. If aggregate conditions recover, however, strategic withdrawals stop. Patient depositors continue to become impatient, so $1- e^{-(\theta_{LH} +\nu)t}$ keeps rising, while $m_t$ grows only through exogenous withdrawals. The gap $1- e^{-(\theta_{LH} +\nu)t}-m_t =\chi_t$, which is the new stock of run-prone depositors, rebuilds. If the common state later falls back to the threshold, this newly accumulated stock withdraws together, generating another cluster.

\begin{figure}
\hspace{-1cm}
\includegraphics[scale=0.7]{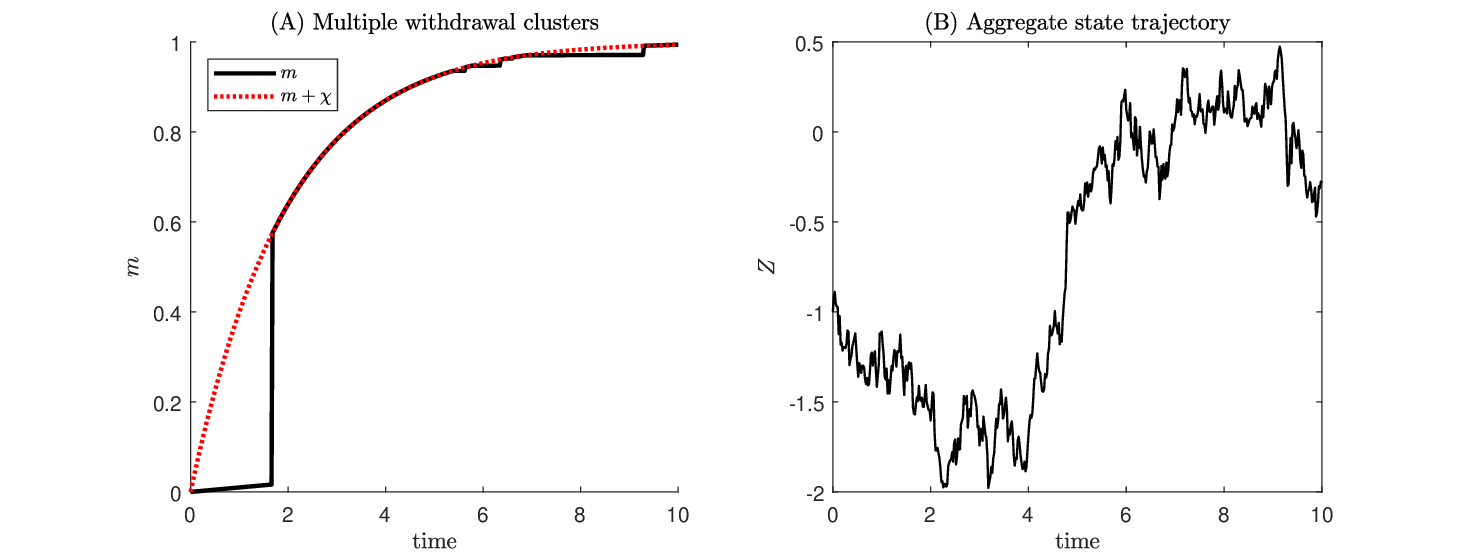}
\caption{Multiple clustered withdrawals}
 \begin{minipage}{\textwidth}
	\footnotesize
	\vspace{2mm}
    This figure presents an example of multiple withdrawal clusters. Panel (A) presents the dynamics of withdrawn share and Panel (B) shows the trajectory of the aggregate state. Parameters used are $\sigma = 0.5, r=0.05, \delta = 0.01, \nu = 0.01, \gamma = 0, \eta = 0.03$, $\theta_{LH} = 0.5$, $\theta_{HL}=0$, and $\vartheta = 10^4$. The threshold remains the same when $\vartheta$ is further increased. The discount rate for impatient depositors is $\beta_H(Z) = \beta_{\text{min}} + \frac{\beta_{\text{max}} -  \beta_{\text{min}}}{1+ e^Z}$, where $\beta_{\text{min}} = 0.01$ and $\beta_{\text{max}} = 0.08$.
    \end{minipage}
    \label{fig:threshold_sim}
\end{figure}

Figure \ref{fig:threshold_sim} illustrates an example of multiple clustered withdrawals. After the aggregate state hits the threshold for the first time and triggers a cluster, the aggregate state then recovers in several episodes, so run-prone depositors who arrive afterward do not immediately withdraw. During each recovery period, $\chi$ rebuilds. When the common state later deteriorates and touches the threshold again, the newly accumulated run-prone depositors withdraw together. The size of each cluster is therefore determined by how many depositors became run-prone between consecutive threshold crossings.

%This is the main difference from the two-type benchmark without an aggregate state. Without a common state, the model identifies an interval of possible cluster times. With a common state, aggregate conditions coordinate the timing of runs. The boundary tells depositors when waiting is no longer sustainable. Each time the common state crosses the boundary, the economy can experience a new clustered withdrawal among vulnerable depositors accumulated between consecutive threshold crossings.

\section{Robustness and Empirical Relevance}
\label{Section: robustness empirical relevance}

\subsection{Continuous State}
\label{Section: continuous state multiple clusters}

The binary private-state structure makes the mechanism transparent, but it also raises a natural question: are clustered withdrawals an artifact of having a mass of depositors with the same type? In this subsection, we show that the answer is no. Clustered withdrawals also arise when private states evolve continuously and depositors have continuously distributed discount rates.

For tractability, we focus on idiosyncratic private states and abstract from the common state.\footnote{When the common state is present, the distribution of discount rates becomes a stochastic flow of continuous probability measures, which is infinite dimensional. This stochastic flow of discount rate distribution becomes a state variable of depositors, whose value function satisfies the so-called master equation. Wellposedness and numerical solvers for master equations are challenging on-going research areas; see \cite{carmona2018probabilistic2}.} Depositor $i$'s private state follows
\begin{equation}
    \label{dyn:X_BM}
     dX^i_t = \sigma dB^i_t, \quad X^i_0 \sim \mathcal{L}_0, \quad i\in[0,1],
\end{equation}
where the Brownian motions $\{B_i\}_{i\in[0,1]}$ are mutually independent and $\mathcal{L}_0=\mathcal{N}(\mu_0,\sigma_0^2)$ is the initial distribution.\footnote{We use the Brownian motion dynamics for the private state as an example. The methodology works for other types of private state dynamics.} The discount rate is
\begin{equation}\label{BM_beta}
 \beta(X_t) = \beta_{\text{min}} + \frac{\beta_{\text{max}} - \beta_{\text{min}}}{1+ e^{\zeta X_t}},
\end{equation}
where $\zeta$ is a scaling factor. A lower private state corresponds to a higher discount rate and a stronger incentive to withdraw.

Because all uncertainty is idiosyncratic, the withdrawn share is deterministic in equilibrium. Given a path of withdrawn share $m$, the representative depositor's net benefit from staying is
\begin{equation}
    \label{def:V_BM}
    \mathcal{V}(t, X_t; m) = \sup_{\tau_w} \mathbb{E}_t \Big[\int_t^{\tau_w} e^{-\int_t^s (\nu + \beta(X_u) + \eta m_u) du} \Big( r + \delta - \beta(X_s) - (1-\gamma) \eta m_s\Big) ds\Big].
\end{equation}
The optimal withdrawal rule is characterized by an optimal withdrawal boundary $b_t$. Depositors withdraw when their private state falls below this boundary:
\[
\tau_w^*=\inf\{t\geq0: X_t\leq b_t\}.
\]
Thus, the continuous-type model replaces the patient/impatient distinction with a boundary that separates depositors who stay from depositors who withdraw.

The boundary is determined jointly with the withdrawn share. On a finite horizon $T$, the value function satisfies the free-boundary problem
\begin{align}
 0 = & \partial_t \mathcal{V} + \tfrac12 \sigma^2 \partial^2_{XX} \mathcal{V} - \big(\nu + \beta(X) + \eta m \big) \mathcal{V} + r +\delta - \beta(X) - (1-\gamma) \eta m, \quad X > b_t,\label{pro:free_boundary}\\
 0 = & \mathcal{V}(t, b_t; m), \quad  0 = \partial_X \mathcal{V}(t, b_t; m), \quad \text{and} \quad \mathcal{V}(T, X;m) =0, \label{bc:free}
\end{align}
The first equation is the HJB equation in the continuation region, where depositors still wait. The first two boundary conditions in \eqref{bc:free} are the value-matching and smooth-pasting conditions at the optimal stopping boundary. The remaining condition is the terminal condition.
For a given boundary, the density of remaining depositors, $g(t,X)$, evolves according to 
\begin{align}
 \partial_t g =& \tfrac12 \sigma^2 \partial^2_{XX} g - \nu \, g, \quad X > b_t,\label{eq:FP}\\
 g(0,X) = & \phi(X) \mathbbm{1}_{\{X>b_0\}}, \quad \text{and} \quad g(t,b_t)=0, \label{bc:FP}
\end{align}
The first equation is the Fokker-Planck equation for the density of surviving depositors who have not withdrawn. The absorbing boundary condition $g(t,b_t)=0$ reflects that depositors withdraw immediately once their private state reaches the boundary. Here, $\phi$ is the density of the initial distribution $\mathcal{L}_0$. The withdrawn share is then
\begin{equation}
    \label{FP:m}
    m_t = 1-\int_{b_t}^{\infty} g(t,X)\,dX.
\end{equation}
A mean-field equilibrium is a fixed point between the boundary $b_t$ and the withdrawn share $m_t$.

The economic mechanism is the same as before. Depositors gradually accumulate near the withdrawal boundary, creating latent fragility. If the boundary jumps upward at time $t$ from $b_{t-}$ to $b_t$, all depositors with private states between these two boundaries withdraw together. The withdrawn share therefore jumps even though individual private states and their cross-sectional distribution evolve continuously. Because the model contains only idiosyncratic shocks, no aggregate fundamental selects the time of this jump. Instead, the jump time acts as a self-fulfilling sunspot. If depositors expect a clustered withdrawal at time $t$, the most run-prone depositors withdraw, raising $m_t$ and making withdrawal optimal for depositors with slightly better private states. This response validates the anticipated upward jump $\Delta b_t=b_t-b_{t-}$ and coordinates a partial run among depositors near the withdrawal margin.

\begin{figure}
\centering
\begin{subfigure}{\textwidth}
\includegraphics[scale=0.65]{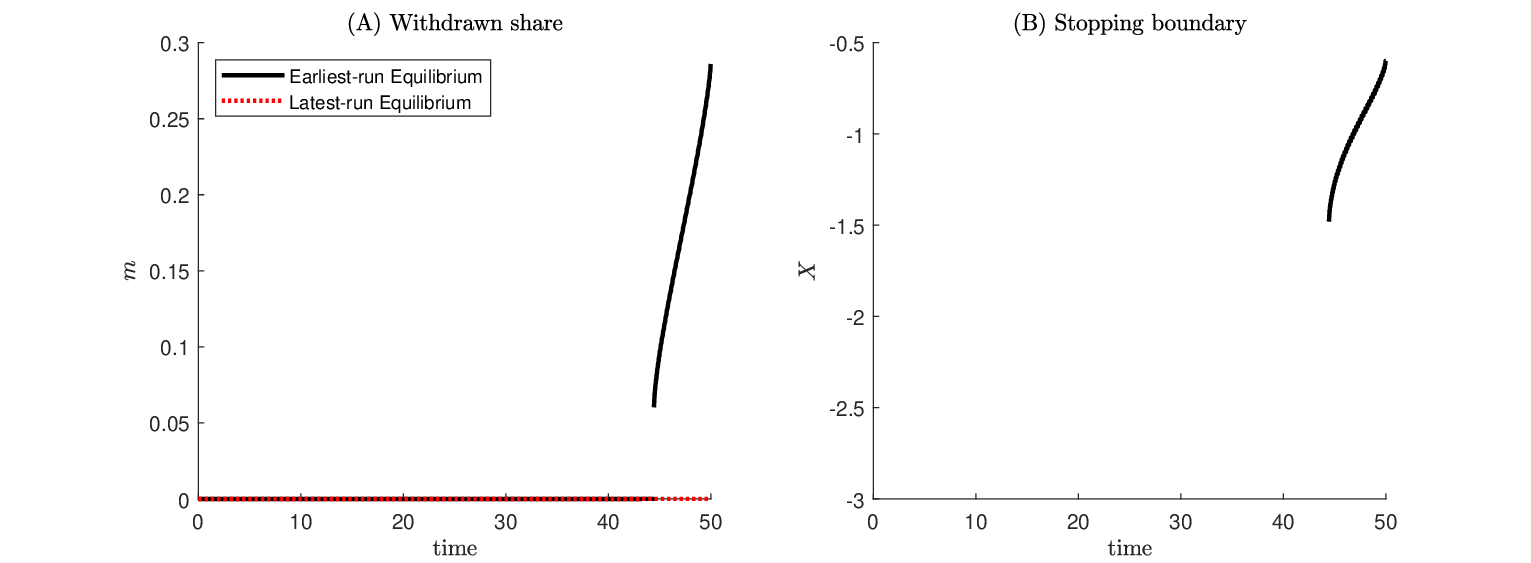}
\end{subfigure}

\begin{subfigure}{\textwidth}
\centering
\includegraphics[scale=0.65]{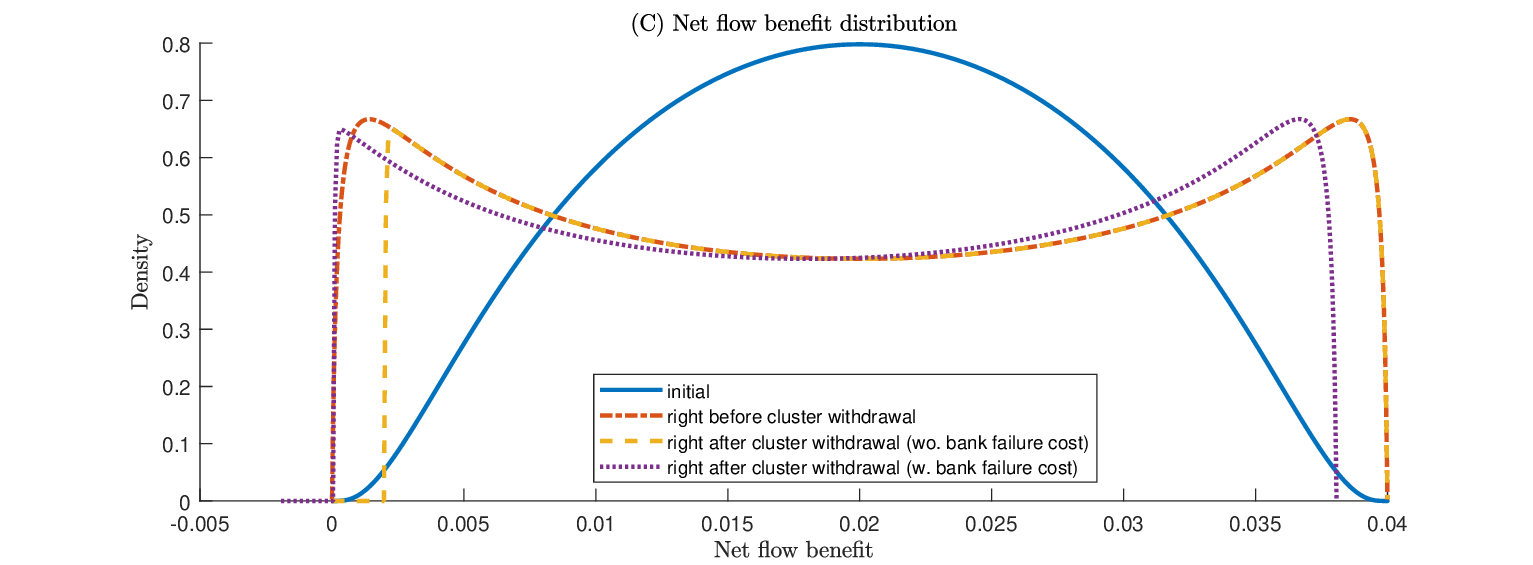}
\end{subfigure}
\caption{Equilibria with continuous private states}
 \begin{minipage}{\textwidth}
	\footnotesize
	\vspace{2mm}
    Panels (A) and (B) present the withdrawn share and the stopping boundary in the earliest-run and latest-run equilibria. Panel (C) presents densities of the net flow benefit, $r+\delta-\beta(X_t)-(1-\gamma)\eta m_t$, in the earliest-run equilibrium. The blue solid curve is the density at $t=0$, and the red dot-dashed curve is the density immediately before the cluster. The yellow curve is the density immediately after the cluster, after adding back the bank-failure term $\eta m_t$. The purple dotted curve is the actual post-cluster density. Parameters are $r=0.05, \delta=0.01, \nu=0, \gamma=0, \eta=0.032, \sigma=0.12, \mu_0=0, \sigma_0=0.5$, and $\zeta=2$. The function $\beta$ is given in \eqref{BM_beta} with $\beta_{\text{min}}=0.02$ and $\beta_{\text{max}}=0.06$. Because $r+\delta - \beta_{\text{max}} =0$ and exogenous withdrawal is absent, the latest-run equilibrium is $m\equiv 0$.
    \end{minipage}
    \label{fig:BM}
\end{figure}

Figure \ref{fig:BM} illustrates this mechanism. In the earliest-run equilibrium, the solid black lines in Panels (A) and (B) show that the withdrawal boundary and the withdrawn share jump at the same time. Panel (C) decomposes this jump. The blue solid curve gives the initial density of net flow benefits. By the instant before the cluster, shown by the red dot-dashed curve, depositors have accumulated near the withdrawal margin even though their private states remain continuously distributed. The yellow curve shows the distribution immediately after the cluster but adds back the bank-failure term $\eta m_t$. Because $m_{t-}=0$ in this example, comparing the red and yellow curves holds failure risk fixed. Their difference, concentrated at the left end of the distribution, is the mass of depositors who leave in the cluster. The purple dotted curve restores the bank-failure term $\eta m_t$ and therefore gives the actual distribution immediately after the cluster. It is the yellow curve shifted to the left by $\eta m_t$ and begins at zero. Thus, the initial withdrawals raise failure risk and reduce the net benefit of staying for every remaining depositor, validating the clustered withdrawal. The cluster is not caused by a mass point in private types; it arises from strategic complementarity acting on a continuous distribution of depositors near the withdrawal margin.

The continuous-state model also preserves the equilibrium ordering from Section \ref{Section: equilibrium existence}. There is an earliest-run equilibrium and a latest-run equilibrium. The earliest-run equilibrium has a higher withdrawal boundary and a larger withdrawn share, while the latest-run equilibrium has a lower boundary and a smaller withdrawn share. Stronger strategic complementarity, measured by $\eta$, shifts the boundary upward and brings the cluster earlier. Changes in private-state volatility and initial dispersion affect how much mass is near the boundary, and therefore affect the size and timing of the cluster; see the comparative statics in Appendix \ref{app:comp_statics}.

This section shows that clustered withdrawals are not an artifact of discrete private states. What matters is not whether depositors are exactly patient or impatient. What matters is that heterogeneous private incentives create a set of marginal depositors. Strategic complementarity then moves the withdrawal boundary in a way that can make these marginal depositors withdraw together.

\subsection{Empirical Implications}
\label{Section: empirical implications}

The model's most direct empirical implication is that smooth depositor-level shocks can generate discrete and partial jumps in aggregate withdrawals. High-frequency deposit-flow data should therefore exhibit concentrated withdrawal clusters rather than only gradual changes in average outflow rates. The response to the same aggregate news should also be state-dependent: it should be larger when more run-prone depositors have already accumulated inside the bank. Holding other conditions fixed, a later cluster should typically be larger because more depositors have had time to move toward the withdrawal margin.

This prediction makes the measurement of latent fragility central. Cluster participants should not be a random sample of depositors; they should be drawn disproportionately from those whose characteristics or recent shocks place them close to the withdrawal margin. The relevant characteristics depend on the source of stress rather than on a universal depositor category. Following bank-specific news, for example, depositors who become informed through social or information networks may become more run-prone. Following an increase in market rates, depositors with rate-sensitive accounts or better outside options may move closer to withdrawal. Empirically, one can estimate depositor-level withdrawal propensities using such characteristics and recent exposures, and use the predicted mass of depositors near the withdrawal margin as a measure of latent fragility. Insurance status, depositor relationships, and social networks are useful predictors in relevant settings \citep{iyer2012understanding,cookson2025social}, but their role should depend on the shock under study.

Latent fragility should be distinguished from the strength of strategic complementarity. The parameter $\eta$ captures how strongly withdrawals increase bank-failure risk and can be proxied by balance-sheet characteristics such as low liquidity, illiquid or long-duration assets, and unrealized losses. A higher $\eta$ brings a cluster earlier, but it need not make the cluster larger. Cluster size is determined by how many run-prone depositors have accumulated when coordination occurs. The model therefore predicts an interaction between depositor-level latent fragility and bank-level balance-sheet exposure: the same stock of run-prone depositors should generate an earlier response when withdrawals are more damaging to the bank. This distinction is related to evidence on the joint role of uninsured deposits and balance-sheet losses in bank fragility \citep{jiang2024monetary} and on the sensitivity of uninsured deposit demand to bank distress \citep{egan2017deposit}.

\section{Conclusion and Policy Implications}
\label{Section: conclusion}

This paper studies bank runs in a dynamic mean-field game with heterogeneous depositors. The main result is that runs can occur in clustered withdrawals. Heterogeneous private states create a stock of run-prone depositors whose incentives to withdraw are high, but who may still wait when the bank is sufficiently stable. Strategic complementarity then turns this latent fragility into a discrete run: once enough other depositors are expected to withdraw, the run-prone depositors withdraw together. A common state coordinates the timing of these clusters and can generate multiple waves of withdrawals. We also establish uniqueness among regular threshold equilibria when depositors can withdraw at arbitrary stopping times. The same clustering mechanism survives when private types are continuous, so clustered withdrawals are not an artifact of assuming that depositors are either patient or impatient.

Although we do not formally evaluate policy interventions, the model provides a useful way to interpret their economic channels. A policy can reduce runs by changing one of three objects: the stock of run-prone depositors, the strength of strategic complementarity, or the coordination device that turns latent fragility into a run. Deposit insurance mainly works by reducing the loss from bank failure for protected depositors. In the model, this raises the value of waiting and lowers the mass of depositors close to the withdrawal margin. It also weakens strategic complementarity because insured depositors have less reason to respond to other depositors' withdrawals. This is consistent with the classic role of deposit insurance in \citet{diamond1983bank}, while also emphasizing that partial insurance may leave a large uninsured or rate-sensitive depositor base exposed to run risk.

Suspension of convertibility works through a different channel. By preventing immediate withdrawals after a threshold is reached, suspension can break the feedback from current withdrawals to future failure risk. In our framework, this intervention can be represented by capping the jump in the withdrawn share. Such a policy may delay rather than eliminate a clustered withdrawal. By keeping run-prone depositors inside the bank, a temporary suspension can allow latent fragility to continue building and may lead to a larger cluster after the suspension is lifted.

Lender-of-last-resort policies operate by reducing the sensitivity of failure risk to withdrawals. In the model, this corresponds to lowering $\eta$. A smaller $\eta$ makes each withdrawal less damaging to the remaining depositors and therefore weakens strategic complementarity. As shown in Proposition \ref{prop comparative statics strength parameter}, this delays withdrawals and lowers the withdrawn share. This channel is close to the logic in \citet{rochet2004coordination}: liquidity support can prevent coordination failures when the bank is fundamentally viable but vulnerable to self-fulfilling withdrawals.

The framework also suggests that policy evaluation should focus not only on total outflows, but also on the timing and concentration of outflows. Policies that look similar in terms of cumulative withdrawals may differ sharply in whether they prevent a clustered withdrawal. A natural next step is to use the numerical model to compare deposit insurance, suspension of convertibility, and lender-of-last-resort support under the same primitives, focusing on the probability, timing, and size of withdrawal clusters. More broadly, low realized outflows need not imply low run risk: latent fragility may build while run-prone depositors continue to wait. Monitoring current withdrawals alone can therefore understate bank fragility.

\newpage{}
%%%%%%%%%%%%%%%%%%%%%%
%\section*{References}
%%%%%%%%%%%%%%%%%%%%%
\bibliographystyle{apalike}
\bibliography{references}

\newpage
%\pagenumbering{roman}\renewcommand{\thepage}{A\arabic{page}}

\setcounter{section}{0}
\renewcommand{\thesection}{APPENDIX \Alph{section}}
\renewcommand{\theequation}{\Alph{section}.\arabic{equation}}
\renewcommand{\thefigure}{\Alph{section}.\arabic{figure}}
\renewcommand{\thetable}{\Alph{section}.\arabic{table}}

\setcounter{equation}{0} 
\setcounter{figure}{0}
\setcounter{table}{0}

\numberwithin{equation}{section}
\numberwithin{figure}{section}

\appendix

\section{Monotone schemes}\label{app:m_schemes}

In order to identify the earliest-run and the latest-run equilibrium in Section \ref{Section: equilibrium existence}, we introduce a decreasing and an increasing iterative schemes, whose limits are the earliest-run and the latest-run equilibrium, respectively. For both schemes, we use the superscript to indicate the step of iterations. For example, $(m^{(n)}, \tau_w^{(n)})$ is the withdrawn share and the optimal withdrawal time for the $n$-th iteration. 

\begin{lem}\label{lem:mono_scheme}
Suppose that \eqref{ass:Rmin} holds.
\begin{enumerate}
\item[(i)] M-decreasing scheme: 
\begin{itemize}
\item[-] Initialize $m^{(0)}_t = 1$ for any $t\geq 0$, i.e., the entire population withdraws at time $0$. 
\item[-] Step $n$: let $\tau_w^{(n)}$ be the earliest optimal stopping time for the problem $\sup_{\tau_w} J(t, Z_t, X_t; m^{(n-1)}, \tau_w)$ and $m^{(n)}_t := \mathbb{P}\big(\tau_w^{(n)} \wedge \tau_{\nu}\leq t\,|\, \mathcal{F}^Z_t\big)$ for any $t$.
\end{itemize}
In this m-decreasing scheme,
\begin{equation}\label{m:dec}
m_t^{(n)} \geq m_t^{(n+1)}\quad \text{and} \quad \tau^{(n)}_w \leq \tau^{(n+1)}_w, \quad \text{ for any } t \text{ and } n.
\end{equation}
\item[(ii)] M-increasing scheme:
\begin{itemize}
\item[-] Initialize $m^{(0)}_t = 0$ for any $t\geq 0$, i.e., the entire population stays all the time.
\item[-] Step $n$: let $\tau_w^{(n)}$ be the  latest optimal stopping time for the problem $\sup_{\tau_w} J(t, Z_t, X_t; m^{(n-1)}, \tau_w)$ and $m^{(n)}_t := \mathbb{P}\big(\tau_w^{(n)} \wedge \tau_{\nu} \leq t\,|\, \mathcal{F}^Z_t\big)$ for any $t$.
\end{itemize}
In the m-increasing scheme,
\begin{equation}\label{m:inc}
m_t^{(n)} \leq m_t^{(n+1)} \quad \text{and} \quad \tau^{(n)}_w \geq \tau^{(n+1)}_w, \quad \text{ for any } t \text{ and } n.
\end{equation}
\end{enumerate}
\end{lem}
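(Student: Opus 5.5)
The proof is an induction on $n$ that rests on two monotonicity facts. The first is Lemma \ref{lemma monotonicity of brm}: the extreme optimal stopping times are nonincreasing in the withdrawn share. The second is the elementary observation that the aggregation map $\tau_w \mapsto m(\tau_w)$, with $m_t(\tau_w):=\mathbb{P}(\tau_w\wedge\tau_\nu\leq t\,|\,\mathcal{F}^Z_t)$, is order-reversing. If $\tau_w\leq\tau_w'$ pathwise, then $\{\tau_w'\wedge\tau_\nu\leq t\}\subseteq\{\tau_w\wedge\tau_\nu\leq t\}$. Monotonicity of conditional expectation then gives $m_t(\tau_w)\geq m_t(\tau_w')$ for all $t$, almost surely. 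I would record this as a preliminary remark, together with the trivial bounds $0\leq m_t(\tau_w)\leq 1$ valid for any stopping time.

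For the m-decreasing scheme (i), I would run the induction as follows.

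\emph{Base step.} Since $m^{(0)}\equiv 1$ and $m^{(1)}_t$ is a conditional probability, $m^{(1)}_t\leq 1=m^{(0)}_t$.

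\emph{Inductive step.} Assume $m^{(n)}\leq m^{(n-1)}$ pointwise in $t$. Both $\tau^{(n)}_w$ and $\tau^{(n+1)}_w$ are earliest optimal stopping times, for $m^{(n-1)}$ and $m^{(n)}$ respectively. Lemma \ref{lemma monotonicity of brm}, applied with $\bar m=m^{(n)}\leq m=m^{(n-1)}$, gives $\tau^{(n)}_w=\underline{\tau}_w(m^{(n-1)})\leq\underline{\tau}_w(m^{(n)})=\tau^{(n+1)}_w$. The preliminary remark then yields $m^{(n+1)}_t\leq m^{(n)}_t$, which closes the induction and gives \eqref{m:dec}.

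The m-increasing scheme (ii) is the mirror image. The base step is $m^{(1)}_t\geq 0=m^{(0)}_t$. In the inductive step, $m^{(n)}\geq m^{(n-1)}$ implies $\overline{\tau}_w(m^{(n)})\leq\overline{\tau}_w(m^{(n-1)})$, that is, $\tau^{(n+1)}_w\leq\tau^{(n)}_w$, and hence $m^{(n+1)}\geq m^{(n)}$. This uses the statement of Lemma \ref{lemma monotonicity of brm} for the latest optimal stopping times. Condition \eqref{ass:Rmin} enters only as the hypothesis needed to invoke that lemma.

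The main obstacle is well-posedness, not the order argument. Each step requires the earliest and latest optimal stopping times to exist for an arbitrary input $m^{(n-1)}$. This includes the degenerate initial inputs $m\equiv1$ and $m\equiv0$, which need not come from any stopping rule. It also requires each $m^{(n)}$ to be an admissible input, meaning $\mathcal{F}^Z$-adapted, $[0,1]$-valued and nondecreasing. I would handle this by noting that Lemma \ref{lemma monotonicity of brm} is stated for arbitrary withdrawn-share paths in that class, with extreme stopping times given through the Bank--El Karoui representation. I would also check that $m^{(n)}$ stays in the class, which follows directly from its definition as a conditional distribution function of a stopping time. With that verified, the induction goes through verbatim.
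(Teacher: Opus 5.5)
Your proof is correct and is the paper's argument. It is an induction on $n$ with three ingredients: the base case comes from the trivial bounds $0\le m^{(1)}_t\le 1$; Lemma~\ref{lemma monotonicity of brm} turns the ordering of $m^{(n-1)}$ and $m^{(n)}$ into the ordering of the next extreme stopping times; and the order-reversing aggregation $\tau_w\mapsto m(\tau_w)$ turns that back into the ordering of $m^{(n)}$ and $m^{(n+1)}$.

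Your well-posedness remarks go beyond what the paper checks. The only quibble is that monotonicity of $t\mapsto m^{(n)}_t$ does not follow from $m^{(n)}$ being a conditional distribution function, because the conditioning $\sigma$-field $\mathcal{F}^Z_t$ grows with $t$; it instead follows from the independence of $X$ and $\tau_\nu$ from $Z$. In any case, the comparison step in Lemma~\ref{lemma monotonicity of brm} uses only the pointwise ordering $\bar m_t\le m_t$.
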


In either scheme, the previous result shows that $(\tau_w^{(n)})_{n}$ is a monotone sequence. Therefore, we can define the limit as 
\begin{equation}\label{eq:definition limits}
\tau^*_w := \lim_n \tau^{(n)}_w \quad \text{and} \quad m^*_t := \mathbb{P} \big(\tau^*_w  \wedge \tau_{\nu} \leq t \,|\, \mathcal{F}^Z_t\big). 
\end{equation}

The following result shows that the limits of the m-decreasing and the m-increasing scheme are the earliest-run and the latest-run equilibrium, respectively.

\begin{prop}\label{prop:convergence}
$\,$
\begin{enumerate}
\item[(i)] In the m-decreasing scheme, $(\tau^*_w, m^*)$ is the earliest-run equilibrium.
\item[(ii)] In the m-increasing scheme, $(\tau^*_w, m^*)$ is the latest-run equilibrium.
\end{enumerate}
\end{prop}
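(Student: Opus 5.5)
We treat part (i) in detail; part (ii) is symmetric, with earliest stopping times replaced by latest ones and the inequalities reversed. The argument has three parts. First, we show the limit $(\tau^*_w,m^*)$ is a fixed point. Second, we show it lies below every equilibrium in the withdrawn-share order (equivalently, above in the time order). Third, we deduce that it is the earliest-run equilibrium of Proposition \ref{theorem existence minimal maximal MFGE}.

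\emph{Step 1: comparison with an arbitrary equilibrium.} Let $(\hat\tau_w,\hat m)$ be any equilibrium. We show by induction that $m^{(n)}_t\ge \hat m_t$ and $\tau^{(n)}_w\le \hat\tau_w$ for all $n$.
\begin{itemize}
\item The base case is trivial, since $m^{(0)}\equiv 1\ge\hat m$.
\item If $m^{(n-1)}\ge \hat m$, Lemma \ref{lemma monotonicity of brm} gives $\tau^{(n)}_w=\underline\tau_w(m^{(n-1)})\le \underline\tau_w(\hat m)$.
\item Since $\hat\tau_w$ is optimal for $\hat m$, we have $\underline\tau_w(\hat m)\le\hat\tau_w$, so $\tau^{(n)}_w\le\hat\tau_w$.
\item Monotonicity of $\tau\mapsto \mathbb P(\tau\wedge\tau_\nu\le t\,|\,\mathcal F^Z_t)$ then gives $m^{(n)}\ge\hat m$.
\end{itemize}
Passing to the limit yields $\tau^*_w\le\hat\tau_w$. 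Because $\tau^{(n)}_w\uparrow\tau^*_w$, the events $\{\tau^{(n)}_w\wedge\tau_\nu\le t\}$ decrease to $\{\tau^*_w\wedge\tau_\nu\le t\}$. Conditional monotone convergence then gives $m^{(n)}_t\downarrow m^*_t$, and in particular $m^*\ge\hat m$.

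\emph{Step 2: the limit is an equilibrium.} This is the main obstacle. We must show that $\tau^*_w$ is optimal for $J(\cdot;m^*,\cdot)$, and in fact that it is the earliest optimal time, although each $\tau^{(n)}_w$ is optimal only for $m^{(n-1)}\ge m^*$.

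The plan is to pass to the limit in the value functions. Write $v^{n}:=\sup_\tau J(\cdot;m^{(n-1)},\tau)$ and $v^*:=\sup_\tau J(\cdot;m^*,\tau)$. Because $m^{(n-1)}\downarrow m^*$ pointwise and the integrand in \eqref{def:hat_J} is bounded ($\beta$ bounded, $m\in[0,1]$), dominated convergence gives $J(\cdot;m^{(n-1)},\tau)\to J(\cdot;m^*,\tau)$ for each fixed $\tau$. Monotonicity in $m$ gives $v^n\uparrow$. We then argue that $v^n\to v^*$, using an $\varepsilon$-optimal $\tau$ for $m^*$ and the uniform bound on the integrand. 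Next we show $J(\cdot;m^{(n-1)},\tau^{(n)}_w)\to J(\cdot;m^*,\tau^*_w)$. This follows from dominated convergence, since $\tau^{(n)}_w\to\tau^*_w$ almost surely and the integrand is uniformly integrable under the discount $e^{-\int(\nu+\beta+\eta m)}$. (If $\nu=0$ and $\beta_{\min}$ can vanish, we would truncate at a finite horizon first.) Combining the two limits gives $J(\cdot;m^*,\tau^*_w)=v^*$, so $\tau^*_w$ is optimal for $m^*$.

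It remains to check that $\tau^*_w$ is the earliest optimal time. By Lemma \ref{lemma monotonicity of brm}, $\underline\tau_w(m^*)\ge\underline\tau_w(m^{(n-1)})=\tau^{(n)}_w$ for every $n$, hence $\underline\tau_w(m^*)\ge\tau^*_w$. Since $\tau^*_w$ is optimal, it follows that $\underline\tau_w(m^*)=\tau^*_w$. Consistency $m^*_t=\mathbb P(\tau^*_w\wedge\tau_\nu\le t\,|\,\mathcal F^Z_t)$ holds by definition \eqref{eq:definition limits}. Hence $(\tau^*_w,m^*)$ is an equilibrium.

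\emph{Step 3: conclusion.} By Steps 1 and 2, $(\tau^*_w,m^*)$ is an equilibrium dominating every other one in the stated order, so it coincides with the earliest-run equilibrium of Proposition \ref{theorem existence minimal maximal MFGE}.

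For part (ii) the same three steps apply, starting from $m^{(0)}\equiv 0$ and using the latest optimal times $\overline\tau_w$. The one difference is in the limit step: $\tau^{(n)}_w\downarrow\tau^*_w$ and $m^{(n)}\uparrow$. Convergence of $m^{(n)}_t$ to $m^*_t$ now requires right-continuity of the events $\{\tau\wedge\tau_\nu\le t\}$ under decreasing limits. Maximality should follow from the representation of the latest optimal stopping time in \cite{bank2004stochastic}, using the upper semicontinuity of the stopping region in $m$.
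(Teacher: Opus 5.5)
Your part (i) is essentially the paper's proof. The paper also proves extremality by the induction $\tau^{(n)}_w\le\hat\tau_w$ via Lemma~\ref{lemma monotonicity of brm}, and the equilibrium property by passing to the limit in $J(\cdot;m^{(n-1)},\tau^{(n)}_w)\ge J(\cdot;m^{(n-1)},\tau)$, using continuity of $J$ and $m^{(n)}_t\downarrow m^*_t$ for every $t$. Your detour through $v^n\uparrow v^*$ is harmless but redundant: the two convergences you prove already let you pass to the limit in that inequality for each fixed $\tau$. Your check that $\tau^*_w=\underline\tau_w(m^*)$ is correct but not required by Definition~\ref{def:mfe}.

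The point to correct is in part (ii). When $\tau^{(n)}_w\downarrow\tau^*_w$ you only have
\[
\{\tau^*_w\wedge\tau_\nu<t\}\subseteq\bigcup_n\{\tau^{(n)}_w\wedge\tau_\nu\le t\}\subseteq\{\tau^*_w\wedge\tau_\nu\le t\}.
\]
The second inclusion misses exactly the event $\{\tau_\nu>t,\ \tau^*_w=t<\tau^{(n)}_w \text{ for all } n\}$. For instance, if the iterates' cluster dates decrease strictly to $t$, then $m^{(n)}_t$ stays at its pre-jump level while $m^*_t$ has already jumped. So there is no right-continuity of these events to appeal to, and $m^{(n)}_t\to m^*_t$ can fail at particular $t$.

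What does hold is convergence for $dt$-a.e.\ $t$; the paper proves this through the indicator processes $\xi^{(n)}$ and dominated convergence. That is enough, because $m$ enters $J$ only through $ds$- and $du$-integrals, so your Step 2 limit argument still goes through. The comparisons $m^{(n-1)}\le m^*$ needed for Lemma~\ref{lemma monotonicity of brm} hold for every $t$, since $\tau^{(n)}_w\ge\tau^*_w$.

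You also do not need the Bank--El Karoui representation or any semicontinuity of the stopping region. The mirror image of your Step 2 gives $\overline\tau_w(m^*)\le\overline\tau_w(m^{(n-1)})=\tau^{(n)}_w$ for all $n$, hence $\overline\tau_w(m^*)\le\tau^*_w$, with equality by optimality of $\tau^*_w$. Latest-run extremality is the mirror image of your Step 1.
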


\section{Microfoundation for the threshold equilibrium}\label{app:microfoundation}
In order to microfound the equilibrium selection in Section \ref{Section: common state coordination}, we follow \cite{FrankelPauzner2000} to consider idiosyncratic withdrawal opportunities for depositors. We restrict the withdrawal opportunity for each depositor to be the arrival time $\{\tau_n\}_{n\geq 1}$ of a Poisson process with the intensity $\vartheta$. In this case, a depositor can only withdraw at one of $\{\tau_n\}_{n\geq 1}$, which arrives at an exponential random time (with the parameter $\vartheta$) after the previous one. The sequence of arrival time is i.i.d. across different depositors. This withdrawal friction can be interpreted as individual inattentiveness. When $\vartheta$ increases, the individual withdrawal opportunities arrive more often. 

We first examine the finite $\vartheta$ case in Section \ref{app:finite_vartheta} and establish the existence and uniqueness of threshold equilibrium in Proposition \ref{prop:threshold_finite}. Then we study the limiting case $\vartheta \rightarrow \infty$ in Section \ref{app:inf_vartheta}.

\subsection{Finite withdrawal intensity $\vartheta$}\label{app:finite_vartheta}

Following \cite{FrankelPauzner2000}, we will establish the unique threshold equilibrium for the finite $\vartheta$ case. Moreover, the equilibrium threshold is weakly increasing in withdrawn share and the mass of remaining impatient depositors.

Conditions \eqref{ass:beta_H_common} and \eqref{ass:beta_L_common} are assumed throughout this section. The common state $Z$ is modelled by a continuous Markov process. We assume that it satisfies 
\begin{equation}
    \label{ass:Z}
    Z^{t,z}_s = z + Z^{t,0}_s, \quad \text{ for any } s \geq t.
\end{equation}
Here $Z^{t,z}$ represents the common state which starts from $z$ at time $t$. This condition is satisfied when $Z$ has independent increments. All depositors are patient at time 0, but receive discount rate shocks at i.i.d. exponential time with the parameter $\theta_{LH}$ and become impatient afterwards. The private state of impatience is absorbing. 

Given \eqref{ass:beta_H_common}, patient depositors never strategically withdraw early. A mean-field equilibrium is characterized by the strategic withdrawal time of impatient depositors $\tau_w(H)$ and the withdrawn share $m$, which satisfy the following system
\begin{align}
\tau_w(H) =& \argmax_{\tau \in \{\tau_n\}_n} \mathbb{E} \Big[\int_0^{\tau} e^{-\int_0^s(\nu + \beta_H(Z_u) + \eta m_u) du} \Big(r+ \delta - \beta_H(Z_s) - (1-\gamma) \eta m_s \Big)ds\Big], \label{mfg:opt_fin_del}\\
m_t = & \mathbb{P}\big( \tau_w(H) \wedge \tau_{\nu} \leq t \,|\, \mathcal{F}^Z_t\big).\label{mfg:m_fin_del}
\end{align}

An impatient depositor decides the strategic withdrawal time using several state variables: the withdrawn share $m_t$, the mass of remaining impatient depositors, denoted by $\chi_t$, and also the common state $Z_t$. The withdrawn share indicates the proportion of population who has already withdrawn, it impacts the bank failure risk. The mass of remaining impatient depositors measures the magnitude of coordination motive, these impatient depositors could withdraw together, inducing a clustered withdrawal. In the current setting with withdrawal friction, individual impatient depositors need to wait for the arrival of their own withdrawal opportunity. They withdraw at the first $\tau_n$ when the common state $Z_{\tau_n}$ is sufficiently low. 

Introduce $h_t := \chi_t + m_t$, which represents the sum of withdrawn share and the mass of impatient depositors. Then $h_t = 1-p_t$, where $p_t$ is the mass of remaining patient depositors. Given that patient depositors experience exogenous liquidity shocks at the rate $\nu$ and the discount rate shock at the rate $\theta_{LH}$, $p_t$ follows the dynamics 
\[
dp_t = - (\theta_{LH} + \nu) p_t dt, \quad p_0 =1,
\]
which admits an explicit solution $p_t = e^{-(\theta_{LH} +\nu)t}$. Therefore, $h_t = 1- e^{-(\theta_{LH} + \nu)t}$. Given the explicit deterministic dynamics of $h_t$, it is convenient to use $(h_t, m_t, Z_t)$ as the state variables for impatient depositors. In this section, we will work with the state variables $(h_t, m_t, Z_t)$ then translate the results to the state variables $(\chi_t, m_t, Z_t)$, which has an 1-to-1 relation with $(h_t, m_t, Z_t)$.

We consider Markovian threshold equilibria. Given a threshold $\cZ: [0,1]^2 \rightarrow \mathbb{R}$, consider a threshold strategy where a representative impatient depositor withdraws at the first withdrawal opportunity when the common state $Z_t$ is lower than the threshold $\cZ(h_t, m_t)$, i.e., 
\begin{equation}\label{threshold_str}
\tau_w(H; h, m, \mathcal{Z}) = \inf\{\tau_n \geq \tau_{\theta_{LH}}\,:\, Z_{\tau_n} \leq \cZ(h_{\tau_n}, m_{\tau_n})\}.
\end{equation}
Here $\tau_{\theta_{LH}}$ is the arrival time of the discount rate shock when a patient depositor becomes impatient.

Next, we derive the dynamics of $m_t$ when impatient depositors choose the threshold strategy \eqref{threshold_str}. To this end, the threshold strategy indicates that the remaining impatient depositors (with mass $h_t - m_t= \chi_t$) withdraw at the rate $\vartheta$ whenever the common state $Z_t$ is lower than the threshold $\cZ(h_t, m_t)$, and the remaining patient and impatient depositors (with mass $1-m_t$) also withdraw exogenously at the rate $\nu$. Therefore, denote the withdrawn share associated with the threshold strategy in \eqref{threshold_str} by $m^{\cZ}$, we expect that it follows the dynamics
\begin{equation}\label{dyn:m_fin_del}
    d m^{\cZ}_t = \underbrace{\vartheta \mathbbm{1}_{\{Z_t \leq \cZ(h_t, m^{\cZ}_t)\}} (h_t - m^{\cZ}_t) dt}_{\text{strategic withdrawal from remaining impatient depositors}} + \underbrace{\nu (1-m^{\cZ}_t)dt}_{\text{exogenous withdrawal}}.
\end{equation}
If the common state $Z$ starts from $z$ at time $t$, we denote $m^{z,\cZ} =\{m^{z, \cZ}_s\}_{s\geq t}$ to highlight the initial value of $Z$. If we do not emphasize the initial value of $Z$, we suppress the superscript and write $\{m^{\cZ}_s\}_{s\geq t}$.

The following result confirms this intuition behind the dynamics \eqref{dyn:m_fin_del}.
\begin{lem}\label{lem:dyn_m_fin_del}
Given a threshold $\mathcal{Z}: [0,1]^2 \rightarrow \mathbb{R}$, consider a threshold strategy in \eqref{threshold_str}. Then $m_t$ in \eqref{mfg:m_fin_del} follows the dynamics \eqref{dyn:m_fin_del}. 
\end{lem}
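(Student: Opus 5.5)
We need to show that, under the threshold strategy \eqref{threshold_str}, the conditional withdrawn share $m_t=\mathbb{P}(\tau_w(H)\wedge\tau_\nu\le t\,|\,\mathcal{F}^Z_t)$ solves the random ODE \eqref{dyn:m_fin_del}. The plan is to condition on the whole path of $Z$ and compute the conditional hazard of withdrawal for a representative depositor, then aggregate using the independence of idiosyncratic clocks from $Z$.

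\textbf{Step 1 (conditional independence structure).} Fix the threshold $\cZ$ and a candidate path $m$ adapted to $\mathcal{F}^Z$. Conditional on $\mathcal{F}^Z_\infty$, the idiosyncratic objects $\tau_\nu$, $\tau_{\theta_{LH}}$ and the Poisson opportunity times $\{\tau_n\}$ are independent of one another and of $Z$. Since $h_t=1-e^{-(\theta_{LH}+\nu)t}$ is deterministic and $m$ is $\mathcal{F}^Z$-adapted, the indicator $I_t:=\mathbbm{1}_{\{Z_t\le \cZ(h_t,m_t)\}}$ is $\mathcal{F}^Z$-adapted. Hence, given $Z$, an impatient depositor who has not yet left withdraws strategically with intensity $\vartheta I_t$ (thinning of the Poisson opportunities), and every remaining depositor withdraws exogenously with intensity $\nu$. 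Because $\mathcal{F}^Z_t$-conditioning and $\mathcal{F}^Z_\infty$-conditioning agree for events up to time $t$ (the future of $Z$ is independent of idiosyncratic clocks and of the past events), we may work with the path-wise conditional law.

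\textbf{Step 2 (Kolmogorov forward equations).} Define the conditional masses $p_t$ (patient, still in bank), $\chi_t$ (impatient, still in bank), and $m_t=1-p_t-\chi_t$. Conditional on the $Z$-path, the depositor's state is a time-inhomogeneous Markov chain on $\{P,H,W\}$ with rates $P\to H$ at $\theta_{LH}$, $P\to W$ at $\nu$, and $H\to W$ at $\nu+\vartheta I_t$. The forward equations give
\[
\dot p_t=-(\theta_{LH}+\nu)p_t,\qquad \dot\chi_t=\theta_{LH}p_t-(\nu+\vartheta I_t)\chi_t,
\]
so $\dot m_t=\nu(p_t+\chi_t)+\vartheta I_t\chi_t=\nu(1-m_t)+\vartheta I_t(h_t-m_t)$ with $h_t=1-p_t$, which is \eqref{dyn:m_fin_del}. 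To justify the forward equations rigorously I would write $\mathbb{P}(\text{still in }H\text{ at }t\,|\,Z)=\int_0^t \theta_{LH}e^{-(\theta_{LH}+\nu)s}\exp(-\int_s^t(\nu+\vartheta I_u)du)\,ds$ explicitly and differentiate (in the absolutely continuous sense, since $I$ is only measurable).

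\textbf{Step 3 (consistency/fixed point).} The main obstacle is the circularity: $I_t$ depends on $m_t$ itself, so one must check that the $m$ defined by \eqref{mfg:m_fin_del} coincides with a solution of the ODE with discontinuous right-hand side. I would handle this by noting that the computation of Step 2 holds for any $\mathcal{F}^Z$-adapted input path, and the withdrawn share induced by the strategy is exactly the one plugged into the threshold (the strategy references the realized $m$); thus the induced $m$ satisfies the integral form of \eqref{dyn:m_fin_del} path by path, understood as $m_t=\int_0^t[\vartheta I_s(h_s-m_s)+\nu(1-m_s)]ds$. Existence of such an $m$ for each path follows since the drift is bounded and $m$ is trapped in $[0,h_t]$; if needed, I would construct it as the limit of the monotone iteration in $m$, using that the threshold is evaluated pathwise and the drift is bounded, so the limit is Lipschitz and satisfies the integral equation.
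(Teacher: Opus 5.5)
Your Steps 1--2 are essentially the paper's proof. The paper writes the $\mathcal{F}^Z_t$-conditional survival probability of the strategic withdrawal time as $e^{-\theta_{LH}t}+\int_0^t\theta_{LH}e^{-\theta_{LH}s}e^{-\vartheta\int_s^t\mathbbm{1}_{\{Z_u\le\cZ(h_u,m_u)\}}du}\,ds$, multiplies by $e^{-\nu t}$ using the independence of $\tau_\nu$, and differentiates in $t$. That is exactly your $p_t+\chi_t$ formula, and your absolutely continuous differentiation is, if anything, more careful.

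Step 3 goes beyond what the lemma or the paper claims, because the lemma presupposes that the consistent $m$ exists. If you keep it, note that bounded drift alone does not give existence when the drift is discontinuous in $m$. Your monotone iteration would also need $\cZ$ nondecreasing in $m$, plus some care in passing the limit through the indicator.
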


Now, a representative impatient depositor treats the withdrawn share $m^{\cZ}_t$ as a state variable, together with two other state variables $h_t$ and $Z_t$, her value function is 
\begin{equation}
    \label{V_Z}
V_H^{\cZ}(h,m,z) := \sup_{\tau \in\{\tau_n\}_n} \mathbb{E}_{h, m, z}\Big[\int_t^{\tau} e^{-\int_t^s (\nu + \beta_H(Z^z_u) + \eta m^{z,\cZ}_u) du} \Big(r+\delta - \beta_H(Z^z_s) - (1-\gamma) \eta m^{z,\cZ}_s \Big) ds \Big], 
\end{equation}
where $\mathbb{E}_{h,m,z}[\cdot]$ represents the conditional expectation $\mathbb{E}[\cdot \,|\, h_t = h, m^{z,\cZ}_t = m, Z^z_t = z, t< \tau_{\nu} \wedge t_{\eta}]$ and $m \in [1- e^{-\nu t}, h]$, i.e., $m$ is  between the mass of exogenous withdrawal and the $h$. 

The following result reports the monotonicity of $V^{\cZ}_H$ in $z, h$, and $m$.
\begin{lem}\label{lem:VZ_mono}
 The value function $V^{\cZ}_H$ is strictly increasing in z. When the threshold function $\cZ$ is weakly increasing in $h$ and $m$, then $V^{\cZ}_H$ is weakly decreasing in $h$ and $m$. 
\end{lem}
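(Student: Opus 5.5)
Both claims follow from a coupling and pathwise comparison argument. The threshold $\cZ$ is held fixed and the law of the opportunity times $\{\tau_n\}$ is the same in every comparison, so the feasible set of stopping times (the opportunity times) does not change. It therefore suffices to compare the payoff of each fixed stopping rule $\tau\in\{\tau_n\}_n$ across initial conditions, and then take the supremum.

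\emph{Monotonicity in $z$.} By \eqref{ass:Z}, $Z^{z'}_s = Z^{z}_s + (z'-z)$ pathwise. So for $z'>z$ we have $Z^{z'}_s>Z^z_s$ for all $s$.

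First, compare the withdrawn-share paths. Since $\cZ$ does not depend on $z$, the indicator $\mathbbm{1}_{\{Z_s\le \cZ(h_s,m_s)\}}$ is pointwise smaller along the higher path at any common $m$. A comparison argument for the ODE \eqref{dyn:m_fin_del}, given pathwise, then yields $m^{z',\cZ}_s\le m^{z,\cZ}_s$. This ODE has a discontinuous drift, so the comparison should be run at the first time the ordering could fail. The drift $\vartheta\mathbbm{1}(h-m)+\nu(1-m)$ is decreasing in $m$ for fixed indicator, which rules out a crossing. Where the indicator switches one could invoke the solution concept used in Lemma \ref{lem:dyn_m_fin_del}. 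I expect this step to be the main technical obstacle: the drift is not Lipschitz, so I would argue via minimal/maximal solutions or by smoothing the indicator and passing to the limit.

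Next, compare the integrands. Since $\beta_H$ is decreasing, a higher $Z$ and a lower $m$ raise the flow $r+\delta-\beta_H-(1-\gamma)\eta m$ and lower the discount $\nu+\beta_H+\eta m$. To handle the possibly negative flow, rewrite the payoff as in \eqref{decomp:1+J}. With $\kappa_s=\nu+\beta_H(Z_s)+\eta m_s$, we have
\[
\int_t^\tau e^{-\int_t^s\kappa}\big(r+\delta-\beta_H-(1-\gamma)\eta m\big)ds=\int_t^\tau e^{-\int_t^s\kappa}\big(r+\delta+(1-\gamma)\nu-\gamma\beta_H\big)ds-(1-\gamma)\big(1-e^{-\int_t^\tau\kappa}\big).
\]
By \eqref{ass:beta_H_common}, \eqref{ass:Dis_betaH} or \eqref{ass:Rmin}, the first integrand is nonnegative and increasing as $Z$ increases. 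Its discount factor also increases. The second term increases as $\kappa$ decreases. Hence each stopping rule's payoff is weakly larger under $z'$, and taking the supremum gives weak monotonicity.

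For strictness, note that $\beta_H$ is strictly decreasing, and the depositor can always choose at least the first opportunity $\tau_1>t$ or continue. On a set of positive probability, before the first opportunity arrives, the integrand is strictly larger. To make this rigorous, I would take an $\varepsilon$-optimal rule under $z$ and show that it gains a fixed positive amount under $z'$, coming from the interval $[t,\tau_1\wedge\tau]$ with $\tau\ge\tau_1>t$ almost surely.

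\emph{Monotonicity in $h,m$.} Take initial data $(h,m)\le(h',m')$ with the same $z$ and $t$. Because $h_s$ evolves deterministically via $1-h_s=(1-h)e^{-(\theta_{LH}+\nu)(s-t)}$, we get $h_s\le h'_s$. Comparing \eqref{dyn:m_fin_del} along the same $Z$-path then gives $m_s\le m'_s$. This uses two facts. The drift is increasing in $h$. And since $\cZ$ is increasing in $(h,m)$, the indicator is larger at larger $(h,m)$, while the term $\vartheta\mathbbm{1}(h-m)$ paired with the $-\nu m$ part needs care. At a first touching time $m_s=m'_s$, the larger $h'$ and the larger indicator give the primed path the larger drift, so the ordering is preserved. (The same caveat about the discontinuous drift applies here as well.) The payoff rewriting above then shows that a larger $m$ lowers each rule's payoff, so $V^{\cZ}_H$ is weakly decreasing in $h$ and in $m$.
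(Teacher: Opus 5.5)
Your proposal is correct and follows essentially the same route as the paper: the coupling $Z^{z'}=Z^{z}+(z'-z)$ from \eqref{ass:Z}, an ODE comparison giving the ordering of $m^{z,\cZ}$ (and of $m^{\cZ}$ in $(h,m)$ via monotonicity of $\cZ$), the integration-by-parts representation of the payoff (your rewriting via \eqref{decomp:1+J} is exactly the paper's functional $F$), and strictness from the fact that Poisson opportunity times satisfy $\tau>t$ almost surely. Two points are more careful than the paper. First, you use an $\varepsilon$-optimal rule with the gain localized on $[t,\tau_1]$, where the paper simply takes an optimal Poisson stopping time as given. Second, you flag the discontinuous-drift issue in the ODE comparison, which the paper covers only by invoking ``the comparison principle for ODEs.''
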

For the monotonicity in $z$, recall that the assumption \eqref{ass:Z} implies a higher future value $Z^{z}_s$ from a higher initial value $z$. Hence, less future events where the threshold $\cZ(h_s, m^{z,\cZ}_s)$ is breached, and a lower future value of the withdrawn share $m_s^{z,\cZ}$. Meanwhile, $\beta_H$ decreases with the common states, implying a lower future discounting from higher future common states. Therefore,  $z \mapsto V^{\cZ}_H(h,m,z)$ is increasing. For the monotonicity in $h$, a higher initial value of $h_t$ implies a higher future value. Because the threshold $\cZ$ increases with $h$, the future value of threshold is higher, hence easier to breach. Meanwhile, a higher $h_t$ also increases $h_t - m^{z,\cZ}_t$. Both forces increase the growth of the withdrawn share $m^{z,\cZ}$. Since the objective function $J$ in \eqref{decomp:1+J} decreases with the withdrawn share, $h \mapsto V^{\cZ}_H(h,m,z)$ is weakly decreasing. Similarly, the monotonicity of $V_H^{\cZ}$ in $m$ follows from the decreasing property of $J$ in $m^{z,\cZ}$ and the fact that a higher current value in $m^{z,\cZ}$ leads to a higher future value, due to the Markov dynamics of $m^{z,\cZ}$. 

Given the monotonicity of $V^{\cZ}_H$ in $z$, we define an indifference surface for a representative impatient depositor:
\begin{equation}
    \label{def:indiff_Z}
    \widetilde{\cZ}(h,m) := \inf\big\{z \,:\, V_H^{\cZ}(h,m, z) =0 \big\}, \quad \text{for } h\in[0,1] \text{ and } m \in [1-e^{-\nu t}, h].
\end{equation}
The monotonicity of $V_H^{\cZ}$ in Lemma \ref{lem:VZ_mono} implies that the indifference surface is weakly increasing in $h$ and $m$. Moreover, the monotonicity in $z$ implies
\[
V_H^{\cZ}(h,m,z)>0, \quad \text{for } z> \widetilde{\cZ}(h,m) \quad \text{and} \quad V_H^{\cZ}(h,m,z)<0, \quad \text{for } z< \widetilde{\cZ}(h,m).
\]
Therefore, when the common state is higher than the indifference surface, it is optimal for the impatient depositor to wait; when the common state drops below the indifference surface, it is optimal for the impatient depositor to withdraw at the next $\tau_n$, if the common state is still below the indifference surface then.  

The threshold strategy in \eqref{threshold_str} leads to an equilibrium if it is also optimal. This would be the case if the threshold $\cZ$ used to define the strategy \eqref{threshold_str} is also the indifference surface for the impatient depositor. This leads us to the following definition of a threshold equilibrium.
\begin{definition}
    \label{def:threshold_equilibrium}
    A threshold $\cZ : [0,1]^2 \rightarrow \mathbb{R}$ is an equilibrium threshold if 
    \begin{equation}
        V_H^{\cZ}\big( h, m, \cZ(h,m)\big) =0, \quad \text{ for any } h\in[0,1] \text{ and } m \in [1-e^{-\nu t}, h],
    \end{equation}
    where $V_H^{\cZ}$ is defined in \eqref{V_Z}.
\end{definition}

Extending the proof of \cite{FrankelPauzner2000} to our setting with potential bank failure, we obtain the following result for threshold equilibria. 

\begin{prop}
    \label{prop:threshold_finite}
    When $\vartheta$ is finite, there exists a unique threshold equilibrium. The equilibrium threshold is weakly increasing in $h$ and $m$.
\end{prop}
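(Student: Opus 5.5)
The plan follows the iterated-dominance argument of \cite{FrankelPauzner2000}, adapted so that the payoff depends on the withdrawn share through bank-failure risk. We work on the space $\mathcal{T}$ of thresholds $\cZ:[0,1]^2\to\mathbb{R}$ that are weakly increasing in $(h,m)$, ordered pointwise, and introduce the best-response map $\Phi(\cZ):=\widetilde{\cZ}$ given by the indifference surface \eqref{def:indiff_Z}. An equilibrium threshold in the sense of Definition \ref{def:threshold_equilibrium} is exactly a fixed point of $\Phi$.

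\textbf{Step 1 (well-definedness and invariance).} Conditions \eqref{ass:beta_L_common} give dominance regions. For $z$ very low, $\beta_H(z)>r+\delta$, so the flow payoff is negative whatever $m$ is. For $z$ very high, $\beta_H(z)<r+\delta-(1-\gamma)\eta$, so the flow is positive even when $m=1$. Because the withdrawal friction $\vartheta$ is finite, the depositor must hold on for an exponential time, and continuity and growth bounds on $Z$ (from \eqref{ass:Z}) then give $V^{\cZ}_H<0$ for sufficiently low $z$ and $>0$ for sufficiently high $z$, uniformly in $\cZ$. Hence $\widetilde{\cZ}$ is finite and lies between two constants $\underline z<\overline z$. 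By Lemma \ref{lem:VZ_mono}, $\Phi$ maps monotone thresholds to monotone thresholds.

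\textbf{Step 2 (monotonicity of $\Phi$).} If $\cZ\le\cZ'$, a comparison of the ODE \eqref{dyn:m_fin_del} pathwise gives $m^{\cZ}\le m^{\cZ'}$. Since the integrand in \eqref{V_Z} is decreasing in $m$ and the discount factor is increasing in $m$, and since \eqref{ass:Rmin} controls the sign exactly as in Lemma \ref{lemma monotonicity of brm}, it follows that $V^{\cZ}_H\ge V^{\cZ'}_H$, and therefore $\Phi(\cZ)\le\Phi(\cZ')$. Iterating from the constant thresholds $\underline z$ and $\overline z$ produces monotone sequences. Their limits $\cZ_*\le\cZ^*$ are fixed points, and every equilibrium threshold lies between them. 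Passing to the limit requires continuity of $V^{\cZ}_H$ along monotone sequences of thresholds, which follows by dominated convergence, because the flow of $m$ is Lipschitz in the threshold indicator except on a null set of times when $Z$ has a continuous law. This gives existence.

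\textbf{Step 3 (uniqueness, the main obstacle).} We must show $\cZ_*=\cZ^*$. The translation trick works as follows. Let $c:=\sup_{h,m}(\cZ^*-\cZ_*)(h,m)\ge0$ and suppose $c>0$. Consider the shifted threshold $\cZ_*+c\ge\cZ^*$. By \eqref{ass:Z}, a depositor facing $\cZ_*+c$ from state $z+c$ sees exactly the same withdrawal process $m$ as one facing $\cZ_*$ from $z$, while $\beta_H(Z+c)\le\beta_H(Z)$ strictly along the path. It follows that $V^{\cZ_*+c}_H(h,m,\cZ_*(h,m)+c)>V^{\cZ_*}_H(h,m,\cZ_*(h,m))=0$. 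Step 2 gives $V^{\cZ^*}_H\ge V^{\cZ_*+c}_H$. Taking $(h,m)$ near the point where the supremum is attained, so that $\cZ^*(h,m)\approx\cZ_*(h,m)+c$, we obtain $V^{\cZ^*}_H(h,m,\cZ^*(h,m))>0$, which contradicts the indifference condition. The delicate part is making the strict inequality uniform, so that it survives when the supremum is only approached. For this we need a quantitative lower bound: $\beta_H$ is strictly decreasing, and with finite $\vartheta$ the depositor waits a time of order $1/\vartheta$. Together these give a gain of order $\epsilon(c)>0$, and by continuity of $V$ in $z$ this gain dominates the approximation error. Once uniqueness holds, the unique threshold is $\Phi$ of a monotone threshold, so it is weakly increasing in $h$ and $m$ by Step 1.
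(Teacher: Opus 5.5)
Your proposal is essentially the paper's proof. Existence comes from a monotone indifference-surface map on bounded monotone thresholds, and uniqueness from the Frankel--Pauzner translation argument: \eqref{ass:Z} keeps the withdrawal path unchanged under a parallel shift, and strictness comes from $\beta_H$ strictly decreasing together with $\mathbb{P}(\tau>t)=1$ for Poisson opportunities. You shift the lower threshold up to touch from above, which is the mirror image of the paper shifting $\cZ_1$ down to touch $\cZ_2$ from below. There are two small differences. First, the paper gets existence from Tarski on the complete lattice $\mathcal{B}$, so your continuity-along-monotone-iterates claim is unnecessary; that claim is only sketched and is not automatic for discontinuous thresholds. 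Second, the paper simply takes the touching point $(h^*,m^*)$ as attained, whereas your patch for a non-attained supremum would need continuity of $z\mapsto V^{\cZ^*}_H(h,m,z)$ uniform in $(h,m)$, not merely pointwise.
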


Given the equilibrium threshold in $(h,m)$, we can change the variable to introduce the equilibrium threshold in $(\chi,m)$:
\[
\widehat{\mathcal{Z}}(\chi,m) = \mathcal{Z}(\chi+m,m).
\]
Given that $\mathcal{Z}$ is weakly increasing in $h, m$, $\widehat{\mathcal{Z}}$ also weakly increasing in $\chi$ and $m$.

In the absence of the common state in the previous section, impatient depositors can coordinate their withdrawal at any time between the earliest and the latest equilibria. The common state serves as a coordination device, which eliminates the equilibrium multiplicity. 

\subsection{The limiting case $\vartheta \rightarrow \infty$}\label{app:inf_vartheta}

For any $\vartheta>0$, let $\cZ^{(\vartheta)}$ be the unique threshold equilibrium as in Proposition \ref{prop:threshold_finite}. The following verification result states that if $\cZ^{(\vartheta)}$ converges to a threshold function $\cZ^{(\infty)}$ in a well behaved way, then $\cZ^{(\infty)}$ is  a threshold equilibrium for the limiting case.  

\begin{prop}\label{proposition intensity limit}
Let $\mathcal D = \{ (h,m) \in [0,1]^2 \, : \, m \leq h \}$ and suppose that: 
 \begin{enumerate}
\item $\cZ^{(\vartheta)}$ converges to some locally Lipschitz function  $\cZ^{(\infty)}$ locally uniformly in $\mathcal D $, as $\vartheta \rightarrow \infty$. 
 \item $V^{(\vartheta)}_H$  converges to some function $V^{(\infty)}_H$  locally uniformly in $ \mathcal D \times \mathbb{R}$ as $\theta \rightarrow \infty$. 
 \end{enumerate} 
 Then $\cZ^{(\infty)}$ is a threshold equlibrium in which each depositor can withdraw at any time.
\end{prop}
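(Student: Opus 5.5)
The plan is a verification argument. We pass to the limit $\vartheta\to\infty$ in the characterization of the finite-$\vartheta$ equilibria, then check that the limit objects satisfy the free-boundary (variational) problem for the frictionless stopping problem, with $m$ generated by the threshold $\cZ^{(\infty)}$ through \eqref{dyn:m_inf_del}.

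\textbf{Step 1: the finite-$\vartheta$ value function.} For finite $\vartheta$, write the dynamic programming equation for $V^{(\vartheta)}_H$ in the variables $(h,m,z)$. With Poisson withdrawal opportunities, the value satisfies, in the viscosity sense,
\[
0=\mathcal{L}^{(\vartheta)} V + r+\delta-\beta_H(z)-(1-\gamma)\eta m-(\nu+\beta_H(z)+\eta m)V+\vartheta\min(V,0)\cdot(-1),
\]
where $\mathcal{L}^{(\vartheta)}$ combines the generator of $Z$ with the drift of $h$ and the drift \eqref{dyn:m_fin_del} of $m$. The term $\vartheta\,(0\wedge V)$ enters with the sign that rewards stopping when $V<0$; in other words, it is the penalization $-\vartheta (V)^-$.

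\textbf{Step 2: pass to the limit.} This is a classical penalization scheme. Using the local uniform convergence in assumption 2 and the stability of viscosity solutions, I would first show that $V^{(\infty)}_H\ge 0$; otherwise the penalty would blow up. I would then show that $V^{(\infty)}_H$ solves the obstacle problem $\min\{-\mathcal{L}^{(\infty)}V-f+cV,\;V\}=0$.

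The delicate point is the drift of $m$. On $\{z<\cZ^{(\vartheta)}\}$ it is $\vartheta(h-m)$, which becomes singular. I plan to handle it by noting two facts:
\begin{itemize}
\item In the limit, the region $\{z<\cZ^{(\infty)}(h,m)\}$ is reached only on the set $m=h$, because $m$ jumps instantly to $h$ there.
\item Local Lipschitz continuity of $\cZ^{(\infty)}$ (assumption 1) gives convergence of the controlled paths $m^{(\vartheta)}\to m^{(\infty)}$. The limit $m^{(\infty)}$ follows \eqref{dyn:m_inf_del}, i.e. it is the running maximum type reflection/jump to $h$ when $Z$ crosses the threshold.
\end{itemize}
To make this rigorous I would work pathwise. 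Fix a path of $Z$. The ODE for $m^{(\vartheta)}$ is monotone in $\vartheta$ (comparison), so the limit exists. The Lipschitz threshold guarantees that the hitting set of $Z$ is stable.

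\textbf{Step 3: identify the threshold as the boundary.} Pass to the limit in $V^{(\vartheta)}_H(h,m,\cZ^{(\vartheta)}(h,m))=0$ using joint local uniform convergence, which gives $V^{(\infty)}_H(h,m,\cZ^{(\infty)}(h,m))=0$. By the monotonicity in $z$ from Lemma \ref{lem:VZ_mono}, which is preserved in the limit as weak monotonicity, the continuation region is $\{z>\cZ^{(\infty)}\}$.

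\textbf{Step 4: verification.} I would then run a standard verification argument, via Itô/Dynkin with localization or an approximation argument that avoids smoothness. It shows that $V^{(\infty)}_H$ equals the value of stopping at arbitrary stopping times, given the limiting $m$, and that the first hitting of $\{Z\le\cZ^{(\infty)}(h,m)\}$ is optimal. Hence the threshold strategy is a best response. Consistency of $m$ with \eqref{mfg:m_fin_del} holds by Step 2, so $\cZ^{(\infty)}$ is a threshold equilibrium.

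\textbf{Main obstacle.} I expect the main obstacle to be Step 2: controlling the singular drift and the jump in $m$, which enters the discount and the payoff. I would handle it by exploiting monotonicity of $m$ and dominated convergence on the integrals in \eqref{V_Z}.
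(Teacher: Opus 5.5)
\paragraph{Main gap: Step 3 does not show that $\cZ^{(\infty)}$ is the stopping boundary.}
Weak monotonicity of $z\mapsto V^{(\infty)}_H(h,m,z)$ together with $V^{(\infty)}_H(h,m,\cZ^{(\infty)}(h,m))=0$ only gives $V^{(\infty)}_H\ge 0$ above the threshold. It does not rule out $V^{(\infty)}_H(h,m,\cdot)$ staying at $0$ on an interval above $\cZ^{(\infty)}(h,m)$. In that case $\{V^{(\infty)}_H>0\}$ is strictly smaller than $\{z>\cZ^{(\infty)}\}$, and $\cZ^{(\infty)}$ is not the optimal stopping boundary.

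The strict monotonicity in Lemma \ref{lem:VZ_mono} comes from Poisson stopping times being a.s.\ strictly positive. That is exactly what degenerates as $\vartheta\to\infty$, so strictness does not pass to the limit for free. Your Step 4 inherits the problem, because the obstacle problem from Step 2 gives the linear equation only on $\{V^{(\infty)}_H>0\}$. Optimality of the first hitting time of $\{Z\le\cZ^{(\infty)}\}$ is therefore unsupported.

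The core of the paper's proof is precisely the missing estimate: for each $z>\cZ^{(\infty)}(h,m)$,
\[
V^{(\vartheta)}_H(h,m,z)-V^{(\vartheta)}_H\bigl(h,m,\cZ^{(\vartheta)}(h,m)\bigr)\ge c>0,
\]
with $c$ independent of all large $\vartheta$. The argument runs as follows.
\begin{enumerate}
\item Pick $\cZ^{(\infty)}(h,m)<z_1<z$. Regularity of $\cZ^{(\infty)}$ and local uniform convergence give a box around $(h,m,z_1)$ contained in $\{z'>\cZ^{(\vartheta)}(h',m')\}$ for all large $\vartheta$.
\item Before leaving this box, $m^{(\vartheta)}$ coincides with the purely exogenous path $m^0$. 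So the optimal Poisson time from $z_1$ is bounded below by the exit time of $(h_s,m^0_s,Z^{z_1}_s)$ from the box. This exit time is a.s.\ positive and does not depend on $\vartheta$.
\item Insert it into the payoff comparison behind Lemma \ref{lem:VZ_mono}, evaluated with the flow started from $z_1$. This is allowed since $m^{z}\le m^{z_1}$ and $V$ is nonincreasing in the withdrawn share. The result is a $\vartheta$-independent positive gap between $z$ and $z_1$.
\item Monotonicity covers the interval $[\cZ^{(\vartheta)}(h,m),z_1]$.
\end{enumerate}
This is where the regularity hypothesis on $\cZ^{(\infty)}$ is actually used, not for convergence of the $m$-paths as in your plan.

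\paragraph{Secondary, repairable gaps in Step 2.}
First, comparison for \eqref{dyn:m_fin_del} orders $m$ in $\vartheta$ only for a \emph{fixed} threshold. Here $\cZ^{(\vartheta)}$ moves with $\vartheta$ in an unknown direction, so $\vartheta\mapsto m^{(\vartheta)}$ need not be monotone. The paper uses compactness instead. Each $m^{(\vartheta)}$ is nondecreasing in time and bounded, so Helly's theorem gives a subsequential limit. Writing $dm^{(\vartheta)}_t=\nu(1-m^{(\vartheta)}_t)dt+dL^{(\vartheta)}_t$ with $0\le L^{(\vartheta)}\le 1$ then forces $\mathbbm{1}_{\{Z_t\le\cZ^{(\vartheta)}(h_t,m^{(\vartheta)}_t)\}}(h_t-m^{(\vartheta)}_t)\to0$ for a.e.\ $t$.

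Second, your assertions that the limit obeys \eqref{dyn:m_inf_del_app}, and that the stopping region is reached only on $\{m=h\}$, need an argument. In particular you must show that at an entry time the jump goes all the way to $h$. The paper gets this from monotonicity of $\cZ^{(\infty)}$ in $(h,m)$ and the law of the iterated logarithm, which makes $Z$ spend positive Lebesgue time below the threshold immediately after entry.

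Third, stability of viscosity solutions on the whole domain is not standard here. The operator $\mathcal L^{(\vartheta)}$ contains the drift $\vartheta\mathbbm{1}_{\{z\le\cZ^{(\vartheta)}\}}(h-m)\partial_m$, which is discontinuous and blows up with $\vartheta$. That route works cleanly only on compact subsets of $\{z>\cZ^{(\infty)}\}$, where for large $\vartheta$ both the penalty and the singular drift vanish.
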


As $\vartheta$ approaches infinity, the dynamics of $m^{\cZ}_t$ in \eqref{dyn:m_fin_del} becomes increasingly singular. Intuitively, for a given threshold $\cZ$, it converges to 
\begin{equation}
    \label{dyn:m_inf_del_app}
    \begin{aligned}
        dm_t^{\cZ} = \nu (1-m_t^{\cZ}) dt, \text{ when } Z_t > \cZ(h_t, m_t^{\cZ}), \quad \text{and}
        \quad m_t^{\cZ} = h_t, \text{ when } Z_t \leq \cZ(h_t, m_t^{\cZ}), 
    \end{aligned}     
\end{equation}
with $m^{\cZ}$ jumping from $m^{\cZ}_{\tau-}$ to $h_{\tau}$ at entry times $\tau$ into the withdrawal region
$\{(h, m, Z)\,:\, Z \leq \cZ (h,m^\cZ)\}$. This means that only exogenous withdrawal happens when the common state is higher than the threshold $\cZ$ and all remaining impatient depositors, with mass $h_{\tau} - m^{\cZ}_{\tau-}$, withdraw together once the threshold is hit at time $\tau$.  In the proof of  Proposition \ref{proposition intensity limit}, we prove that $m^{\cZ^{(\infty)}}$, associated with the limiting threshold $\cZ^{(\infty)}$, indeed follows the dynamics \eqref{dyn:m_inf_del_app}.

For the illustration in Figure \ref{fig:threshold_sim}, we need to first identify the equilibrium threshold $\cZ^{(\infty)}$. To this end, we first numerically identify the equilibrium threshold $\cZ^{(\vartheta)}$ for a finite $\vartheta$. When $\vartheta$ increases beyond a sufficiently high level $\hat{\vartheta}$, the updating of the threshold $\cZ^{(\vartheta)}$ stops, hence the value function $V^{(\vartheta)}_H$ remains the same among all $\vartheta \geq \hat{\vartheta}$. Therefore, we verify the convergence assumptions in Proposition \ref{proposition intensity limit} numerically. 

The limiting-existence statement in Proposition \ref{prop:threshold} follows from Proposition \ref{proposition intensity limit}; monotonicity follows from Proposition \ref{prop:threshold_finite} and convergence; and uniqueness follows from Proposition \ref{prop:uniqueness_vartheta_inf}, presented below.   

Is the limiting threshold equilibrium unique? To answer this question, we consider the case where the common state $Z$ is a drifted Brownian motion, i.e.,
\begin{equation}\label{Z:drift_BM}
Z^{t,z}_s = z + \mu (s-t) + \sigma (B^c_s - B^c_t), 
\end{equation}
where $\sigma>0$ and $B^c$ is a one-dimensional standard Brownian motion.\footnote{The dynamics \eqref{Z:drift_BM} ensures that $Z^{t,z}$ satisfies \eqref{ass:Z}. The uniqueness result presented later also works when $\mu$ and $\sigma$ are time-dependent, i.e., $Z^{t,z}_s = z + \int_t^s \mu(u) du + \int_t^s \sigma(u) dB^c_u$ with non-degenerate volatility $\inf_u|\sigma(u)|\geq \epsilon$ for some $\epsilon>0$.} 

For a given threshold $\cZ: \mathcal{D} \rightarrow \mathbb{R}$, consider the dynamics of $m^{\cZ}$ in \eqref{dyn:m_inf_del_app}. If $Z$ starts from $z$ at $t$, we denote by $m^{z,\cZ}$ to highlight the initial condition of $Z$. Recall that $h$, the sum of withdrawn share and the mass of remaining impatient depositors, follows the dynamics
\begin{equation}\label{dyn:h}
dh_t = (\nu + \theta_{LH}) (1-h_t) dt.
\end{equation}
Consider the optimal stopping problem 
\begin{equation}
    \label{V_Z_inf}
V_H^{\cZ}(h,m,z) := \sup_{\tau} \mathbb{E}_{h, m, z}\Big[\int_t^{\tau} e^{-\int_t^s (\nu + \beta_H(Z^z_u) + \eta m^{z,\cZ}_u) du} \Big(r+\delta - \beta_H(Z^z_s) - (1-\gamma) \eta m^{z,\cZ}_s \Big) ds \Big], 
\end{equation}
where $\tau$ can be any stopping time, not necessarily a Poisson arrival time $\{\tau_n\}_n$. The threshold $\cZ$ corresponds to an equilibrium if $\cZ$ is also the optimal stopping boundary of \eqref{V_Z_inf}, i.e.,
\[
\cZ(h,m) = \inf\{z \,:\, V^{\cZ}_H(h,m,z) =0\}, \quad \text{for any } 0\leq m\leq h\leq 1.
\]

Next, we present our result on the uniqueness of threshold equilibria in the case where depositors can withdraw anytime. Let $\cZ$ be a threshold equilibrium. Given the dynamics of $m^{\cZ}$, $Z$, and $h$ in \eqref{dyn:m_inf_del_app}, \eqref{Z:drift_BM}, and \eqref{dyn:h}, we expect that $V^{\cZ}_H$ satisfies a free boundary problem: The state space is split into the continuation region $\mathcal{C}$ and the stopping region $\mathcal{S}$:
\[
\mathcal{C} = \{(h,m,z)\,:\, z > \cZ(h,m)\} \quad \text{and} \quad \mathcal{S} = \{(h,m,z)\,:\, z \leq  \cZ(h,m)\}.
\]
On $\mathcal{C}$, $V^{\cZ}_H$ satisfies the following equation
\begin{equation}\label{PDE:V_H}
    \big(\mathcal{L} - q(m,z) \big) V^{\cZ}_H(h,m,z) + f(z,m) =0,
\end{equation}
where 
\begin{equation}\label{def:cL}
\begin{split}
 \mathcal{L} := & (\theta_{LH} +\nu) (1-h) \partial_h + \nu(1-m) \partial_m + \mu \partial_z + \frac12 \sigma^2 \partial^2_{zz},\\
 q(z,m) := & \nu + \beta_H(z) + \eta m,\\
 f(t,m) := & r+\delta - \beta_H(z) - (1-\gamma) \eta m.
\end{split}
\end{equation}
On $\mathcal{S}$, $V^{\cZ}_H \equiv 0$.
At the optimal stopping boundary, $\cZ(h,m)$, we also expect that $V^{\cZ}_H$ satisfies the smooth-pasting condition 
\begin{equation}
    \label{V_smooth_pasting}
    \lim_{z\downarrow \cZ(h,m)} \partial_z V^{\cZ}_H(h,m,z) = \lim_{z\uparrow \cZ(h,m)} \partial_z V^{\cZ}_H(h,m,z) =0, \quad \text{for any } 0\leq m\leq h\leq 1.
\end{equation}

Now we are ready to state the uniqueness result in the case where depositors can withdraw at an arbitrary stopping time. 

\begin{prop}\label{prop:uniqueness_vartheta_inf}
 There exists at most one threshold equilibrium which satisfies the following regularity conditions: (i) $V^{\cZ}_H \in C^{1,1,2}(\mathcal{C})$ satisfies the equation \eqref{PDE:V_H} on $\mathcal{C}$ and the smooth-pasting condition \eqref{V_smooth_pasting}, and (ii) $\cZ(h,m)$ is locally Lipschitz in $(h,m)$. 
\end{prop}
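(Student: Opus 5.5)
The plan is a Frankel--Pauzner style translation argument, with the final contradiction supplied by a strong maximum principle (Hopf lemma) in the $z$-direction instead of by strict monotonicity of the value along the boundary. Suppose $\cZ_1$ and $\cZ_2$ are two threshold equilibria satisfying (i)--(ii), with $\cZ_1\neq\cZ_2$. Without loss of generality $d:=\sup_{\mathcal D}(\cZ_1-\cZ_2)>0$. Since $\mathcal D$ is compact and both thresholds are locally Lipschitz, hence continuous, I will argue that the supremum is attained at some $(h^*,m^*)\in\mathcal D$. If continuity up to the corner $h=1$ is a problem, I would first restrict to a compact subdomain and then pass to the limit. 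Define the shifted threshold $\cZ_2+d\ge\cZ_1$, which touches $\cZ_1$ at $(h^*,m^*)$, and set
\[
U(h,m,z):=V^{\cZ_2}_H(h,m,z-d).
\]
By \eqref{Z:drift_BM}, $Z$ has translation-invariant increments. Hence $U$ is the value of the stopping problem in which the state is $z$, the threshold governing $m$ is $\cZ_2+d$, and the discount and flow are evaluated at $z-d$. In particular $U$ vanishes on $\{z\le \cZ_2+d\}$, and on $\{z>\cZ_2+d\}$ it satisfies $(\mathcal L-q(z-d,m))U+f(z-d,m)=0$.

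The first step is a global comparison $V^{\cZ_1}_H\ge U\ge 0$. This follows the reasoning behind Lemma \ref{lem:VZ_mono}, using the limiting dynamics \eqref{dyn:m_inf_del_app}. Since $\cZ_1\le\cZ_2+d$, pathwise the process $m$ generated by $\cZ_1$ jumps to $h$ no earlier than the one generated by $\cZ_2+d$. Both processes follow the same exogenous drift $\nu(1-m)$ elsewhere, so $m^{\cZ_1}\le m^{\cZ_2+d}$. In addition, $\beta_H(z)\le\beta_H(z-d)$ because $\beta_H$ is decreasing. Thus the problem defining $V^{\cZ_1}_H$ has a larger flow payoff $f$ and a smaller killing rate $q$ for every stopping rule. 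It also has nonnegative value, since stopping immediately is feasible. Hence $V^{\cZ_1}_H\ge U$. At the touching point both functions vanish at $z^*:=\cZ_1(h^*,m^*)=\cZ_2(h^*,m^*)+d$.

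The second step is the PDE inequality. Set $W:=V^{\cZ_1}_H-U\ge0$ on the continuation region $\mathcal C_1=\{z>\cZ_1\}$. This region contains $\{z>\cZ_2+d\}$, where both equations hold. On the strip $\cZ_1<z\le\cZ_2+d$ we have $U=0$ and $W=V^{\cZ_1}_H$. A direct computation gives, on $\{z>\cZ_2+d\}$,
\[
(\mathcal L-q(z,m))W=-\big(f(z,m)-f(z-d,m)\big)-\big(q(z-d,m)-q(z,m)\big)U\le 0,
\]
with strict inequality wherever $\beta_H(z-d)>\beta_H(z)$. On the strip the same computation gives $(\mathcal L-q)W=-f(z,m)$, which is negative there because these are states where stopping is optimal under the other threshold. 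So $W$ is a nonnegative supersolution that vanishes at the boundary point $(h^*,m^*,z^*)$. By the smooth-pasting condition \eqref{V_smooth_pasting} for both equilibria, $\partial_zW(h^*,m^*,z^*+)=0$.

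The final step, and the main obstacle, is to apply a Hopf boundary lemma at $(h^*,m^*,z^*)$ to contradict $\partial_zW=0$. The operator $\mathcal L$ is uniformly elliptic only in $z$ ($\sigma>0$) and is first-order transport in $(h,m)$. I would therefore treat the characteristic flow $\dot h=(\theta_{LH}+\nu)(1-h)$, $\dot m=\nu(1-m)$ as a time variable and use a parabolic Hopf lemma along characteristics. The required interior-ball (paraboloid) condition at the boundary point comes from the local Lipschitz property (ii) of $\cZ_1$, via a barrier of the form $e^{-\kappa(z-z^*)}$-type functions adapted to a Lipschitz cone. The strong maximum principle then yields two alternatives. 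Either $W\equiv0$ on the backward-connected component, which is impossible because the strict inequality above forces $(\mathcal L-q)W<0$. Or $\partial_zW(h^*,m^*,z^*+)>0$, which contradicts smooth pasting. Hence $d\le0$. Exchanging the roles of $\cZ_1$ and $\cZ_2$ gives $\cZ_1=\cZ_2$. The delicate points I expect are constructing the barrier under mere Lipschitz regularity of the boundary in the degenerate directions, and handling the case where the touching point lies on $\partial\mathcal D$, for instance $m=h$ or $h=1$.
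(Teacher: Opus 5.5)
Your overall route is the paper's. You shift $\cZ_2$ up by $d=\sup_{\mathcal D}(\cZ_1-\cZ_2)$ so it touches $\cZ_1$ at $(h^*,m^*)$, and compare $V_H^{\cZ_1}$ with the translated value $U(h,m,z)=V_H^{\cZ_2}(h,m,z-d)$. You then derive the strict inequality $(\mathcal L-q(z,m))W=(\beta_H(z)-\beta_H(z-d))(1+U)<0$ on $\{z>\cZ_2+d\}$. Finally, you rewrite the equations as parabolic PDEs along the characteristics $(h^0_t,m^0_t)$ and contradict smooth pasting with a Hopf-type estimate. The paper gets $W>0$ on $\{z>\cZ_2+d\}$ directly, by running both problems along the common optimal stopping time that translation invariance provides. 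Deducing it from $W\ge 0$ plus the strict supersolution inequality is an acceptable substitute.

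\textbf{The strip step is wrong, and your Hopf step depends on it.} You claim $W$ is a supersolution on the strip $\cZ_1<z\le \cZ_2+d$ and apply Hopf on $\mathcal C_1$ with boundary $\cZ_1$. On the strip $W=V_H^{\cZ_1}$, so $(\mathcal L-q)W=-f(z,m)$, and nothing makes $f(z,m)>0$ inside a continuation region. At $z=\cZ_1^+$ you have $V_H^{\cZ_1}=\partial_zV_H^{\cZ_1}=0$ (so tangential derivatives vanish too) and $V_H^{\cZ_1}\ge 0$. The equation then forces $\tfrac12\sigma^2\partial_{zz}V_H^{\cZ_1}=-f\ge0$ on the free boundary. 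So typically $f<0$ next to $\cZ_1$, and $W$ is a \emph{sub}solution there. Your heuristic, that stopping is optimal for the other problem, would give that opposite sign anyway, and it concerns $f(z-d,\cdot)$, not $f(z,m)$.

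\textbf{The fix is the paper's choice of domain.} Work on $\Xi=\{z\ge \cZ_2(h^0_t,m^0_t)+d\}$, where both equations hold and the strict inequality is available. The point $(0,z^*)$ is still a boundary zero of $W$. Smooth pasting for both equilibria still gives $\partial_zW(h^*,m^*,z^*)=0$. The relevant boundary is then $b(t)=\cZ_2(h^0_t,m^0_t)+d$, not $\cZ_1$.

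\textbf{Local Lipschitz continuity of $b$ gives no interior ball or paraboloid at $(0,z^*)$.} A corner $b(t)\approx z^*+Lt$ already excludes any half-ball, so no ready-made Hopf lemma applies. This is why the paper proves Lemma \ref{lem:hatW_z} directly. In the moving frame $Y_s=Z_s-b(s)$ with $|\mu-\dot b|\le K$:
\begin{itemize}
\item the barrier $\psi(x)=(e^{\alpha x}-1)/(e^{\alpha r}-1)$ with $\alpha=2K/\sigma^2$ gives $\P(\tau_r<\tau_0)\ge c_0(r)\varepsilon$;
\item the function $x(r-x)$ gives $\E[\tau_0\wedge\tau_r]\le 2r\varepsilon/\sigma^2$ once $rK\le\sigma^2/2$, which controls $\P(\tau_0\wedge\tau_r\ge\delta)$;
\item combined with $W\ge a_r>0$ on $\{z=b(s)+r\}$, this yields $W(h^*,m^*,z^*+\varepsilon)\ge C\varepsilon$ for suitably small $r$.
\end{itemize}
Your barrier adapted to a Lipschitz cone is in this spirit, but this is the part that actually needs proof.

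\textbf{Your comparison $V_H^{\cZ_1}\ge U$ is correct, but the stated reason is not.} ``Larger $f$, smaller $q$'' does not imply a larger value when $f$ can be negative, since less killing amplifies negative flows. You need the integration-by-parts form $F$ in \eqref{def:F}, whose running payoff $r+\delta+(1-\gamma)\nu-\gamma\beta_H$ is nonnegative by \eqref{ass:Rmin}, as in Lemma \ref{lem:VZ_mono}.

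Your worries about attaining the supremum and about touching points on $\partial\mathcal D$ are not real obstacles. $\mathcal D$ is compact, and the argument only uses forward times along the characteristic, which stay in $\mathcal D$.
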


The uniqueness result in Proposition \ref{prop:uniqueness_vartheta_inf} extends the proof of \cite{FrankelPauzner2000} to the case where the stopping time can be chosen arbitrarily, not necessarily a Poisson arrival time. When the stopping time can only be a Poisson arrival time, $V^{\cZ}_H$ is strictly increasing in $z$ (see Lemma \ref{lem:VZ_mono}) and any stopping time $\tau \in \{\tau_n\}_n$ must satisfy $\mathbb{P}(\tau >0) =1$, because the probability for a Poisson time to arrive immediately is zero. When depositors can choose an arbitrary stopping time, $V^{\cZ}_H \equiv 0$ on $\mathcal{S}$. Hence $V^{\cZ}_H$ is not strictly increasing in $z$ in the entire domain. Moreover, stopping immediately, i.e., $\mathbb{P}(\tau =0)=1$, is an admissible choice for depositors. We extend the argument of \cite{FrankelPauzner2000} by incorporating a strong form of maximum principle to prove Proposition \ref{prop:uniqueness_vartheta_inf}.

For the regularity assumptions, given that the dynamics of $h$ and $m$ are deterministic, equation \eqref{PDE:V_H} can be transformed into a parabolic partial differential equation (PDE). The assumption of $\sigma>0$ and the regularity theory for parabolic PDE allows us to weaken the regularity assumption of  $V^{\cZ}_H$ to the continuity of $V^{\cZ}_H$, i.e., $V_H^{\cZ}(\mathcal{C})$, see Remark \ref{rem:VH_regularity} in Section \ref{app:inf_vartheta}. The local Lipschitz regularity of the optimal threshold is more challenging to prove. For the American put option problem, the regularity of the optimal exercise boundary has been established in the Black-Scholes setting using PDE argument by \cite{chen2007mathematical}. Due to these technical challenges, we leave the existence of threshold equilibria satisfying the regularity conditions in Proposition \ref{prop:uniqueness_vartheta_inf} as a future research question.

\section{Proofs}\label{app:proof}

\subsection{Proof of Lemma \ref{lem:obj}}

We decompose the objective function in \eqref{def:dep_V} into four terms:
\begin{equation}\label{V_decomp}
\begin{split}
&\mathbb{E}_t\Big[\int_t^{\tau_\nu \wedge \tau_w \wedge t_\eta} e^{- \int_t^s \beta_u du} (r+\delta) \,ds + e^{-\int_t^{\tau_\nu } \beta_u du} 1_{\{\tau_\nu < \tau_w \wedge t_\eta\}} + e^{-\int_t^{\tau_w} \beta_u du} 1_{\{\tau_w < \tau_{\nu} \wedge t_\eta\}} \\
& + e^{-\int_t^{t_\eta} \beta_u du} \gamma  1_{\{t_\eta < \tau_\nu \wedge \tau_w \}}\Big] =: \text{I} + \text{II} + \text{III} + \text{IV}. 
\end{split}
\end{equation}
We compute each term separately. For I, due to the independence between $\nu$ and $t_\eta$, we have $\mathbb{P}(\tau_{\nu}\wedge t_{\eta} > s\,|\, \tau_{\nu} \wedge t_{\eta} > t) = e^{-\int_t^s (\nu + \eta m_u) du}$ when $s>t$. Therefore, 
\begin{align*}
\text{I} = &\mathbb{E}_t \Big[\int_t^{\tau_{\nu} \wedge \tau_w \wedge t_\eta} e^{-\int_t^s \beta_u du}(r+\delta)\, ds\Big] = \mathbb{E}_t \Big[\int_t^{\tau_w} e^{-\int_t^s \beta_u du} (r+\delta) 1_{\{s< \tau_{\nu} \wedge t_\eta\}} \,ds\Big] \\
= & \mathbb{E}_t \Big[\int_t^{\tau_w}e^{-\int_t^s (\nu + \beta_u + \eta m_u) du} (r+\delta) \, ds\Big].
\end{align*}
For II, note that $\mathbb{P}(\tau_\nu \in [s, s+ds) \,|\, \tau_\nu \wedge t_\eta >t)= e^{-\int_t^s \nu du} \nu ds$. Then
\[
\text{II} = \mathbb{E}_t \Big[\int_t^{\tau_w \wedge t_\eta} e^{-\int_t^s (\nu+\beta_u) du} \nu ds\Big] = \mathbb{E}_t \Big[\int_t^{\tau_w } e^{-\int_t^s (\nu+\beta_u + \eta m_u) du} \nu ds\Big],
\]
where the second equality follows from the same argument leading for I. For III, 
\[
\text{III} = \mathbb{E}_t \Big[e^{-\int_t^{\tau_w} (\nu + \beta_u + \eta m_u) du} \Big]  = 1 - \mathbb{E}_t \Big[\int_t^{\tau_w} e^{-\int_t^s (\nu + \beta_u + \eta m_u) du} (\nu + \beta_s + \eta m_s) ds\Big].
\]
Finally, note that $\mathbb{P}(t_\eta \in [s, s+ds)\,|\, \tau_\nu \wedge t_\eta >t) = e^{-\int_t^s \eta m_u du} \eta m_s ds$. Then
\[
\text{IV} = \gamma \mathbb{E}_t \Big[\int_t^{\tau_\nu \wedge \tau_w} e^{-\int_t^s (\beta_u + \eta m_u) du} \eta m_s \, ds\Big] = \gamma \mathbb{E}_t \Big[\int_t^{\tau_w} e^{-\int_t^s (\nu + \beta_u + \eta m_u) du} \eta m_s \, ds\Big]
\]
Combining the previous equations, we obtain the decomposition in \eqref{pro:os}.

\subsection{Proof of Lemma \ref{lemma monotonicity of brm}}

The subsequent proof of Lemma \ref{lemma monotonicity of brm} is based on an application of the Bank–El Karoui representation theorem, introduced in \cite{bank2004stochastic} and employed in \cite{BankFoellmer} for the study of optimal stopping problems and American options; see in particular \cite[Theorem 2]{BankFoellmer}. Importantly, the Bank-El Karoui approach provides a handy representation of the latest optimal stopping time (see \eqref{eq:maxminOStimes} below), which, by the general theory of optimal stopping, is given in terms of the first time of increase of the compensator in the Doob-Meyer decomposition of the Snell envelope -- an object that is typically technically challenging to characterize. In the proof of Lemma \ref{lemma monotonicity of brm}, this representation \eqref{eq:maxminOStimes} will play an important role.

We first briefly recall the Bank–El Karoui representation theorem result in relation to optimal stopping problems. Let $T\in (0,\infty]$ be a given time horizon and $(\Omega,\mathcal{F}, \mathbb{F}:=(\mathcal{F}_t)_{t\in [0,T]}, \mathbb{P})$ be a complete probability space. Consider the (not necessarily Markovian) optimal stopping problem
\begin{equation}
    \label{eq:OSgeneral}
    \sup_{\tau \in [0,T]}\mathbb{E}[X_{\tau}],
\end{equation}
for an optional process $X$ which is of class (D)\footnote{A stochastic process $X$ is said to be of class (D) if the family of random variables
$\{ X_\tau : \tau \text{ is a bounded stopping time} \}$
is uniformly integrable.} and upper semicontinuous in expectation, and where the optimization is performed over the set of $\mathbb{F}$-stopping times valued in $[0,T]$ $\mathbb{P}$-a.s. 

Suppose that the process $X$ admits the probabilistic representation
\begin{equation}
    \label{eq:BEK-thm}
    X_{\tau} = \mathbb{E}\bigg[\int_{(\tau,T]} \sup_{v \in [\tau,t)}\xi_v \mu(dt)\bigg],
\end{equation}
for some nonnegative, optional random measure $\mu$ on $([0,T], \mathcal{B}([0,T]))$ and
some progressively measurable process $\xi$ with upper-right continuous paths such that
$$\sup_{v \in [\tau(\omega),t)}\xi_v(\omega) \mathbbm{1}_{[\tau(\omega),T]}(t) \in L^1\big(\mathbb{P}(d\omega)\times \mu(\omega,dt)\big).$$

Then, the level passage times
\begin{equation}
\label{eq:maxminOStimes}
    \underline{\tau}:=\inf\{t\geq0: \xi_t \geq 0\} \wedge T, \quad \overline{\tau}:=\inf\{t\geq0: \xi_t > 0\} \wedge T,
\end{equation}
(with the usual convention $\inf \emptyset = + \infty)$ maximize the reward $\mathbb{E}[X_{\tau}]$. Furthermore, $\underline{\tau}$ is the minimal optimal stopping time, while $\overline{\tau}$ is the maximal optimal stopping time.

This approach is based on a representation of the underlying optional process $X$ in terms of the running supremum of another process $\xi$, which is to be determined. The works \cite{bank2001optimal,BankFoellmer,ferrari2015integral,ferrari2016irreversible} provide instances in which the process $\xi$ can be determined explicitly. This is typically the case in a one-dimensional, stationary, time-homogeneous setting. The process $\xi$ takes over the role of the Snell envelope and allows one to characterize optimal stopping times by means of a level-crossing principle. In \cite{ferrari2015integral} and \cite{ferrari2016irreversible}, it is shown how, in Markovian settings, the process $\xi$ is closely related to the free boundary arising from a PDE analysis of the problem and triggering the optimal stopping rule.

We now use the Bank-El Karoui representation to prove Lemma \ref{lemma monotonicity of brm}.

\begin{proof}
 Without loss of generality, we take $t=0$ in \eqref{pro:os} and \eqref{def:hat_J}. Then, by using the strong Markov property, it follows that
\begin{eqnarray}
\label{eq:newrepr}
&& \sup_{\tau_w \geq 0}\mathbb{E} \Big[\int_0^{\tau_w} e^{-\int_0^s (\nu + \beta_u + \eta m_u) du} \Big(r+\delta - \beta_s -  (1-\gamma) \eta m_s \Big) ds\Big] \nonumber\\
&& = G(0, y, x) + \sup_{\tau_w \geq 0}\mathbb{E}\Big[ - e^{-\int_0^{\tau_w} (\nu + \beta(X^{c,0,y}_u, X^{0,x}_u) + \eta m_u) du} G(\tau_w, X^{c, 0, y}_{\tau_w}, X^{0,x}_{\tau_w})\Big],
\end{eqnarray}
where $X^{c,0,y}$ and $X^{0,x}$ are the common and private state processes which start from $y$ and $x$ at time $0$, respectively. In the previous equation, for $t\geq0$ and $x, y\in \mathbb{R}$, we have set
$$G(t,y,x):=\mathbb{E} \Big[\int_t^{\infty} e^{-\int_t^s (\nu + \beta(X^{c,t,y}_u, X^{t,x}_u) + \eta m_u) du} \Big(r+ \delta - \beta(X^{c,t,y}_s, X^{t,x}_s) -  (1-\gamma) \eta m_s \Big) ds\Big].$$

Define now the progressively-measurable process
\begin{equation}
Y_t:= - e^{-\int_0^{t} (\nu + \beta(X^{c,0,y}_t, X^{0,x}_t) + \eta m_u) du} G(t, X^{c,0,y}_t, X^{0,x}_{t}),
\end{equation}
with the convention
$$Y_{\tau}:=\limsup_{t \uparrow \infty}\Big(- e^{-\int_0^{t} (\nu + \beta(X^{c,0,y}_t, X^{0,x}_t) + \eta m_u) du} G(t, X^{c,0,y}_t, X^{0,x}_{t})\Big) = 0 \quad \text{on} \quad \{\tau=+\infty\}.$$
The fact that the limit superior above is zero is due to the boundedness of $\beta$, which in turn yields boundedness of $G$.
This latter property of $G$ then gives boundedness of $Y$ (uniformly in $(\omega,t)\in \Omega \times \mathbb{R}_+$) and hence the fact that $Y$ is of class (D) and it is lower-semicontinuous (actually continuous) in expectation.

Defining the nonnegative Borel measure on $[0,\infty)$
$$\mu(dt):= (\nu + \beta_t + \eta m_t) e^{-\int_0^{t} (\nu + \beta_u + \eta m_u) du} dt,$$ 
it then follows by Theorem 3 in \cite{bank2004stochastic} (see also Theorem 2.2 in \cite{BankFoellmer}) that, for any given stopping time $\tau$, the representation problem
\begin{equation}\label{def:-Y}
-Y_{\tau} = \mathbb{E}\bigg[\int_{\tau}^{\infty} \Big(-\sup_{\tau \leq s \leq t}\xi_s\Big) \mu(dt)\,\Big|\,\mathcal{F}_{\tau}\bigg]
\end{equation}
admits a unique progressively measurable upper right-continuous solution $\xi$. Furthermore, due to Theorem 1 in \cite{bank2004stochastic}, $\xi$ is such that for any two stopping times $S, \zeta$ with $S\leq \zeta$
\begin{equation}
\label{eq:repr-xi}
\xi_S= \text{essinf}_{\zeta \geq S} \ell_{S,\zeta}, 
\end{equation}
being the random variable $\ell_{S,\zeta}$ is uniquely given (up to optional sections) by
\begin{equation}
    \label{eq:rvell}
    \ell_{S,\zeta}=-\frac{\mathbb{E}\bigg[\int_S^{\zeta} e^{-\int_S^u (\nu + \beta_{\alpha} + \eta m_{\alpha}) d\alpha} \Big(r+\delta  - \beta_u -  (1-\gamma) \eta m_{u} \Big) du\, \Big|\, \mathcal{F}_S\bigg]}{\mathbb{E}\Big[1 - e^{-\int_S^{\zeta} (\nu + \beta_{\alpha} + \eta m_{\alpha}) d\alpha} \, \Big|\, \mathcal{F}_S\Big]}.
\end{equation}

Observing that by an integration by parts
\begin{align*}
 \int_S^{\zeta} e^{-\int_S^u (\nu + \beta_{\alpha} + \eta m_{\alpha}) d\alpha} \eta m_u du = 1 - e^{-\int_S^{\zeta} (\nu + \beta_{\alpha} + \eta m_{\alpha}) d\alpha} - \int_S^{\zeta} e^{-\int_S^u (\nu + \beta_{\alpha} + \eta m_{\alpha}) d\alpha} (\nu + \beta_u) du,
\end{align*}
and plugging the previous equation into \eqref{eq:rvell}, we transform \eqref{eq:rvell} to the following equivalent form
\begin{equation}
\label{eq:rvell-equiv}
\ell_{S,\zeta}= (1-\gamma) -\frac{\mathbb{E}\bigg[\int_S^{\zeta} e^{-\int_0^u (\nu + \beta_u + \eta m_{\alpha}) d\alpha} \Big(r+\delta + (1-\gamma) \nu - \gamma \beta_u\Big) du \, \Big|\, \mathcal{F}_S \bigg]}{\mathbb{E}\Big[1 - e^{-\int_S^{\zeta} (\nu + \beta_u + \eta m_{\alpha}) d\alpha} \, \Big|\, \mathcal{F}_S \Big]}.
\end{equation}

We now aim at studying the monotonicity of $\ell_{S,\tau}$ with respect to the given and fixed nondecreasing function $m$ appearing on the right-hand side of \eqref{eq:rvell-equiv}. Hence, for an arbitrary function $m$, we denote the random variable given through \eqref{eq:rvell-equiv} by $\ell^{(m)}_{S,\tau}$. For $\overline{m}$ and $m$ nondecreasing right-continuous functions such that $\overline{m}_s \leq m_s$ for almost every $s\geq0$, by employing that $r+\delta + (1-\gamma) \nu -\gamma \beta_{\text{max}} \geq 0$ by assumption, \eqref{eq:rvell-equiv} gives that $\ell^{(m)}_{S,\tau} \geq \ell^{(\overline{m})}_{S,\tau}$. Because of \eqref{eq:repr-xi}, now evaluated at the deterministic time $S=s$, we thus have $\xi^{(m)}_s \geq \xi^{(\overline{m})}_s$ for any $s\geq0$ $\mathbb{P}$-almost surely. 

We thus conclude by Theorem 1.3 in \cite{BankFoellmer} that for the earliest optimal stopping time $\underline{\tau}_w$ and the latest optimal stopping time $\overline{\tau}_w$, for which we now stress the dependence with respect to $m$, we have
\begin{equation}
    \label{eq:earliest}
 \underline{\tau}_w(m)=\inf\{s\geq0: \xi^{(m)}_s \geq 0\} \leq \inf\{s\geq0: \xi^{(\overline{m})}_s \geq 0\} =  \underline{\tau}_w(\overline{m}),
\end{equation}
and
\begin{equation}
    \label{eq:latest}
 \overline{\tau}_w(m)=\inf\{s\geq0: \xi^{(m)}_s > 0\} \leq \inf\{s\geq0: \xi^{(\overline{m})}_s > 0\} =  \overline{\tau}_w(\overline{m}).
\end{equation}
The proof is thus complete.
\end{proof}

\subsection{Proof of Proposition \ref{theorem existence minimal maximal MFGE}}

   The set of (possibly infinite) stopping times $\mathcal T$ is a complete lattice. For a given stopping time $\tau$, define $m_t(\tau) := \mathbb{P}(\tau_{\nu} \wedge \tau \leq t\,|\, \mathcal{F}^Z_t)$ for any $t\geq 0$, where $\tau_{\nu}$ is an independent exponential random variable with the parameter $\nu$. Introduce the best response map: 
    \begin{equation}
        \label{def:resp}
        \mathcal{R}(\tau) := \tau_w\big(m(\tau)\big),
    \end{equation}
    i.e., the set-valued optimal withdrawal time for the representative depositor. 
    Note that $\tau \mapsto m(\tau)$ is nonincreasing map.  
    It follows from Lemma \ref{lemma monotonicity of brm} that $\mathcal{R}$ is a monotone nondecreasing map from $\mathcal{T}$ to itself. By Tarski fixed point theorem  (see \cite{tarski1955lattice}), the set of fixed points of $\mathcal{R}$ is a non-empty complete lattice, so that there exist minimal and maximal fixed points of $\mathcal{R}$, respectively denoted by $\underline \tau_w$ and $\overline \tau_w$. Moreover, any other fixed point $\hat{\tau}_w$ of $\mathcal{R}$ must satisfy $\underline \tau_w \leq \hat \tau_w \leq  \overline \tau_w$. Because $\tau \mapsto m(\tau)$ is nonincreasing. Therefore, $\overline m _{t} \leq \hat m _t \leq \underline m_{t}$ for any $t\geq 0$. Finally, the statement of the proposition follows from the one-to-one correspondence between fixed points of the map $\mathcal{R}$ and equilibria introduced in Definition \ref{def:mfe}. Indeed, $(\hat \tau, \hat m)$ is a mean-field equilibrium if and only if $\hat \tau \in R (\hat \tau )$ and $\hat m_t = \mathbb P (\tau_{\nu} \wedge \hat \tau \leq t \,|\, \mathcal{F}^Z_t)$. 

\subsection{Proof of Proposition \ref{prop comparative statics strength parameter}}
 
To stress the dependence on the parameters $\eta, \widehat{\eta}$ of the profit functional $J$ defined in \eqref{def:hat_J}, we will write $J^\eta$ and $J^{\widehat{\eta}}$. Recall the monotone schemes introduced in Appendix \ref{app:m_schemes}.
Given a generic initialization $m^{(0)}$, denote by $(\tau_w^{(n)},m^{(n)})$ the schemes with inizialization $m^{(0)}$ and parameter $\eta$, and by $(\widehat{\tau}_w^{(n)},\widehat{m}^{(n)})$ the schemes with inizialization $m^{(0)}$ and parameter $\widehat{\eta}$.

Since $\eta \, m^{(0)}_t \leq \widehat{\eta} \, m^{(0)}_t$ for all $t$, the monotonicity of the best response map in Lemma \ref{lemma monotonicity of brm} implies that 
$$
\tau_w^{(1)} = \argmax J^\eta (t, Z_t, X_t; \cdot, m^{(0)})  \geq \argmax J^{\widehat{\eta}} (t, Z_t, X_t; \cdot, m^{(0)}) = \widehat{\tau}_w^{(1)},
$$
as well as
$$
m^{(1)} _t = \mathbb P \big(\tau_{\nu}\wedge\tau^{(1)}_w \leq t\,|\, \mathcal{F}^Z_t\big) 
\leq  \mathbb P \big(\tau_{\nu} \wedge \widehat{ \tau}^{(1)}_w \leq t\,|\, \mathcal{F}^Z_t\big) = \widehat{m}^{(1)}_t.
$$
For the second step, the latter inequality in turn implies that $\eta \, m^{(1)}_t \leq \widehat{\eta} \, \widehat{m}^{(1)}_t$ for all $t$.
Thus, again from the monotonicity of the best response map  we obtain $\tau_w^{(2)} \geq \widehat{\tau}_w^{(2)}$. 
Proceeding by induction, we obtain
$$
\tau_w^{(n)} \geq \widehat{\tau}_w^{(n)}
\quad \text{and}\quad
m^{(n)} \leq \widehat{m}^{(n)}.
$$
If now $m^{(0)}_t =1$ for any $t$, we can use the first statement in Proposition \ref{prop:convergence} to take limits in the previous inequality and deduce that
$$
\underline{\tau}_w^{\eta} \geq \underline{\tau}_w ^{\widehat{\eta}}
\quad \text{and}\quad
\underline m ^{\eta} \leq \underline m^{\widehat{\eta}}.
$$
Similarly, when  $m^{(0)}_t =0$ for any $t$, we obtain
$$
\overline{\tau}_w^{\eta} \geq \overline{\tau}_w ^{\widehat{\eta}}
\quad \text{and}\quad
\overline m ^{\eta} \leq \overline m^{\widehat{\eta}}.
$$

\subsection{Proof of Proposition \ref{prop:Poisson}}

For a given withdrawn share $m$, define the value function of a depositor as   
\[
V(t, X_t; m) -1  = \sup_{\tau_w} J(t, X_t; m, \tau_w),
\]
where the objective function $J$ is defined in \eqref{def:hat_J} without the common state $Z$. We denote $V_L(t;m):= V(t, \beta_L; m)-1 $ and $V_H(t; m) = V(t, \beta_H; m)-1$. Let $\tau_\theta$ be the arrival time for the discount rate shock. It follows from the dynamic programming principle and the exponential distribution of $\tau_\theta$ that
\begin{align}
V_L(t; m) = & \sup_{\tau_w} \mathbb{E}_t \Big[\int_t^{\tau_w \wedge \tau_\theta} e^{-\int_t^s (\nu + \beta_L  + \eta m_u) du} \big(r+\delta - \beta_L - (1-\gamma) \eta m_s\big) ds \nonumber\\
& \qquad + e^{-\int_t^{\tau_\theta}(\nu + \beta_L + \eta m_u) du} V_H (\tau_\theta; m) 1_{\{\tau_w \geq \tau_\theta\}}\Big] \nonumber\\
= & \sup_{\tau_w} \int_t^{\tau_w} e^{-\int_t^s (\nu + \beta_L  + \theta_{LH} + \eta m_u ) du} \big(r+\delta -\beta_L - (1-\gamma) \eta m_s\big) ds \nonumber\\
& \qquad + \int_t^{\tau_w} e^{-\int_t^s (\nu + \beta_L + \theta_{LH} + \eta m_u) du} \theta_{LH} e^{-\int_t^s \theta_{LH} du} V_H(s; m) ds \nonumber\\
= & \sup_{\tau_w} \int_t^{\tau_w} e^{-\int_t^s (\nu + \beta_L + \theta_{LH} + \eta m_u) du} \big(r+\delta - \beta_L - (1-\gamma) \eta m_s + \theta_{LH} V_H (s; m)\big) ds. \label{Poisson:VH}
\end{align}
For the impatienting depositor, because the discounting rate never returns to $\beta_L$, the value function is 
\begin{equation}\label{Poisson:VL}
V_H(t;m) = \sup_{\tau_w} \int_t^{\tau_w} e^{-\int_t^s (\nu + \beta_H  + \eta m_u) du} \big(r + \delta - \beta_H - (1-\gamma) \eta m_s\big) ds.
\end{equation}

For the impatienting depositor, given that $t\mapsto m_t$ is nondecreasing and the tie-breaking condition, the optimal withdrawal time is 
\begin{equation}\label{tau_Lm}
\tau_w^*(H; m) = \inf\{s\geq t\,:\, r + \delta - \beta_H - (1-\gamma) \eta m_{s} \leq 0\}.
\end{equation}
Hence, the impatienting value function is 
\[
 V_H(t; m) = \int_t^{\tau_w^*(L;m)} e^{-\int_t^s (\nu + \beta_H + \eta m_u) du} \big(r+\delta-\beta_H - (1-\gamma) \eta m_s\big) ds, \quad \text{ when } t< \tau^*_w(L;m),
\]
and $V_H(t; m) = 0$ when $t\geq \tau^*_w(L;m)$. For the patienting depositor, given the assumption $r+ \delta - \beta_L - (1-\gamma) \eta>0$ and $V_H \geq 0$, therefore $V_L (t; m) \geq 0$ and 
\begin{equation}
 \tau^*_w(L; m) = \infty.
\end{equation}
In summary, for any given $m$, a patienting depositor never withdraws early, and a impatienting depositor optimally withdraw at $\tau^*_w(H;m)$. Putting the two types together, the optimal withdrawal time for a representative depositor is $\tau^*_w = \tau_\theta \vee \tau_w^*(H; m)$, i.e., the later time between $\tau_w^*(H;m)$ and the arrivial time of the discounting rate shock. 

By the consistency condition \eqref{def:m_prob} for the case without common state, 
\begin{align}
 m_s = &\mathbb{P}(\tau_{\nu} \wedge \tau^*_w \leq s) = \mathbb{P}(\tau_\nu \leq s) + \mathbb{P} (\tau_\nu >s, \tau^*_w \leq s) \nonumber\\
 =& 1-e^{-\nu s} + e^{-\nu s} \mathbb{P}(\tau_\theta \vee \tau^*_w(H;m) \leq s)\nonumber\\
 =& 1- e^{-\nu s} + e^{-\nu s} \mathbb{P} (\tau_\theta \leq s, \tau^*_w(H;m) \leq s)\nonumber\\
 =&  1- e^{-\nu s} + e^{-\nu s} (1-e^{-\theta_{LH} s}) 1_{\{\tau^*_w(H;m) \leq s\}}, \quad \text{for any } s\geq 0. \label{Poisson:m_eqn}
\end{align}
Notice that $m$ appears on both sides of \eqref{Poisson:m_eqn}. Hence it is an equation for $m$ and each solution corresponds to an equilibrium. 

To identify all solutions of \eqref{Poisson:m_eqn}, observe that the right-hand side of \eqref{Poisson:m_eqn} is sandwiched between $1- e^{- \nu t}$ and $1- e^{-(\nu + \theta_{LH}) t}$. Denote
\begin{align*}
 \underline{t} := & \inf\{s\geq 0\,:\, r+\delta -\beta_H - (1-\gamma) \eta (1- e^{- (\nu + \theta_{LH}) s}) \leq 0 \} \quad \text{and} \\
 \overline{t} := & \inf\{s\geq 0\,:\, r+\delta - \beta_H - (1-\gamma) \eta (1- e^{- \nu s}) \leq 0 \}.
\end{align*}
They are the lower and upper bound of $\tau^*_w(L;m)$, respectively. These lower and upper bound can be obtained by the following choice of $m$. For $\underline{t}$, consider 
\[
 \underline{m}_s := \left\{\begin{array}{ll}1- e^{-\nu s}, & s< \underline{t}, \\ 1-e^{- (\nu + \theta_{LH}) s}, & s\geq \underline{t}. \end{array}\right.
\]
It follows from \eqref{tau_Lm} that $\tau^*(L;\underline{m}) = \underline{t}$. Meanwhile, for $\overline{t}$, consider
\[
\overline{m}_s := \left\{\begin{array}{ll}1- e^{-\nu s}, & s< \overline{t}, \\ 1-e^{- (\nu+\theta_{LH}) s}, & s\geq \overline{t}. \end{array}\right.
\] 
It then follows that $\tau^*(L; \overline{m}) = \overline{t}$. These two pairs $(\underline{t}, \underline{m})$ and $(\overline{t}, \overline{m})$ correspond to two extreme equilibria: $(\underline{t}, \underline{m})$ is the earliest-run equilbrium and $(\overline{t}, \overline{m})$ is the latest-run equilibrium. 

To identify all solutions of \eqref{Poisson:m_eqn}, we observe from \eqref{Poisson:m_eqn} that any solution for \eqref{Poisson:m_eqn} admits the form 
\[
\widetilde{m}_s = \left\{\begin{array}{ll} 1- e^{-\nu s}, & s< \tilde{t},\\ 1- e^{-(\nu + \theta_{LH}) s}, & s\geq \tilde{t}, \end{array}\right.
\]
for any $\tilde{t} \in [\underline{t}, \overline{t}]$. Because $\tilde{t} \geq \underline{t}$, we have $(1-\gamma) \eta (1- e^{-(\nu + \theta_{LH}) \tilde{t}}) \geq (1-\gamma) \eta (1- e^{-(\nu + \theta_{LH}) \underline{t}}) = r+\delta -\beta_H$. Meanwhile, for any $s<\tilde{t}<\overline{t}$, $(1-\gamma) \eta (1- e^{-\nu s})<(1-\gamma) \eta (1- e^{-\nu \tilde{t}}) < (1-\gamma) \eta (1- e^{-\nu \overline{t}}) = r +\delta - \beta_H$. It then follows from \eqref{tau_Lm} that  $\tau^*_w(H, \tilde{m}) = \tilde{t}$, and hence $(\tilde{t}, \tilde{m})$ is an equilibrium.

\subsection{Proof of Lemma \ref{lem:dyn_m_fin_del}}

Let $\tau_{\theta}$ be the arrival time of discount rate shock, it is an exponential random variable with the parameter $\theta_{LH}$. Let $\{\tau_n\}_{n\geq 1}$ be the jump times of a Poisson process with the intensity $\vartheta$. A representative depositor can only withdraw at $\{\tau_n\}_{n\geq 1}$. Consider the threshold strategy of an impatient depositor in \eqref{threshold_str}. Then
\begin{align*}
 &\mathbb{P}\big(\tau_w(H; h, m, \cZ) >t \,|\, \mathcal{F}_t^Z \big) = \mathbb{P}\big( \tau_{\theta} > t \,|\, \mathcal{F}^Z_t\big) + \mathbb{P}\Big(\tau_\theta < t, \tau_n \mathbbm{1}_{\{Z_{\tau_n} \leq \cZ(h_{\tau_n}, m_{\tau_n})\}} \notin (\tau_\theta, t) \,|\, \mathcal{F}^Z_t\Big)\\
 & = e^{-\theta_{LH}t} + \int_0^t \theta_{LH} e^{-\theta_{LH} s} e^{-\vartheta \int_s^t \mathbbm{1}_{\{Z_u \leq \mathcal{Z}(h_u, m_u)\}} du }ds.
\end{align*}
Therefore, 
\begin{align*}
m^{\mathcal{Z}}_t = &1 - \mathbb{P} \Big( \tau_{\nu} \wedge \tau_w(H; h, m, \cZ) >t\,|\, \mathcal{F}^Z_t\Big)\\
= & 1- \mathbb{P}\big(\tau_{\nu}> t\,|\, \mathcal{F}^Z_t \big) \mathbb{P} \big( \tau_w(H; h, m, \cZ) >t\,|\, \mathcal{F}^Z_t\big)\\
= & 1- e^{-\nu t} \Big[e^{-\theta_{LH}t} + \int_0^t \theta_{LH} e^{-\theta_{LH} s} e^{-\vartheta \int_s^t \mathbbm{1}_{\{Z_u \leq \mathcal{Z}(h_u, m_u)\}} du }ds\Big].
\end{align*}
Taking derivative with respect to $t$, we obtain 
\begin{align*}
dm^{\mathcal{Z}}_t =& \nu (1- m^{\mathcal{Z}}_t) dt - e^{-\nu t} \Big[- \theta_{LH} e^{-\theta_{LH} t} + \theta_{LH} e^{-\theta_{LH} t} \\
&- \vartheta \mathbbm{1}_{\{Z_t \leq \mathcal{Z}(h_t, m_t)\}} \int_0^t \theta_{LH} e^{-\theta_{LH} s} e^{-\vartheta \int_s^t \mathbbm{1}_{\{Z_u \leq \mathcal{Z}(h_u, m_u)\}} du }ds\Big] dt\\
=& \nu (1- m^{\mathcal{Z}}_t) dt + \delta \mathbbm{1}_{\{Z_t \leq \mathcal{Z}(h_t, m_t)\}} \big(1- e^{- (\nu + \theta_{LH})t} - m^{\mathcal{Z}}_t \big) dt.
\end{align*}
Recall that $h_t = 1- e^{- (\nu + \theta_{LH})t}$. We confirm the dynamics \eqref{dyn:m_fin_del}. 

\subsection{Proof of Lemma \ref{lem:VZ_mono}}
First, due to \eqref{ass:Z},  the process $Z^z$ is strictly increasing in $z$. 
Thus, for generic variables $h$ and $m$, we have that $\mathds{1} _{ \{ Z^z_s \leq \cZ (h,m) \} } $ is decreasing in $z$. Therefore,  the comparison principle for ordinary differential equations for the state dynamics  \eqref{dyn:m_fin_del} implies that $m^{z,\cZ}$ is decreasing in $z$. 
In other words, for initial conditions $z < \bar z$ at time $t$, denoting $Z = Z^z, \bar Z = Z^{\bar z}$,  $m^{\cZ} = m^{z, \cZ}$, and $\bar m^{\cZ} = m^{\bar z, \cZ}$ to simplify notation, we have
\begin{equation}
    \label{eq monotonicity Z m}
    Z_s < \bar Z_s, \quad m^{\cZ}_s \geq \bar m^{\cZ}_s, \quad \text{for any } s\geq t. 
\end{equation}

Next, use integration-by-part to rewrite the objective function as 
\begin{equation}\label{eq representation V with F}
V_H^{\cZ}(h,m,z) 
 =  \sup_{\tau \in\{\tau_n\}_n}  \mathbb{E}_{h, m, z} [F(t,\tau; Z,m^{\cZ} )], 
 \end{equation}
 where $F$ is defined as
 \begin{equation} \label{def:F}
 \begin{aligned}
 F(t,\tau; Z,m^{\cZ} ) =&\int_t^{\tau} e^{-\int_t^s (\nu + \beta_H (Z_u) + \eta m^{\cZ}_u)du} \big( r+\delta + (1-\gamma) \nu - \gamma\beta_H (Z_s)\big) ds \\
& + (1-\gamma) \Big( e^{-\int_{t}^{\tau}(\nu + \beta_H(Z_u) + \eta m^{\cZ}_u) du} -1 \Big).
\end{aligned}
\end{equation}
The fact that $\beta_H$ is decreasing together with \eqref{eq monotonicity Z m} 
implies that
$$
 -\beta_H (Z_s) < - \beta_H ( \bar Z_s), \quad - m^{\cZ}_s \leq -\bar m^{\cZ}_s. 
$$
Moreover, since by assumption we have $1-\gamma \geq 0$ and $r+\delta + (1-\gamma) \nu -  \gamma\beta_H (\tilde z) >0$ for any $\tilde{z}$ (see \eqref{ass:Rmin}), we obtain
$$
F(t,\tau; \bar Z,\bar m^{\cZ} ) - F(t,\tau; Z,m^{\cZ} ) > 0,
$$
for any stopping time $\tau > t$.

Finally, choosing $\tau$ to be optimal for the initial condition $z$, we find
\begin{equation*}
\begin{aligned}
V_H^{\cZ}(h,m,\bar z) - V_H^{\cZ}(h,m, z) 
 \geq   \mathbb{E}_{h, m} \Big[F(t,\tau; \bar Z,\bar m^{\cZ} ) - F(t,\tau; Z,m^{\cZ} )\Big] >0,
\end{aligned}
\end{equation*}
where the strict inequality follows from the fact that the Poisson arrival time $\tau$ is almost surely greater than $t$. 
This proves the monotonicity of $V_H^{\cZ}$ in $z$. 

Regarding the monotonicity in $(h,m)$, we first notice that $h_t$ is increasing in $h$. 
Hence,  by the monotonicity of $\cZ$ in $h_t$
and the comparison principle for ordinary differential equations for the state dynamics  \eqref{dyn:m_fin_del}, we deduce that that $m^{\cZ}$ is nondecreasing in $(h,m)$.
Therefore, since $F$ is nonincreasing in the process $m^{\cZ}$, for any stopping time $\tau$ we deduce that 
 $\mathbb{E}_{h, m, z} [F(t,\tau; Z,m^{\cZ} )]$ is nonincreasing in $(h,m)$. 
 We conclude that $V_H^{\cZ}$ is nonincreasing in $(h,m)$, thus completing the proof. 

\subsection{Proof of Proposition \ref{prop:threshold_finite}}

\noindent\underline{Existence}. 
Let $\mathcal{Z}:[0,1]^2 \to \mathbb{R}$ be a given and fixed measurable threshold function.  
Recall from \eqref{eq representation V with F} that
\begin{equation*}
V_H^{\cZ}(h,m,z) 
 =  \sup_{\tau \in\{\tau_n\}_n}  \mathbb{E}_{h, m, z} [F(t,\tau; Z,m^{\cZ} )].
 \end{equation*}
%  where the functional $F$ (taking as entries the optimization's starting time $t \geq0$, the Poissonian withdrawal time $\tau$, the Brownian trajectory $Z$, and the withdrawn share, following the dynamics \eqref{dyn:m_fin_del}) is defined as
%  \begin{equation*}
%  \begin{aligned}
%  F(t,\tau; Z,m^{\cZ} ) :=&\int_t^{\tau} e^{-\int_t^s (\nu + \beta_H (Z_u) + \eta m^{\cZ}_u)du} \big( r+\delta + (1-\gamma) \nu - \gamma\beta_H (Z_s)\big) ds \\
% & + (1-\gamma) \Big( e^{-\int_{t}^{\tau}(\nu + \beta_H(Z_u) + \eta m^{\cZ}_u) du} -1 \Big).
% \end{aligned}
% \end{equation*}
Then, because by assumption $1-\gamma>0$ and $r+\delta + (1-\gamma) \nu - \gamma\beta_H (z)>0$, for all $z\in \mathbb{R}$ (see \eqref{ass:Rmin}), the decreasing map $m^{\cZ} \mapsto F(t,\tau; Z, m^{\cZ})$ implies that 
\[
V_H^{\cZ}(h,m,z) \geq \sup_{\tau \in\{\tau_n\}_n}  \mathbb{E}_{h, m, z} [F(t,\tau; Z, 1)] \quad  \text{and} \quad V_H^{\cZ}(h,m,z) \leq \sup_{\tau \in\{\tau_n\}_n}  \mathbb{E}_{h, m, z} [F(t,\tau; Z, 0)].
\]

Hence, the previous inequalities on $V_H^{\cZ}$ imply that
\begin{equation}
    \label{eq:uniformbounds}
    \underline{\mathcal{Z}}(h,m) \leq \inf\big\{z \,:\, V_H^{\cZ}(h,m,z) =0 \big\} \leq \overline{\mathcal{Z}}(h,m), \quad (h,m) \in [0,1]^2,
\end{equation}
with $\underline{\mathcal{Z}}:[0,1]^2 \to \mathbb{R}$ and $\overline{\mathcal{Z}}:[0,1]^2 \to \mathbb{R}$ such that
\begin{equation}
    \label{def:indiff_Z_underline}
    \underline{\mathcal{Z}} := \inf\big\{z \,:\, \sup_{\tau \in\{\tau_n\}_n}  \mathbb{E}_{h, m, z} [F(t,\tau; Z, 0)] =0 \big\}, 
\end{equation}
and
\begin{equation}
    \label{def:indiff_Z_overline}
    \overline{\mathcal{Z}} := \inf\big\{z \,:\, \sup_{\tau \in\{\tau_n\}_n}  \mathbb{E}_{h, m, z} [F(t,\tau; Z, 1)] =0 \big\}. 
\end{equation}
It is important to notice that $\underline{\mathcal{Z}}$ and $\overline{\mathcal{Z}}$ are actually independent of the given and fixed $\mathcal{Z}$, since the values $\sup_{\tau \in\{\tau_n\}_n}  \mathbb{E}_{h, m, z} [F(t,\tau; Z, 0)]$ and $\sup_{\tau \in\{\tau_n\}_n}  \mathbb{E}_{h, m, z} [F(t,\tau; Z, 1)]$ are such.

Let $[\underline{z}, \overline{z}]\subseteq \mathbb{R}$ be the state space of $Z$. Here $\underline{z}$ can be $-\infty$ and $\overline{z}$ can be $\infty$. Condition \eqref{ass:beta_L_common} implies that $\underline{\cZ}$ and $\overline{\cZ}$ are both in the interior of $[\underline{z}, \overline{z}]$. For $\underline{\cZ}$, it is the optimal stopping boundary for the problem 
\[
\underline{V}(z) = \sup_{\tau_n} \mathbb{E}_z \Big[\int_t^{\tau} e^{-\int_t^s \nu + \beta_H (Z_u) du} \big( r+\delta - \beta_H(Z_s)\big) ds\Big].
\]
The second condition in \eqref{ass:beta_L_common} implies that $\underline{V}(\underline{z}) <0$, hence $\underline{Z} > \underline{z}$. For $\overline{\cZ}$, it is the optimal stopping boundary for the problem 
\[
\overline{V}(z) = \sup_{\tau_n} \mathbb{E}_z \Big[\int_t^{\tau} e^{-\int_t^s \nu + \beta_H (Z_u) + \eta du} \big( r+\delta - \beta_H(Z_s) - (1-\gamma) \eta\big) ds\Big].
\]
The first condition in \eqref{ass:beta_L_common} implies $\overline{V}(\overline{z}) >0$, hence $\overline{Z} < \overline{z}$.

Set
$$\mathcal{B}:=\{\mathcal{Z}:[0,1]^2 \to \mathbb{R}:\,\,h \mapsto \mathcal{Z}(h,m)\,\,\text{and}\,\,m \mapsto \mathcal{Z}(h,m)\,\,\text{are nondecreasing and s.t.}\,\, \underline{\mathcal{Z}} \leq \mathcal{Z} \leq \overline{\mathcal{Z}}\,\,\text{on}\,\,[0,1]^2\}$$
and, as a consequence of \eqref{eq:uniformbounds} and Lemma \ref{lem:VZ_mono}, we can define the map
$\mathcal{T}: \mathcal{B} \to \mathcal{B}$ such that (cf.\ also \eqref{def:indiff_Z})
\begin{equation}
    \label{eq:mathcalT}
    [\mathcal{T}\mathcal{Z}](h,m):= \inf\big\{z \,:\, V_H^{\cZ}(h,m,z) =0 \big\}.
\end{equation}

Notice now that $\mathcal{B}$ is a complete lattice, when endowed with the pointwise order $\mathcal{Z}_1 \leq \mathcal{Z}_2$ if and only if $\mathcal{Z}_1(h,m) \leq \mathcal{Z}_2(h,m)$ for all $(h,m)\in [0,1]^2$. Furthermore, since $\mathcal{Z} \to m^{\mathcal{Z}}$ is nondecreasing (due to a comparison principle applied to the ODE \eqref{dyn:m_fin_del}), $m^{\mathcal{Z}} \mapsto F(t,\tau;Z,m^{\mathcal{Z}})$ is noincreasing, it follows that ${\mathcal{Z}} \mapsto V_H^{\cZ}(h,m,z)$ is noincreasing, and, as a direct consequence of this, that $\mathcal{Z} \mapsto \mathcal{T}\mathcal{Z}$ is nondecreasing. Tarski's fixed point theorem then gives that the set of fixed points of $\mathcal{T}$ is a nonempty complete lattice. Each fixed point of $\mathcal{T}$ is a equilibrium threshold. As a member of $\mathcal{B}$, this threshold is nondecreasing in $h$ and $m$.

\medskip

\noindent \underline{Uniqueness}. 
Suppose that there are two threshold equilibria $\cZ_1$ and $\cZ_2$. Without loss of generality, we assume that there exists $(h,m)$ such that $\cZ_1(h,m) > \cZ_2(h,m)$. 
We shift $\cZ_1$  sufficiently down, so that a parallel version of $\cZ_1$, $\widetilde{\cZ}_1$, intersets $\cZ_2$ at a point $(h^*,m^*)$, i.e., $\widetilde{\cZ}_1(h^*,m^*) = \cZ_2(h^*,m^*)$, but the rest of $\widetilde{Z}_1$ is below  $\cZ_2$, i.e., $\widetilde{\cZ}_1(h,m) \leq \cZ_2(h,m)$ for any $h,m$. 
Because $\cZ_2$ is a threshold equilibrium, we have 
\begin{equation}
    \label{cZ2_id}
    V_H^{\cZ_2}( h^*,m^*, \cZ_2(h^*,m^*)) =0.
\end{equation}

Notice that the mapping $\mathcal{Z} \mapsto m^{\mathcal{Z}}$ is increasing (due to a comparison principle applied to the ODE \eqref{dyn:m_fin_del}) and that, by the representation in \eqref{eq representation V with F}, the value $V_H$ is decreasing in the flow $m^{\mathcal{Z}}$.
It follows that $\mathcal{Z} \mapsto V_H^{\mathcal{Z}}(h,m,z)$ is decreasing, so that
\begin{equation}\label{V_ineq_finite_vartheta}
 V_H^{\cZ_2}(h,m,z) \leq V_H^{\widetilde{\cZ}_1}(h,m,z), \quad \text{ for any } h,m,z.
\end{equation}
Therefore, 
\begin{equation}
   \label{cZ2_tcZ1_ineq}
    V_H^{\cZ_2}( h^*, m^*, \cZ_2(h^*,m^*)) \leq V_H^{\widetilde{\cZ}_1}( h^*,m^*, \cZ_2(h^*,m^*)) =  V_H^{\widetilde{\cZ}_1}(h^*, m^*, \widetilde{\cZ}_1(h^*,m^*)),
\end{equation}
where the equality follows from $\cZ_2(h^*,m^*) = \widetilde{\cZ}_1(h^*,m^*)$.
It then follows from \eqref{cZ2_id} and \eqref{cZ2_tcZ1_ineq} that
\begin{equation}\label{VL_tZ1_ineq}
    0 \leq V_H^{\widetilde{\cZ}_1}(h^*, m^*, \widetilde{\cZ}_1(h^*,m^*)).
\end{equation}

Next, we prove that 
\begin{equation}
    \label{VL_tZ1_Z1_ineq}
     V_H^{\widetilde{\cZ}_1}( h^*,m^*, \widetilde{\cZ}_1(h^*,m^*) ) < V_H^{\cZ_1}(h^*, m^*, \cZ_1(h^*,m^*)). 
\end{equation}
However, because $\cZ_1$ is a threshold equilibrium, $V_H^{\cZ_1}(h^*,m^*, \cZ_1(h^*,m^*))=0$. Therefore, \eqref{VL_tZ1_ineq} and \eqref{VL_tZ1_Z1_ineq} contradict each other. Hence, the threshold equilibrium can only be unique.

To prove \eqref{VL_tZ1_Z1_ineq}, we compare $m_s^{\cZ_1}$ and $m_s^{\widetilde{\cZ}_1}$ for $s\geq t$ with the same initial value $m_t^{\cZ_1} = m_t^{\widetilde{\cZ}_1}= m^*$, but with different initial condition for $Z$. 
Because $m^{\cZ}$ follows the dynamics \eqref{dyn:m_fin_del} and $Z$ satisfies the equation \eqref{ass:Z}, the dynamics of $m^{\cZ}$ only depends on the relative difference of $Z_s$ and $\cZ(h_s, m_s)$ for $s\geq t$. 
Because $\widetilde{\cZ}_1$ is a parallel shift version of $\cZ_1$ (indeed, $\widetilde{\cZ}_1 (h,m) = {\cZ}_1 (h,m) + {\cZ}_2 (h^*,m^*) - {\cZ}_1 (h^*,m^*) $), thanks to \eqref{ass:Z}, the difference between $Z_s^{ \cZ_1(h_t,m_t)}$ and $\cZ_1(h_s,m_s)$ is the same as the difference between $Z_s^{\widetilde{\cZ}_1(h_t,m_t)}$ and $\widetilde{\cZ}_1(h_s,m_s)$. 
Therefore, adding the superscript to highlight the initial condition of $Z$, we have 
\begin{equation}\label{m_equ}
m_s^{\cZ_1(h^*,m^*), \cZ_1} = m_s^{\widetilde{\cZ}_1(h^*, m^*), \widetilde{\cZ}_1}, \quad \text{ for any } s\geq t.
\end{equation}
Using again the representation in  \eqref{eq representation V with F}, it then follows that 
\begin{align*}
 V_H^{\widetilde{\cZ}_1}  ( h^*, m^*, \widetilde{\cZ}_1(h^*,m^*)) 
 = &  \sup_{\tau \in\{\tau_n\}_n}  \mathbb{E} [F(t,\tau; \widetilde{\cZ}_1(h^*,m^*) + Z^{t,0},m^{\widetilde{\cZ}_1(h^*, m^*),\widetilde{\cZ}_1} )] \\
 = &  \sup_{\tau \in\{\tau_n\}_n}  \mathbb{E} [F(t,\tau; \widetilde{\cZ}_1(h^*,m^*) + Z^{t,0},m^{\cZ_1(h^*, m^*){\cZ}_1} )] \\
  < &  \sup_{\tau \in\{\tau_n\}_n}  \mathbb{E} [F(t,\tau; {\cZ}_1(h^*,m^*) + Z^{t,0},m^{\cZ_1(h^*, m^*), {\cZ}_1} )] \\
  = & V_H^{\cZ_1}(  h^*, m^*, {\cZ}_1(h^*,m^*)), 
\end{align*}
where the second equality above follows from \eqref{m_equ}. To see the strict inequality, let $\tau^*$ be the optimal Poisson stopping time for $\mathbb{E}[F(t, \tau; \cZ_1(h^*, m^*)+ Z^{t,0}, m^{\cZ_1})]$. Because $\tau^*$ can only be Poisson arrival time, $\mathbb{P}(\tau^* >0) =1$. Since $\beta_H$ is strictly decreasing in $Z$ and $\widetilde{\cZ}_1(h^*,m^*) < \cZ_1(h^*,m^*)$, we have 
\begin{align*}
 &\sup_{\tau \in\{\tau_n\}_n}  \mathbb{E} [F(t,\tau; {\cZ}_1(h^*,m^*) + Z^{t,0},m^{{\cZ}_1} )] = \mathbb{E} [F(t,\tau^*; {\cZ}_1(h^*,m^*) + Z^{t,0},m^{{\cZ}_1} )] \\
 <& \mathbb{E} [F(t,\tau^*; \widetilde{\cZ}_1(h^*,m^*) + Z^{t,0},m^{{\cZ}_1} )] \leq \sup_{\tau \in\{\tau_n\}_n}  \mathbb{E} [F(t,\tau; \widetilde{\cZ}_1(h^*,m^*) + Z^{t,0},m^{{\cZ}_1} )],
\end{align*}
where $m^{\cZ_1} = m^{\cZ_1(h^*,m^*), \cZ_1}$. Therefore, the claim \eqref{VL_tZ1_Z1_ineq} is confirmed.

\subsection{Proof of Proposition \ref{proposition intensity limit}}

The proof consists of two steps.
In order to simplify the notation, we set $V^{(\theta)} := V_H^{(\theta)}$ and $V^{(\infty)}:=V_H^{(\infty)}$.

\medskip

\noindent\underline{Step 1}.
We first want to show that  
$\mathcal Z ^{(\infty)}$ is the optimal stopping boundary associated to the function $V^{(\infty)}$; that is, 
$\mathcal Z ^{(\infty)} (h,m) = \inf \{ z \in \mathbb R \, | \, V^{(\infty)} (h,m,z) >0 \}$.
We can first show that there exist $\underline{\cZ}$ and $\overline{\cZ}$ such that any threshold $\cZ^{(\vartheta)}$ is sandwiched between. 
Because $V^{(\vartheta)}$ converges to $V^{(\infty)}$ locally uniformly, the convergence is uniform in $ \mathcal D \times [\underline{\cZ}, \overline{\cZ}]$. 
Therefore,
$$
0 = \lim_{\theta \rightarrow \infty}  V^{(\vartheta)}( h, m, \cZ^{(\vartheta)}(h,m)) = V^{(\infty)}(h, m, \cZ^{(\infty)}(h,m)), \quad \text{ for any } (h,m)\in \mathcal D.
$$

We thus remain to show that 
\begin{equation}
\label{eq for being the boundary}
    V^{(\infty)}  (h,m,z) >0
    \quad \text{when}\quad z 
    > \mathcal Z ^{(\infty)} (h,m).
\end{equation}
To this end, fix $z> \mathcal{Z}^{(\infty)}(h,m)$. 
Noticing that $V^{(\vartheta)}  (h,m, \mathcal Z ^{(\vartheta)} (h,m)) =0$ by   the equilibrium property of $\mathcal Z ^{(\vartheta)}$,  it is sufficient to show that there exists a constant $c$, which depends on $z$ but not on $\vartheta$, such that
\begin{equation}
\label{eq V theta over the boundary}
V^{(\vartheta)}  (h,m,z) - V^{(\vartheta)}  (h,m, \mathcal Z ^{(\vartheta)} (h,m))  \geq c >0, \quad \text{ for any $\vartheta$ large enough}.
\end{equation}

In order to show \eqref{eq V theta over the boundary}, 
take  generic $z_1, z_2$ such that $ \mathcal Z ^{(\vartheta)} (h,m) \leq z_1 < z_2 \leq z$.
We first notice that the Markovian dynamics for $m_s$ is decreasing in the $z$-variable; 
that is, that for any $ z_1 < z_2$ one has $m^{z_2 ,\mathcal Z ^{(\vartheta)}}_s \leq m^{ z_1,\mathcal Z ^{(\vartheta)}}_s $, for any $s>t$. Here $\{m_s^{z, \cZ^{\theta}}\}_{s\geq t}$ is the withdrawn share which follows the dynamics \eqref{dyn:m_fin_del} with $Z_t = z$ and $h_t = h$.
The inequality $m^{z_2 ,\mathcal Z ^{(\vartheta)}}_s \leq m^{ z_1, \mathcal Z ^{(\vartheta)}}_s $, together with the fact that the value function $V^{(\vartheta)} $ is nonincreasing in a generic withdrawn share $m$, implies that
$$
\begin{aligned}
V^{(\vartheta)} ( h,m, z_2  ) -  V^{(\vartheta)} ( h,m, z_1) &= 
    V^{(\vartheta)} ( z_2; m^{z_2,\mathcal Z ^{(\vartheta)}}  ) -  V^{(\vartheta)} ( z_1 ; m^{z_1,\mathcal Z ^{(\vartheta)}}  )  \\
    &\geq 
    V^{(\vartheta)} ( z_2; m^{z_1,\mathcal Z ^{(\vartheta)}}  ) -  V^{(\vartheta)} ( z_1; m^{z_1 ,\mathcal Z ^{(\vartheta)}}  ).  
\end{aligned}
$$
To continue estimating from below, we use the optimal stopping time $\tau_1$ for the initial condition $ z_1$ with withdrawn share $m^{z_1 ,\mathcal Z ^{(\vartheta)}}$, which is not necessarily optimal for the initial condition $ z_2$.
We obtain
\begin{equation}
\begin{aligned}
\label{eq:inequality1}
&V^{(\vartheta)} ( h, m, z_2  ) -  V^{(\vartheta)} ( h, m, z_1) \\
&
\quad \geq  
\mathbb{E}_t \bigg[\int_t^{\tau_1} \Big( e^{-\int_t^s (\nu + \beta_H (Z^{z_2}_u) + \eta m^{z_1,\mathcal Z ^{(\vartheta)}}_u)du} \big( r+\delta + (1-\gamma) \nu - \gamma\beta_H (Z^{ z_2 }_s)\big)  \\
&\quad \quad \quad \quad \quad \quad \quad 
-  e^{-\int_t^s (\nu + \beta_H (Z^{z_1}_u) + \eta m^{z_1,\mathcal Z ^{(\vartheta)}}_u)du} \big( r+\delta + (1-\gamma) \nu - \gamma\beta_H (Z^{z_1}_s)\big) \Big) ds \\
& \quad \quad \quad \quad \quad + (1-\gamma) \Big( e^{-\int_{t}^{\tau_1}(\nu + \beta_H(Z^{z_2 }_u) + \eta m^{z_1,\mathcal Z ^{(\vartheta)}}_u) du} - e^{-\int_{t}^{\tau_1}(\nu + \beta_H(Z^{z_1}_u) + \eta m^{z_1,\mathcal Z ^{(\vartheta)}}_u) du} \Big) \bigg] \\
&
\quad \geq  
\gamma \mathbb{E}_t \bigg[\int_t^{\tau_1} \Big( e^{-\int_t^s (\nu + \beta_H (Z^{z_1}_u) + \eta )du} \big( \beta_H (Z^{z_1}_s) -\beta_H (Z^{ z_2 }_s)\big) ds \bigg] \geq 0,  
\end{aligned}
\end{equation}
where the last inequality follows  by the monotonicity of $\beta_H$ and  $m^{z_1, \cZ^{(\vartheta)}}\leq 1$. 

In particular, the previous inequality gives
\begin{equation}\label{eq equation help}
\begin{aligned}    
V^{(\vartheta)}  (h,m,z) - V^{(\vartheta)}  (h,m, \mathcal Z ^{(\vartheta)} (h,m)) & = V^{(\vartheta)}  (h,m,z) -V^{(\vartheta)}  (h,m,z_1) \\
&\quad + V^{(\vartheta)}  (h,m, z_1) - V^{(\vartheta)}  (h,m, \mathcal Z ^{(\vartheta)} (h,m)) \\
&\geq V^{(\vartheta)}  (h,m,z) - V^{(\vartheta)}  (h,m, z_1).
\end{aligned}
\end{equation}

We next look at the case $z_2 = z$ such that $\mathcal Z ^{(\infty)} (h,m) < z_1 < z$.
For all $\vartheta$ large enough such that $\mathcal Z ^{(\vartheta)} (h,m) < z_1 < z$, 
the local Lipschitzianity of $\cZ^{(\infty)}(h,m)$ allows to find a  sufficiently small $\rho >0$ such that
$$
D_{h,m} \times B_z :=  \mathcal D \cap \big( [h,h+\rho] \times [m, m+\rho] \big) \times [z_1-\rho, z_1 +\rho] \subset \{ (l,\nu, y) \in \mathcal D \times \mathbb R \, | \, y > \cZ^{(\vartheta)} (l,\nu) \}.
$$
Notice also that $m^{(0)}$ (the solution of \eqref{dyn:m_fin_del} with $\vartheta = 0$) does not depend on the stopping boundary $\cZ^{(0)}$. 
Moreover, $m^{(\vartheta)}_s = m^{(0)}_s$ for any time $s\leq \hat \tau ^{(\vartheta)}$, where $\hat \tau^{(\vartheta)} =\inf \{ s \geq t \, | \, Z^{z_1}_{s} < \cZ^{(\vartheta)} (h_{s}, m^{(\vartheta)}_s)\}$. 
Thus, we have  that $\hat \tau^{(\vartheta)} =\inf \{ s \geq t \, | \, Z^{z_1}_{s} < \cZ^{(\vartheta)} (h_s, m^{(0)}_{s})\}$,
and setting 
$$
\tau := \inf \{ s \geq t \, | \, (h_s, m_s^{(0)}, Z^{z_1}_s) \notin D_{h,m} \times B_z  \},
$$
it follows that $\hat{\tau}^{(\vartheta)} \geq \tau >t$, for any sufficiently large $\vartheta$.
Therefore, since $\bar \tau ^{(\vartheta)} :=\inf \{ \tau_n \, | \, Z^{z_1}_{\tau_n} < \cZ^{(\vartheta)} (h_{\tau_n}, m^{(\vartheta)}_{\tau_n})\} \geq \hat \tau ^{(\vartheta)}$, the last line of \eqref{eq:inequality1} actually yields
$$
\begin{aligned}
&V^{(\vartheta)} ( h, m, z  ) -  V^{(\vartheta)} (  h, m, z_1)  \geq  
\gamma \mathbb{E}_t \bigg[\int_t^{\tau} \Big( e^{-\int_t^s (\nu + \beta_H (Z^{z_1}_u) + \eta )du} \big( \beta_H (Z^{z_1}_s)  - \beta_H (Z^{z }_s)\big) ds \bigg] =:c (z_1) >0,
\end{aligned}
$$
where $c(z_1)$ is positive due  to the fact that the term inside the integral is positive, by monotonicity of $\beta_H$ and by having assumed $z > z_1$.
This inequality, together with \eqref{eq equation help}, in turn gives \eqref{eq for being the boundary}. 

% and taking limits as $\theta \uparrow \infty$, we obtain
% $$
% \begin{aligned}
% &V^\infty ( z, h, m  ) -  V^\infty ( \bar z, h, m)  \geq  
%     \gamma \mathbb{E}_t \bigg[\int_t^{\tau} \Big( e^{-\int_t^s (\nu + \beta_H (Z^{\bar z^\infty}_u) + \eta )du} \big( \beta_H (Z^{\bar z}_s)  - \beta_H (Z^{z }_s)\big) ds \bigg] =:c (\bar z) >0,  
% \end{aligned}
% $$
% where $c(\bar z)$ is positive due  to the fact that the term inside the the integral is positive, by monotonicity of $\beta_H$ and by having assumed $z > \bar z$.
% This shows \eqref{eq for being the boundary}.

\medskip

\noindent\underline{Step 2}.
In this step we show that $V^{(\infty)}$ is the value function of a representative depositor who can withdraw at any time; that is, that $V^{(\infty)}  (h,m,z) = V^{\mathcal Z ^{(\infty)} } (h,m,z)$.

Since $m^{(\vartheta)}$ is the mean field flow associated to the unique threshold equilibrium 
$\cZ^{(\vartheta)}$, we have from \eqref{dyn:m_fin_del} that
\begin{equation}
\label{eq m theta equal}
 d m^{(\vartheta)}_t =   \nu (1-m^{(\vartheta)}_t)dt + dL^{(\vartheta)}_t, 
 \quad
 \text{with} \quad
 L^{(\vartheta)}_t:=\vartheta \int_0^t \mathbbm{1}_{\{Z_s \leq \cZ^{(\vartheta)} (h_s, m^{(\vartheta)}_s)\}} (h_s - m^{(\vartheta)}_s) ds.
   % m^{(\vartheta)}_s = 1 - (1-m) \exp \Big( {-\nu t - \theta \int_t^s 1_{\{Z_u < \cZ^{(\vartheta)} (m^{(\vartheta)}_u)\}} du}  \Big). 
\end{equation}

Fix an arbitrary sequence $\{\vartheta_n\}_{n\geq 1}$ such that
$\vartheta_n\to\infty$. The following argument is understood pathwise,
outside a common null set. Since $m^{(\vartheta_n)}$ is a withdrawal share,
$
0\leq m^{(\vartheta_n)}_t\leq h_t\leq 1$, for $t\geq0$.
Moreover, each $m^{(\vartheta_n)}$ is nondecreasing in time. Hence, by
Helly's selection theorem and a diagonal argument, there exists a
subsequence, which we do not relabel, and a nondecreasing,
right-continuous function $m^{(\infty)}$ such that
\begin{equation}
\label{eq limits for m}
m^{(\vartheta_n)}_t\longrightarrow m^{(\infty)}_t
\end{equation}
at every continuity point of $m^{(\infty)}$, and therefore for
$dt$-almost all $t\geq0$.

Similarly, the processes $L^{(\vartheta_n)}$ defined in
\eqref{eq m theta equal} are nondecreasing and satisfy, for every
$T>0$,
$
0\leq L^{(\vartheta_n)}_T
=
m^{(\vartheta_n)}_T-m^{(\vartheta_n)}_0
-\int_0^T\nu(1-m^{(\vartheta_n)}_s)ds
\leq 1.
$
Therefore, 
$L^{(\vartheta_n)}_t\to L^{(\infty)}_t$ at every continuity point of a
nondecreasing, right-continuous function $L^{(\infty)}$. Passing to the
limit in the integral form of \eqref{eq m theta equal}, using dominated
convergence, gives
\begin{equation}
\label{eq limiting m decomposition}
m^{(\infty)}_t
=
m^{(\infty)}_0
+\int_0^t \nu(1-m^{(\infty)}_s),ds
+L^{(\infty)}_t.
\end{equation}

%A comparison principle for ODEs and $h_s \geq m^\theta_s$ imply that [Do we need $\cZ^\theta$ is increasing in $\theta$ here?]
%$$
%m^{\bar \theta}_t \geq m^{ \theta}_t, \quad \text{if} \quad  \bar \theta \geq \theta. 
%$$

% [$m^\theta$ is the equilibrium withdrawn share. Hence $m^\theta \leq 1$ for any $\theta$. Therefore, we do not need the following estimate for the upper bound of $m^\theta$, right? GIORGIO: I would also say so]
% Moreover, again by a comparison principle for ODEs, we have that
% $m^\theta_t \leq M ^\theta_t$, with $M^\theta$ solution to 
% $$
% d M^{\theta}_t = \theta (h_t - M^{\theta}_t) dt +  \nu (1-M^{\theta}_t)dt.
% $$
% The latter equation admits the explicit solution
% $$
% M_t^{\theta}
% =
% e^{-(\theta+\nu)t}M_0^{\theta}
% +
% \theta\int_0^t e^{-(\theta+\nu)(t-s)}h_s\,ds
% +
% \frac{\nu}{\theta+\nu}
% \left(1-e^{-(\theta+\nu)t}\right)
% $$
% which is uniformly bounded in $t$ and $\theta$.

%{\color{blue}{Moreover, since $m^\theta$ is the equilibrium withdrawn share, we also have
%$$
%0 \leq m^\theta_t \leq 1, \quad \text{for any} \quad \theta. 
%$$
%This, together with the monotonicity in $\theta$, implies the existence of a limit function $m^\infty$, which is right-continuous, nondecreasing,  and
%\begin{equation}\label{eq limits for m}
%   \lim_\theta m^\theta_t = m^\infty _t, 
%\end{equation}
%for almost all $t\geq 0$. 
%Passing to the limit in \eqref{eq m theta equal}, we find a nondecreasing right-continuous process $L^\infty$  such that $L^\theta_t \to L^\infty_t$ $dt$-a.e.\ and
%$$
%m^\infty_t = m +  \int_0^t \nu (1-m^{\infty}_t)dt + L^\infty_t. 
%$$

To characterize this dynamics as in \eqref{dyn:m_inf_del_app}, notice first that on any open interval such that $Z_t > \cZ^{(\infty)} (h_t, m^{(\infty)}_t)$, for $\vartheta$ large enough we  have
$Z_t > \cZ^{(\vartheta)} (h_t, m^{(\vartheta)}_t)$.
Thus, one has $d L^{(\vartheta)}_t = 0$ and consequently $dL^{(\infty)}_t = 0$, so that
$$
d m^{(\infty)}_t = \nu (1-m^{(\infty)}_t)dt.
$$
Secondly, since $L^{(\vartheta)}$ is bounded in $\vartheta$ (again, because  $
0\leq L^{(\vartheta)}_t
=
m^{(\vartheta)}_t-m^{(\vartheta)}_0
-\int_0^t\nu(1-m^{(\vartheta)}_s),ds
\leq 1
$, for any $t\geq0$), we have
$$
 \int_0^t \mathbbm{1}_{\{Z_t \leq \cZ^{(\vartheta)} (h_t, m^{(\vartheta)}_t)\}} (h_t - m^{(\vartheta)}_t) dt  = \frac{L^{(\vartheta)}_t}\vartheta \to 0, \quad \text{as} \quad \vartheta \to \infty, 
$$
so that
$$
\lim_n \mathbbm{1}_{\{Z_t \leq \cZ^{(\vartheta_n)} (h_t, m^{(\vartheta_n)}_t)\}} (h_t - m^{(\vartheta_n)}_t)  = 0, \quad dt\text{-a.e.} 
$$
Moreover, since $Z_t \ne \cZ^{(\infty)} (h_t, m^{(\infty)}_t)$ $dt$-a.e., we conclude that 
\begin{equation}\label{eq limits pointwise indicato h-m}
    \lim_n \mathbbm{1}_{\{Z_t \leq \cZ^{(\vartheta_n)} (h_t, m^{(\vartheta_n)}_t)\}} (h_t - m^{(\vartheta_n)}_t) = \mathbbm{1}_{\{Z_t \leq \cZ^{(\infty)} (h_t, m^{(\infty)}_{t})\}} (h_t - m^{(\infty)}_{t}) = 0, \quad dt\text{-a.e.} 
\end{equation}
Therefore, 
on open intervals such that $Z_t < \cZ^{(\infty)} (h_t, m^{(\infty)}_t)$ we deduce  that $h_t = m^{(\infty)}_t$.

Finally, we want to show that if $\tau$ is an entry point into the region $Z_\tau \leq \cZ^{(\infty)} (h_\tau, m^{(\infty)}_{\tau})$ and $h_{\tau-} - m^{(\infty)}_{\tau-} >0$, then  
$h_\tau - m^{(\infty)}_{\tau} =0$.
Arguing by contradiction, suppose that $h_\tau - m^{(\infty)}_{\tau} > 0$. 
By right continuity, there exists $\varepsilon > 0$ such that $h_s - m^{(\infty)}_s > 0$ for any $s \in [\tau, \tau + \varepsilon ]$. 
Moreover, by monotonicity of $\cZ ^{(\infty)}$, since $m^{(\infty)}_{\tau} \leq m^{(\infty)}_{s}$ and $ h_\tau \leq h_s$ for any $s \in [\tau, \tau + \varepsilon ]$, we have
$$
\mathbbm{1}_{\{Z_s \leq \cZ^{(\infty)} (h_\tau, m^{(\infty)}_{\tau})\}} \leq \mathbbm{1}_{\{Z_s \leq \cZ^{(\infty)} (h_s, m^{(\infty)}_{s})\}} , \quad \text{for any $s \in [\tau, \tau + \varepsilon ]$}.
$$
Therefore, 
$$
\mathbbm{1}_{\{Z_s \leq \cZ^{(\infty)} (h_\tau, m^{(\infty)}_{\tau})\}} (h_s - m^{(\infty)}_s ) \leq \mathbbm{1}_{\{Z_s \leq \cZ^{(\infty)} (h_s, m^{(\infty)}_{s})\}} (h_s - m^{(\infty)}_s ), \quad \text{for any $s \in [\tau, \tau + \varepsilon ]$}.
$$
However, due to the effect of the Brownian motion (in particular, the law of iterated logarithm), the indicator function on the left hand side of the inequality above is one on a subset of $[\tau, \tau + \varepsilon ]$ which has positive Lebesgue measure, so that the left-hand side is strictly positive on that subset as well. 
This contradicts \eqref{eq limits pointwise indicato h-m}.

Summarizing, we have shown that $m^{(\infty)}$ solves the consistency dynamics \eqref{dyn:m_inf_del_app}. 

Taking limits along the subsequence $\{\vartheta_n\}_n$ as in \eqref{eq limits for m} allows to conclude that
$$
V^{(\infty)}  (h, m, z) = \lim_n V^{(\vartheta_n)}  ( h,m, z)= \lim_n V^{(\vartheta_n)}  (z;m^{(\vartheta_n)}) = V  (z;m^{(\infty)})
= V^{\mathcal Z ^{(\infty)}}  (h,m, z),
$$
which completes the proof.

\subsection{Proof of Proposition \ref{prop:uniqueness_vartheta_inf}}
\noindent \underline{Step 1}.
Assume that there exist two equilibrium threshold $\cZ_1$ and $\cZ_2$. Define their associated continuation and stopping regions:
\[
\mathcal{C}_i := \{(h,m,z)\,:\, z > \cZ_i(h,m)\} \quad \text{and} \quad \mathcal{S}_i := \{(h,m,z)\,:\, z\leq \cZ_i(h,m)\}, \, i=1,2.
\]
Recall that $\mathcal{D} = \{(h,m)\in[0,1]^2\,:\, m\leq h\}$. Without loss of generality, assume 
\[
c^* = \sup_{(h,m)\in \mathcal{D}} \big(\cZ_1(h,m) - \cZ_2(h,m)\big) >0. 
\]
If the previous inequality does not hold, we swap the role of $\cZ_1$ and $\cZ_2$ in the rest of the proof. Due to the continuity of $\cZ_1$ and $\cZ_2$, there exists $(h^*, m^*)\in \mathcal{D}$ such that $c^* = \cZ_1(h^*,m^*) - \cZ_2(h^*,m^*)$. Parallel shift $\cZ_2$ upward to introduce $\widetilde{\cZ}_2 := \cZ_2 + c^*$.  Then definition of $c^*$ implies
\begin{equation}
    \label{tZ2_larger_Z1}
    \widetilde{\cZ}_2 \geq \cZ_1 \quad \text{on } \mathcal{D}.
\end{equation}
Denote $z^* := \widetilde{\cZ}_2(h^*,m^*) = \cZ_1(h^*,m^*)$ and $\widetilde{\mathcal{C}}_2 := \{(h,m,z)\,:\, z > \widetilde{Z}_2(h,m)\}$.

Because $\cZ \mapsto V^{\cZ}_H(h,m,z)$ is decreasing, \eqref{tZ2_larger_Z1} and the same argument leading to \eqref{V_ineq_finite_vartheta} yield
\begin{equation}
    \label{V_ineq_inf_vartheta_1}
    V_H^{\widetilde{\cZ}_2}(h,m,z) \leq V_H^{\cZ_1}(h,m,z), \quad \text{for any } h, m, z.
\end{equation}

\medskip
\noindent\underline{Step 2}.
In order to derive a contradiction, note that the dynamics in \eqref{dyn:m_inf_del_app} depends on the relative position of $Z^{t,z}_s$ and $\cZ(h_s, m_s^{z, \cZ})$, meanwhile the dynamics of $h$ is given exogenously. Therefore, \eqref{ass:Z} implies that 
\begin{align}
 Z^{t,z}_s - \widetilde{\cZ}_2(h_s, m^{z, \widetilde{\cZ}_2}_s) = & Z^{t, z-c^*}_s - \cZ_2(h_s, m^{z-c^*, \cZ_2}_s) \quad \text{and} 
 \nonumber\\
 m^{z, \widetilde{\cZ}_2}_s = & m^{z-c^*, \cZ_2}_s, \, \text{ for any } s\geq t. \label{m_id_inf_vartheta}
\end{align}

Instead of deriving the reverse inequality to \eqref{V_ineq_inf_vartheta_1} at $(h^*, m^*, z^*)$ to obtain a contradiction as in the proof of Proposition \ref{prop:threshold_finite}, we show that 
\begin{equation}
    \label{V_ineq_inf_vartheta_2}
    V_H^{\widetilde{\cZ}_2}(h,m,z) > V_H^{\cZ_2}(h,m, z-c^*), \quad \text{for any } z > \widetilde{\cZ}_2(h,m).
\end{equation}
To this end, for any $(h,m,z) \in \widetilde{\mathcal{C}}_2$, i.e., $z> \widetilde{\cZ}_2(h,m)$, definition of $c^*$ implies $z-c^*>\cZ_2(h,m)$, i.e., $(h,m,z-c^*) \in \mathcal{C}_2$. Let $\tau^*_{\cZ_2}(h,m, z-c^*) := \inf \{s\geq t\,:\, Z^{t, z-c^*}_s  \leq \cZ_2 (h_s, m^{z-c^*, \cZ_2}_s)\}$. It is an optimal stopping time for the problem $V_H^{\cZ_2}(h,m,z-c6*)$. Equation \eqref{m_id_inf_vartheta} and the assumption \eqref{ass:Z} imply that 
\begin{equation}\label{tau_id}
\begin{split}
 \tau^*_{\cZ_2}(h,m, z-c^*) = & \inf \{s \geq t\,:\, Z^{t, z-c^*}_s \leq \cZ_2 (h_s, m_s^{z, \widetilde{\cZ}_2}) \} \\
 = & \inf \{s\geq t\,:\, Z^{t,z}_s \leq c^* + \cZ_2 (h_s, m_s^{z, \widetilde{\cZ}_2}) \} \\
 = & \inf\{s\geq t\,:\, Z^{t,z}_s \leq \widetilde{\cZ}_2 (h_s, m_s^{z, \widetilde{\cZ}_2})\} \\
 =: & \tau^*_{\widetilde{\cZ}_2} (h,m, z),  
\end{split}
\end{equation}
where $\tau^*_{\widetilde{\cZ}_2}(h,m,z)$ is an optimal stopping time for the problem $V_H^{\widetilde{\cZ}_2}(h,m,z)$.
As a consequence, 
\begin{align*}
 V_H^{\cZ_2}(h,m, z-c^*) = & \mathbb{E}\Big[\int_t^{\tau^*_{\cZ_2}(h,m, z-c^*)} e^{-\int_t^s (\nu + \beta_H (Z^{t, z-c^*}_u) + \eta m^{z-c^*, \cZ_2}_u)du} \big( r+\delta + (1-\gamma) \nu - \gamma\beta_H (Z^{t, z-c^*}_s)\big) ds \\
 & + (1-\gamma) \Big( e^{-\int_{t}^{\tau^*_{\cZ_2}(h,m, z-c^*)}(\nu + \beta_H(Z^{t, z-c^*}_u) + \eta m^{z-c^*,\cZ_2}_u) du} -1 \Big)\Big]\\
 = & \mathbb{E}\Big[\int_t^{\tau^*_{\widetilde{\cZ}_2}(h,m, z)} e^{-\int_t^s (\nu + \beta_H (Z^{t, z-c^*}_u) + \eta m^{z, \widetilde{\cZ}_2}_u)du} \big( r+\delta + (1-\gamma) \nu - \gamma\beta_H (Z^{t, z-c^*}_s)\big) ds \\
 & + (1-\gamma) \Big( e^{-\int_{t}^{\tau^*_{\widetilde{\cZ}_2}(h,m, z)}(\nu + \beta_H(Z^{t, z-c^*}_u) + \eta m^{z,\widetilde{\cZ}_2}_u) du} -1 \Big)\Big]\\
 < & \mathbb{E}\Big[\int_t^{\tau^*_{\widetilde{\cZ}_2}(h,m, z)} e^{-\int_t^s (\nu + \beta_H (Z^{t, z}_u) + \eta m^{z, \widetilde{\cZ}_2}_u)du} \big( r+\delta + (1-\gamma) \nu - \gamma\beta_H (Z^{t, z}_s)\big) ds \\
 & + (1-\gamma) \Big( e^{-\int_{t}^{\tau^*_{\widetilde{\cZ}_2}(h,m, z)}(\nu + \beta_H(Z^{t, z}_u) + \eta m^{z,\widetilde{\cZ}_2}_u) du} -1 \Big)\Big]\\
 = & V_H^{\widetilde{\cZ}_2}(h,m,z).
\end{align*}
where the second equality follows from \eqref{m_id_inf_vartheta} and \eqref{tau_id}, the inequality holds thanks to $Z^{t, z-c^*}_{\cdot} < Z^{t,z}_{\cdot}$, $\beta_H$ is strictly decreasing, $r+\delta + (1-\gamma)\nu - \beta_H(\cdot) \geq 0$, and $\mathbb{P} \big(\tau^*_{\cZ_2}(h,m, z-c^*)>0 \big) =1$ because $z - c^* > \cZ_2 (h,m)$.  Therefore, the claim in \eqref{V_ineq_inf_vartheta_2} is confirmed.

\medskip

\noindent\underline{Step 3}. Introduce
\[
W(h,m,z) := V_H^{\cZ_1}(h,m,z) - V_H^{\cZ_2}(h,m,z-c^*).
\]
For any $z> \widetilde{\cZ}_2(h,m)$, \eqref{V_ineq_inf_vartheta_1} and \eqref{V_ineq_inf_vartheta_2} imply that 
\begin{equation}\label{W_ineq}
W(h,m,z) \geq V_H^{\widetilde{Z}_2}(h,m,z) - V_H^{\cZ_2}(h,m,z-c^*) >0.
\end{equation}
Meanwhile,
\begin{equation}\label{W_id}
\begin{split}
W(h^*, m^*, z^*) = & V_H^{\cZ_1}(h^*, m^*, z^*) - V_H^{\cZ_2}(h^*, m^*, z^* - c^*) \\
= & V_H^{\cZ_1}\big(h^*, m^*, \cZ_1(h^*, m^*)\big) - V_H^{\cZ_2}\big(h^*, m^*, \cZ_2(h^*,m^*)\big)\\
= & 0,
\end{split}
\end{equation}
where the second equality follows from $z^* = \widetilde{\cZ}_2(h^*,m^*) = \cZ_1(h^*, m^*)$, $z^*- c^* = \widetilde{\cZ}_2(h^*,m^*) - c^* = \cZ_2(h^*, m^*)$, and the third equality holds because both $\cZ_1$ and $\cZ_2$ are threshold equilibria. 

Now, for any $z> \widetilde{\cZ}_2(h,m)$, we have $z > \cZ_1(h,m)$ due to \eqref{tZ2_larger_Z1}, hence $(h,m,z) \in \mathcal{C}_1$. Moreover, $z-c^* > \cZ_2(h,m)$, hence $(h,m, z-c^*) \in \mathcal{C}_2$. Therefore, 
\begin{equation}\label{Vzi_equ}
\begin{split}
 &\big(\mathcal{L} - q(z,m) \big) V_H^{\cZ_1}(h,m,z) + f(z,m) =0,\\
 & \big(\mathcal{L} - q(z-c^*, m) \big) V_H^{\cZ_2}(h,m,z-c^*) + f(z-c^*,m) =0,
\end{split}
\end{equation}
where $\mathcal{L}, q$, and $m$ are defined in \eqref{def:cL}. For any $(h,m)$, introduce
\begin{equation}\label{h0_m0}
h^0_t := 1- (1-h)e^{-(\nu + \theta_{LH}) t}, \quad  m^0_t := 1- (1-m) e^{-\nu t}, \quad \text{and} \quad \widehat{V}^{\cZ_i}(t,z) := V_H^{\cZ_i}\big(h^0_t, m^0_t, z\big). 
\end{equation}
Then
\[
\partial_t \widehat{V}^{\cZ_i}(t,z) = (\nu + \theta_{LH}) \big(1- h^0_t \big) \partial_h V_H^{\cZ_i}\big(h^0_t, m^0_t, z\big) + \nu \big(1- m^0_t \big) \partial_m V_H^{\cZ_i}\big(h^0_t, m^0_t, z\big),
\]
and we obtain from \eqref{Vzi_equ} that $\widehat{V}^{\cZ_1}\big(t, z \big)$ and $\widehat{V}^{\cZ_2}\big(t, z -c^* \big)$ satisfy parabolic PDEs
\begin{equation}
    \label{PDE:hat_V}
\begin{split}
 &\Big(\partial_t + \mu \partial_z + \frac12 \sigma^2 \partial^2_{zz} - q\big(z, m^0_t\big)\Big) \widehat{V}^{\cZ_1}\big(t, z \big) + f\big(z, m^0_t \big) =0, \\
 & \Big(\partial_t + \mu \partial_z + \frac12 \sigma^2 \partial^2_{zz} - q\big(z-c^*, m^0_t\big)\Big) \widehat{V}^{\cZ_2}\big(t, z-c^* \big) + f\big(z -c^*, m^0_t \big) =0.
\end{split}
\end{equation}
Introduce $\widehat{W}(t, z) := W(h^0_t, m^0_t, z)$, the previous equations imply
\begin{equation}\label{W_diff_ineq}
\begin{split}
 &\Big(\partial_t + \mu \partial_z + \frac12 \sigma^2 \partial^2_{zz} - q(z, m^0_t)\Big) \widehat{W}(t, z) \\
 &=  \big(q(z,m^0_t) - q(z-c^*, m^0_t) \big) V_H^{\cZ_2}\big(h^0_t, m^0_t, z\big) - f(z, m^0_t) + f(z-c^*, m^0_t)\\
 &=  \big(\beta_H (z) - \beta_H(z-c^*) \big) \Big[1+  V_H^{\cZ_2}\big(h^0_t, m^0_t, z\big)\Big]\\
 & <0,
\end{split}
\end{equation}
where the inequality follows from $\beta_H(z) < \beta_H(z-c^*)$ and $1+ V^{\cZ_2}_H >0$.

Let $(h,m) = (h^*,m^*)$ in \eqref{h0_m0}, then $(h^0_0, m^0_0) = (h^*, m^*)$. We have from \eqref{W_ineq} and \eqref{W_id} that
\begin{align*}
    \widehat{W}(t, z) >0, & \quad z >  \widetilde{\cZ}_2\big(h^0_t, m^0_t\big),\\
    \widehat{W}(0, z^*) =0, & \quad z^* = \widetilde{\cZ}_2(h^0_0, m^0_0).
\end{align*}
Therefore, $(0,z^*)$ is a minimal point on the domain $\Xi := \{(t,z)\,:\, t\geq 0, z\geq \widetilde{\cZ}_2(h^0_t, m^0_t)\}$. If $\Xi$ satisfies the parabolic interior ball condition at $(0, z^*)$ \footnote{There exists a ball centered $B$ at $(\hat{t}, \hat{z})$ with $\hat{t}\geq 0$, such that $(\tilde{t}, \tilde{z})\in \Xi$ for any $(\tilde{t}, \tilde{z}) \in B$ with $\tilde{t}\geq0$, moreover, $(0, z^*)\in \partial B$ and the boundaries of $B$  and $\Xi$ only intersect at $(0, z^*)$.}, Hopf Lemma (a strong form of maximum principle; see, e.g., Lemma II.2.8 in \cite{lieberman1996second}) implies that 
\begin{equation}\label{hatW_z_ineq}
 \partial_z \widehat{W}(0, z^*) >0.
\end{equation}
However, this contradicts with smooth-pasting for $\widehat{W}(0, z^*)$, because
\begin{align*}
    \partial_z \widehat{W}(0, z^*) = & \partial_z W(h^*, m^*, z^*) = \partial_z V_H^{\cZ_1}(h^*,m^*, z^*) - \partial_H^{\cZ_2}(h^*,m^*, z^*-c^*)\\
    = &\partial_z V_H^{\cZ_1}\big(h^*, m^*, \cZ_1(h^*, m^*) \big) - \partial_z V_H^{\cZ_2}\big(h^*, m^*, \cZ_2(h^*, m^*)\big) = 0,
\end{align*}
where the last equality follows from the smooth-pasting condition for $V_H^{\cZ_1}$ and $V_H^{\cZ_2}$.

Finally, we do not assume that $\Xi$ satisfies the interior ball condition at $(0,z^*)$. We assume in the statement of Proposition \ref{prop:uniqueness_vartheta_inf} that $\cZ(h,m)$ is continuously differentiable in $(h,m)$, then $\widetilde{\cZ}_2(h^0_t, m^0_t)$ is continuously differentiable in $t$. We can still obtain \eqref{hatW_z_ineq} from the following lemma, whose proof is presented at the end of the next section.
\begin{lem}\label{lem:hatW_z}
Suppose that $\widetilde{Z}_2(h^0_t, m^0_t)$ is locally Lipschitz in $t$, then $\partial_z \widehat{W}(0, z^*)>0$.
\end{lem}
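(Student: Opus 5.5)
We prove Lemma \ref{lem:hatW_z} with a Hopf-type barrier argument adapted to a Lipschitz lateral boundary. Write $g(t):=\widetilde{\cZ}_2(h^0_t,m^0_t)$, which by assumption is Lipschitz on $[0,\varepsilon]$ with constant $K$, and set $g(0)=z^*$. From \eqref{W_diff_ineq}, on $\Xi$ the function $\widehat W$ satisfies $(\partial_t+\mu\partial_z+\tfrac12\sigma^2\partial^2_{zz}-q)\widehat W<0$. Since $t$ is a forward-in-time variable for a backward operator, the relevant neighbourhood of $(0,z^*)$ in $\Xi$ is $\{t\in[0,\varepsilon],\ z\in[g(t),g(t)+\rho]\}$. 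On this set $\widehat W>0$ in the interior by \eqref{W_ineq}, and $\widehat W(0,z^*)=0$. The aim is a linear lower bound $\widehat W(0,z)\ge c(z-z^*)$ for small $z-z^*>0$. Such a bound gives $\partial_z\widehat W(0,z^*)>0$, since the derivative exists by the regularity assumption (i).

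The first step is to replace the curved boundary by a wedge. Because $g(t)\le z^*+Kt$, the region
\[
\Theta:=\{(t,z):0\le t\le\varepsilon,\ z^*+Kt\le z\le z^*+\rho\}
\]
lies inside $\Xi$. On $\Theta$ we have $\widehat W\ge0$ everywhere, and $\widehat W\ge\kappa>0$ on the compact piece $\{z=z^*+\rho\}\cup\{t=\varepsilon\}$, shrunk slightly away from the corner, by continuity and \eqref{W_ineq}.

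The second step builds a barrier $\psi$ with $(\partial_t+\mu\partial_z+\tfrac12\sigma^2\partial^2_{zz}-q)\psi\ge0$ on $\Theta$. It should satisfy $\psi\le0$ on the slanted side $z=z^*+Kt$, $\psi\le\kappa$ on the far sides, and $\psi(0,z)\ge c(z-z^*)$. A natural candidate is
\[
\psi(t,z)=a\big(e^{\lambda(z-z^*-Kt)}-1\big)e^{Mt},
\]
or alternatively $\psi=a\big(1-e^{-\lambda(z-z^*-Kt)}\big)(\varepsilon-t)$. The parameters are chosen as follows: $\lambda$ large so that the $\tfrac12\sigma^2\lambda^2$ term dominates $K\lambda$, $|\mu|\lambda$ and $\sup q$; then $M$ to control signs; then $a$ small to fit under $\kappa$. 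Nondegeneracy $\sigma>0$ is what makes this work. With the barrier in hand, $\widehat W-\psi$ is a supersolution of the backward problem that is nonnegative on the parabolic boundary of $\Theta$ relevant for a backward equation, namely the lateral sides and $t=\varepsilon$. The weak maximum principle then gives $\widehat W\ge\psi$ on $\Theta$, in particular at $t=0$. This yields the linear lower bound and hence $\partial_z\widehat W(0,z^*)\ge a\lambda>0$.

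The main obstacle is the barrier computation at the corner $(0,z^*)$, where the slanted side meets $t=0$. One must check that the sign of $\partial_t\psi$, which involves $-K\lambda$ from the wedge slope, is absorbed by the diffusion term. One must also check that using $t=\varepsilon$ rather than $t=0$ as the terminal side is consistent with the backward direction of the operator. I would first try the exponential barrier with $\lambda\ge 2(K+|\mu|+\sup q+1)/\sigma^2$. I would then verify the maximum principle on the non-cylindrical domain $\Theta$ by the standard argument that a negative minimum of $\widehat W-\psi$ cannot occur at an interior or $t=0$ point, where $\partial_t\ge0$, $\partial_z=0$ and $\partial_{zz}\ge0$ produce a sign contradiction. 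Continuity of $\widehat W$ up to the boundary and $C^{1,2}$ smoothness in the interior follow from regularity assumption (i).
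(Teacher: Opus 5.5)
Your argument is correct in substance, but it follows a different route from the paper.

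\textbf{How the paper argues.} The paper's proof is probabilistic. It starts the common state at $z^*+\varepsilon$ and applies It\^o's formula to $e^{-\int_0^s q}\,\widehat W(s,Z_s)$ in the moving strip $0<Z_s-b(s)<r$, where $b(s)=\widetilde{\cZ}_2(h^0_s,m^0_s)$, stopped at time $\delta$. Dropping the positive running term gives $W(h^*,m^*,z^*+\varepsilon)\geq a_r e^{-\overline q\delta}\,\P(\tau^\varepsilon_r<\tau^\varepsilon_0\wedge\delta)$. The paper then bounds this probability below by $C_1\varepsilon$ in two pieces. The exponential scale function $(e^{\alpha x}-1)/(e^{\alpha r}-1)$ handles $\P(\tau_r^\varepsilon<\tau_0^\varepsilon)$. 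A Dynkin estimate $\E[\tau^\varepsilon]\leq 2r\varepsilon/\sigma^2$ handles paths that survive until $\delta$. That second estimate is needed because the strip hugs the boundary, so $W$ has no positive lower bound at time $\delta$.

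\textbf{How your argument compares.} Your barrier $a(e^{\lambda(z-z^*-Kt)}-1)e^{Mt}$ is the analytic counterpart of that scale function. Working on a wedge inside $\Xi$ with the weak minimum principle lets you replace the exit-time estimate by a positive lower bound of $\widehat W$ on the far sides. You also need only the one-sided bound $g(t)\leq z^*+Kt$, not $|\mu-\dot b|\leq K$. The following parts are right:
\begin{itemize}
\item the parabolic boundary you identify for the backward operator (lateral sides plus $t=\varepsilon$);
\item the sign contradiction at interior and $t=0$ minima, which uses a one-sided $t$-derivative, available because $(h^0_t,m^0_t)$ stays in $\mathcal D$;
\item the choice of $\lambda$, with $M=0$ or $M\geq\sup q$, which gives $\mathcal{A}\psi\geq 0$.
\end{itemize}

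\textbf{First point to fix: the terminal corner.} With slope exactly $K$, the corner $(\varepsilon,z^*+K\varepsilon)$ can lie on $\partial\Xi$, for instance if $g(t)=z^*+Kt$. There $\widehat W$ may vanish. By smooth pasting it can even vanish with zero $z$-derivative, when $\widetilde{\cZ}_2=\cZ_1$ at $(h^0_\varepsilon,m^0_\varepsilon)$. Meanwhile $\psi(\varepsilon,z)$ is positive and only linearly small near that corner. So ``shrinking away from the corner'' does not give $\psi\leq\widehat W$ on the terminal side.

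The cure is to use a slope $K'>K$. Then the slanted side for $t>0$, the terminal segment and the right side all lie in $\{z>g(t)\}$. There $\widehat W$ is continuous and strictly positive, so a $\kappa>0$ exists on the far sides. The only contact with $\partial\Xi$ is $(0,z^*)$. At that point $\widehat W\geq 0=\widehat W(0,z^*)$ nearby, which gives the lower semicontinuity needed for the minimum of $\widehat W-\psi$ to be attained.

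\textbf{Second point to fix: the alternative barrier.} Drop $a(1-e^{-\lambda y})(\varepsilon-t)$. It is concave in $z$, so its diffusion term is negative. Its time derivative also contributes the negative term $-a(1-e^{-\lambda y})$. For large $\lambda$ it is therefore a supersolution, not the subsolution you need.
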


\begin{rem}\label{rem:VH_regularity}
 Using the regularity theory for non-degenerate parabolic PDEs,
 the assumption $V^{\cZ}_H \in C^{1,1,2}(\mathcal{C})$ in Proposition \ref{prop:uniqueness_vartheta_inf} can be weaken to be $\widehat{V}^{\cZ} \in C(\widehat{\mathcal{C}})$, where $\widehat{V}^{\mathcal{C}}$ satisfies 
 \[
 \Big(\partial_t + \mu \partial_z + \frac12 \sigma^2 \partial^2_{zz} - q(z, m^0_t) \Big) \widehat{V}^{\cZ}(t,z) + f(z, m^0_t) =0, \quad (t,z)\in \widehat{\mathcal{C}} := \{(t,z)\,:\, z > \cZ(h^0_t, m^0_t)\}.
 \]
 Indeed, consider a parabolic domain $\Omega_p = [0,T] \times [z_1, z_2] \subset \widehat{\mathcal{C}}$ and the following boundary value problem
 \begin{align*}
&\Big(\partial_t + \mu \partial_z + \frac12 \sigma^2 \partial^2_{zz} - q(z, m^0_t) \Big) \widetilde{V}(t,z) + f(z, m^0_t) =0, \quad (t,z) \in [0,T) \times (z_1, z_2),\\
& \widetilde{V}(t,z) = \widehat{V}^{\cZ}(t,z), \quad (t,z) \in [0,T]\times \{z_1, z_2\} \cup \{T\} \times [z_1, z_2].
 \end{align*}
Given the continuity of $\widehat{V}^{\cZ}$, 
it follows from \cite[Corollary 2.4.3]{krylov2008lectures} that $\widetilde{V} \equiv \widehat{V}^{\cZ}$ on $\Omega_p$ and $\widehat{V}^{\cZ} \in C^{1,2}([0,T)\times (z_1, z_2))$. Because the parabolic domain is chosen arbitrarily, we have $\widehat{V}^{\cZ} \in C^{1,2}(\widehat{\mathcal{C}})$.
\end{rem}

\section{Other proofs}\label{app:other_proofs}

\subsection{Proof of Lemma \ref{lem:mono_scheme}}

We will prove the statement for the decreasing scheme by induction. When $n=0$, given that $m^{(0)}_t = 1$ for any $t$, then $m^{(0)}_t \geq m^{(1)}_t$ is automatically satisfied. Suppose that \eqref{m:dec} holds, we want to show \eqref{m:dec} is satisfied with $n$ replaced by $n+1$. To this end, $m^{(n)}_t \geq m^{(n+1)}_t$ and Lemma \ref{lemma monotonicity of brm} combined implies $\tau^{(n+1)}_w \leq \tau^{(n+2)}_w$. Therefore,
\[
m^{(n+1)}_t = \mathbb{P}\big(\tau_{\nu} \wedge \tau^{(n+1)}_w \leq t\,|\, \mathcal{F}^Z_t\big) \geq \mathbb{P}\big(\tau_{\nu} \wedge \tau^{(n+2)}_w \leq t \,|\, \mathcal{F}^Z_t \big) = m^{(n+2)}_t, \quad \text{for any } t.
\]
As a result,  \eqref{m:dec} holds for any $n$ by induction. 

The proof of \eqref{m:inc} is similar.

\subsection{Proof of Proposition \ref{prop:no_run}}\label{app:non-absorbing}
We prove by contradiction. Assume that the earliest strategic withdrawal time is finite. Recall that the discount rates $\beta_L$ and $\beta_H$ form a two-state Markov chain. It follows the same dynamic programming argument as in \eqref{Poisson:VH} that the value functions for both types of depositors are 
\begin{align*}
V_L(t;m) =& \sup_{\tau_w} \int_t^{\tau_w} e^{-\int_t^s ( \beta_L + \theta_{LH} + \eta m_u) du} \big(r+\delta - \beta_L - (1-\gamma) \eta m_s + \theta_{LH} V_H (s; m)\big) ds,\\
V_H(t;m) = & \sup_{\tau_w} \int_t^{\tau_w} e^{-\int_t^s (\beta_H + \theta_{HL} + \eta m_u) du} \big(r+\delta - \beta_H - (1-\gamma) \eta m_s + \theta_{HL} V_L (s; m)\big) ds.
\end{align*}
Becuase the withdrawn share $m$ is nondecreasing in time, one can use the representation \eqref{decomp:1+J} and Markov property to show that both $V_L$ and $V_H$ are non-increasing in time. Therefore, the optimal stopping time for high and low-types are
\begin{align*}
\tau^*_w(L;m) = &\inf\{t\geq 0\,:\, r+\delta - \beta_L  - (1-\gamma) \eta m_t + \theta_{LH} V_H(t;m) \leq 0\},\\
\tau^*_w(H;m) = & \inf\{t\geq 0\,:\, r+\delta - \beta_H - (1-\gamma) \eta m_t + \theta_{HL} V_L(t;m) \leq 0\}.
\end{align*}
Given that $\beta_L < r+\delta - (1-\gamma) \eta$ and $V_H(t;m)\geq 0$, the patient depositors never withdraw strategically, i.e., $\tau^*_w(L;m) =\infty$. 

Denote $\underline{t}$ the first time of strategic withdrawal from impatient depositors in the earliest-run equilibrium. Because there is no exogenous withdrawals, the mass of both types follows
\begin{align*}
 dH_t =& - \theta_{HL} H_t dt + \theta_{LH} L_t dt, \quad H_0 = 0,\\
 dL_t =& -\theta_{LH} L_t dt + \theta_{HL} H_t dt, \quad L_0 = 1.
\end{align*}
The solution of the ODE system is 
\begin{equation}\label{ODE:HL}
\begin{split}
 L_t = & \frac{\theta_{HL}}{\theta_{LH} + \theta_{HL}} + \frac{\theta_{LH}}{\theta_{LH} + \theta_{HL}} e^{-(\theta_{LH} + \theta_{HL})t},\\
 H_t = & \frac{\theta_{LH}}{\theta_{LH} + \theta_{HL}} - \frac{\theta_{LH}}{\theta_{LH} + \theta_{HL}} e^{-(\theta_{LH} + \theta_{HL})t},
\end{split}
\end{equation}
when $t< \underline{t}$. 

Because $\underline{t}$ is the first strategic withdrawal time, $V_H(\underline{t}; m)=0$. Given that $m_t$ is nondecreasing and $V_L(t;m)$ is nonincreasing in time, $V_H(t; m)$ remains zero when $t\geq \underline{t}$. Therefore, after $\overline{t}$, whenever a patient depositor receives a discount rate shock and becomes impatient, it is optimal to withdrawal immediately. As a result, there is no more transition from impatient depositors to patient depositors after $\underline{t}$, and the mass of patient depositors follows 
$dL_t = - \theta_{LH} L_t dt$, 
and $m_t = 1-L_t$ for $t\geq \underline{t}$. The value of patient depositors after $\underline{t}$ is 
\begin{equation}\label{VL_after_ut}
V_L(t;1-L_t) = \int_t^\infty e^{-\int_t^s (\beta_L + \theta_{LH} + \eta (1-L_u)) du} \big( r+ \delta - \beta_L - (1-\gamma) \eta (1-L_s)\big) ds.
\end{equation}
Now come back to $\underline{t}$. Because all impatient depositors withdrawal at time $\underline{t}$, $m_{\underline{t}} = 1-L_{\underline{t}}$, and the optimality of withdrawal for impatient depositors at $\underline{t}$ implies that 
\begin{equation}\label{VH_ut}
r + \delta - \beta_H - (1-\gamma) \eta (1-L_{\underline{t}}) + \theta_{HL} V_L(\underline{t}; 1-L_{\underline{t}}) =0,
\end{equation}
where $L_t$ is given in \eqref{ODE:HL} and $V_L$ is from \eqref{VL_after_ut}. 

Observe from \eqref{ODE:HL} that $1-L_{\underline{t}} \leq \frac{\theta_{LH}}{\theta_{LH} + \theta_{HL}}$ for any possible value of $\underline{t}$. Meanwhile, because $m \mapsto V_L(t;m)$ is nonincreasing, we obtain
\[
V_L(\underline{t}; 1-L_{\underline{t}}) \geq V_L\big(\underline{t}, \frac{\theta_{LH}}{\theta_{LH} + \theta_{HL}}\big) = \frac{r+\delta - \beta_L - (1-\gamma) \eta \frac{\theta_{LH}}{\theta_{LH} + \theta_{HL}}}{\beta_L + \theta_{LH} + \eta \frac{\theta_{LH}}{\theta_{LH} + \theta_{HL}}},
\]
where the equality follows from \eqref{VL_after_ut} with $1-L_t = \frac{\theta_{LH}}{\theta_{LH} + \theta_{HL}}$. Now the assumption 
\[
r+\delta - \beta_H - (1-\gamma) \eta \frac{\theta_{LH}}{\theta_{LH} + \theta_{HL}} + \theta_{HL} \frac{r+\delta - \beta_L - (1-\gamma) \eta \frac{\theta_{LH}}{\theta_{LH} + \theta_{HL}}}{\beta_L + \theta_{LH} + \eta \frac{\theta_{LH}}{\theta_{LH} + \theta_{HL}}} > 0
\]
implies that 
\[
r+\delta - \beta_H - (1-\gamma) \eta (1-L_{\underline{t}}) + \theta_{HL} V_L(\underline{t} ; 1-L_{\underline{t}}) >0,
\]
for any possible value of $\underline{t}$. This means that the new flow of benefit for impatient depositors is still positive at $\underline{t}$, contradicting with the assumption that it is optimal for impatient depositor to withdraw to time $\underline{t}$. 

Given that $\underline{t}=\infty$ and the strategic withdrawal time for any equilibrium must be at least $\underline{t}$, the only equilibrium is $\tau^*_w = \infty$ and $m\equiv 0$.

\subsection{Proof of Proposition \ref{prop:convergence}}

\noindent\underline{Statement (i)}:
We first show that $(\tau_w^*, m^*)$ defined in \eqref{eq:definition limits} is an equilibrium.
Noticing that $m^*$ already satisfies the consistency condition, we limit ourself to show the optimality of $\tau_w^*$ for $m^*$.

First, the left-continuity of the map $z \mapsto \mathds{1}_{[0,t]}(z)$ and the fact that $\tau_w^{(n)} \leq \tau_w ^{(n+1)}$ implies that
$$
m^*_t = \mathbb{P} (\tau_{\nu} \wedge \tau^*_w \leq t\,|\, \mathcal{F}^Z_t) 
= \mathbb{E} \left[ \mathds{1}_{[0,t]}(\tau_{\nu} \wedge \tau^*_w ) \,|\, \mathcal{F}^Z_t \right] 
= \lim_n \mathbb{E} \left[ \mathds{1}_{[0,t]}(\tau_{\nu} \wedge \tau^n_w )\,|\, \mathcal{F}^Z_t \right]  = \lim_n m_t^{(n)},
$$
where we have used the dominated convergence theorem.
Moreover, using the optimality of $\tau_w^{(n)}$ for $m^{(n)}$, for any stopping time $\tau$ we can write
$$
J (t,Z_t, X_t; \tau_w^{(n)}, m^{(n)}) \geq J (t,Z_t, X_t; \tau, m^{(n)}).
$$
Taking limits as $n \to \infty$ in the previous inequality and using that $m^*_t  = \lim_n m_t^{(n)}$ and $\tau_w^* = \lim_n \tau_w^{(n)}$, from the continuity of $J$, it follows that
$$
J (t,X_t; \tau_w^*, m^*) \geq J (t,X_t; \tau, m^*),
$$
which is the optimality of $\tau^*_w$.
Thus, $(\tau_w^*, m^*)$ is an equilibrium.

We next show that $(\tau_w^*, m^*)$ is the equilibrium with earlier withdrawal time. 
If $(\tau_w,m)$ is another equilibrium, then we have $m_t \leq m^{(0)}_t=1$ for all $t$.
Hence, by monotonicity of the best response map in Proposition \ref{lemma monotonicity of brm} and by optimality of $\tau_w$ for $m$, we have
$ \tau_w^{(1)} \leq \tau_w$.
Thus, by consistency of $m$ we obtain 
$m  _t 
= \mathbb P \big(\tau_{\nu}\wedge\tau_w \leq t\,|\, \mathcal{F}^Z_t\big)
\leq \mathbb P \big(\tau_{\nu}\wedge\tau^{(1)}_w \leq t\,|\, \mathcal{F}^Z_t\big) = m^{(1)} _t
$ for all $t$. 
Repeating this argument, by induction we find $  \tau_w^{(n)} \leq \tau_w$ for all $n$, which in turn implies that $\tau^*_w \leq \tau_w$. 

\medskip

\noindent\underline{Statement (ii)}: 
The argument for the increasing scheme is slightly different.
In particular, we will show  $m^*_t  = \lim_n m_t^{(n)}$ only $dt$-a.e., which however is enough to complete the proof.
%Let $\tau_{\nu}$ be an exponential random variable independent of all other random variables. 

Define the processes 
\[
\xi^{(n)}_t := \mathds{1}_{[\tau_{\nu} \wedge \tau ^{(n)}, \infty )} (t) \quad \text{and} \quad \xi^*_t := \mathds{1}_{[ \tau_{\nu} \wedge \tau^*, \infty )} (t).
\]
Both stochastic processes are nondecreasing, bounded, and right-continuous. 
Moreover, by definition of $m^{(n)}$ and $m^*$, we have
$$
 m^{(n)} _t = \mathbb P \big(\tau_{\nu}\wedge\tau^{(n)}_w \leq t\,|\, \mathcal{F}^Z_t\big)  = \mathbb E [\xi^{(n)} _t\,|\, \mathcal{F}^Z_t] \quad \text{and} \quad m^* _t := \mathbb P \big(\tau_{\nu} \wedge \tau^*_w \leq t\,|\, \mathcal{F}^Z_t\big) = \mathbb E [\xi^* _t\,|\, \mathcal{F}^Z_t].
$$
%Both $m^{(n)}$ and $m^*$ are nondecreasing, bounded, and right-continuous. 
Recall from \eqref{m:inc} that $\tau^{(n+1)}_w \leq \tau^{(n)}_w$, for which we find $\xi^{(n)} \leq \xi^{(n+1)}$ for all $t$.
Moreover, the definition of $\tau^*$ implies that 
$$
\lim_{n}\xi^{(n)}_t = \xi^* _t, \quad dt\otimes \mathbb P\text{-a.e.}
$$
The dominated convergence theorem thus gives, for a generic $\alpha >0$, that $\lim_n  \int_0^\infty e^{-\alpha s} \mathbb E [\xi^*_s - \xi^{(n)}_s\,|\, \mathcal{F}^Z_t] ds =0$.
Thus, since $\xi^*_s - \xi^{(n)}_s \geq 0$, this implies that $\lim _n \mathbb E [\xi^{(n)} _t\,|\, \mathcal{F}^Z_t]= \mathbb E [\xi^* _t\,|\, \mathcal{F}^Z_t]$, $dt$-a.e.
Therefore, by using one of the previous identities we obtain 
\[
\lim_n m^{(n)}_t = \lim _n \mathbb E [\xi^{(n)} _t\,|\, \mathcal{F}^Z_t]= \mathbb E [\xi^* _t\,|\, \mathcal{F}^Z_t] = m^*_t, \quad dt\text{-a.e.}
\]

The rest of the  proof follows closely the proof of statement (i), and is therefore only sketched.
Indeed, the latter  convergence is enough to take limits in the inequality
$
J (t,X_t; \tau_w^{(n)}, m^{(n)}) \geq J (t,X_t; \tau, m^{(n)}),
$
in order to deduce that
$
J (t,X_t; \tau_w^*, m^*) \geq J (t,X_t; \tau, m^*),
$
which is the optimality of $\tau^*_w$.

The fact that $(\tau_w^*, m^*)$ is the equilibrium with latest withdrawal time is deduced by taking an equilibrium $(\tau_w,m)$, observing that  $m_t \geq m^{(0)}_t =0$ for all $t$, and then concluding by induction that $\tau_w^{(n)} \geq \tau_w$ for all $n$.

\subsection{Proof of Lemma \ref{lem:hatW_z}}

Let
$$
 b(s):=\widetilde{\cZ}_2(h_s^0,m_s^0),\qquad s\geq0,
$$
Under the regularity assumptions of Proposition \ref{prop:uniqueness_vartheta_inf}, the derivatives of the boundary exist almost everywhere and are bounded on the
compact state space $\mathcal D$; consequently, $b$ is Lipschitz and there is a
constant $K\geq0$ such that
\begin{equation}\label{eq:K}
 |\mu-\dot b(s)|\leq K\qquad\text{for a.e. }s\geq0.
\end{equation}

Fix $\delta>0$. Let $r>0$ be chosen below and take
$\varepsilon\in(0,r)$. Let the common state start from
$z^*+\varepsilon$:
$$
 Z_s^\varepsilon:=z^*+\varepsilon+\mu s+\sigma B_s^c,
 \qquad s\geq0.
$$
Define
\begin{equation}\label{eq:tau0}
 \tau_0^\varepsilon
 :=\inf\bigl\{s\geq0:Z_s^\varepsilon\leq b(s)\bigr\}.
\end{equation}
For every $s<\tau_0^\varepsilon$,
$$
 Z_s^\varepsilon>b(s)=\widetilde{\cZ}_2(h_s^0,m_s^0)
 \geq\cZ_1(h_s^0,m_s^0)
$$
and, since $\widetilde{\cZ}_2=\cZ_2+c^*$,
$$
 Z_s^\varepsilon-c^*>\cZ_2(h_s^0,m_s^0).
$$
Hence, before $\tau_0^\varepsilon$, neither threshold is reached and the two
withdrawn-share processes coincide with the exogenous-withdrawal trajectory:
\begin{equation}\label{eq:safe-m}
 m_s^{z^*+\varepsilon,\cZ_1}
 =m_s^{z^*+\varepsilon,\widetilde{\cZ}_2}
 =m_s^{z^*+\varepsilon-c^*,\cZ_2}
 =m_s^0,
 \qquad 0\leq s<\tau_0^\varepsilon.
\end{equation}

The next localization procedure allows to take care of a potential jump of $m$. For every $n$ sufficiently large such that
$n^{-1}<\varepsilon$, set
\begin{align}
 \tau_n^\varepsilon
 &:=\inf\bigl\{s\geq0:Z_s^\varepsilon\leq b(s)+n^{-1}\bigr\},
 \label{eq:taun}\\
 \tau_r^\varepsilon
 &:=\inf\bigl\{s\geq0:Z_s^\varepsilon\geq b(s)+r\bigr\}.
 \label{eq:taur}
\end{align}
By continuity, $\tau_n^\varepsilon\leq\tau_0^\varepsilon$, with strict
inequality on $\{\tau_0^\varepsilon<\infty\}$. Define
$$
 \rho_n^\varepsilon:=\tau_n^\varepsilon\wedge\tau_r^\varepsilon\wedge\delta.
$$
On $[0,\rho_n^\varepsilon]$, relation \eqref{eq:safe-m} holds and both value
functions appearing in the definition of $W$ remain in their continuation regions. The interior
regularity of $W$ due to the assumptions in Proposition \ref{prop:uniqueness_vartheta_inf} therefore permits the application of
It\^o's formula to
$
 e^{-\int_0^s q(Z_u^\varepsilon,m_u^0)\,du}
 \widehat W(s,Z_s^\varepsilon).
$
Using \eqref{W_diff_ineq} gives
\begin{align}
 W(h^*,m^*,z^*+\varepsilon)
 =\E_{z^*+\varepsilon}\Bigg[&
 \int_0^{\rho_n^\varepsilon}
 e^{-\int_0^s q(Z_u^\varepsilon,m_u^0)\,du}
 \Big[(q-\partial_s - \mathcal{L})W](s,h^0_s, m^0_s, Z_s^\varepsilon)\,ds \nonumber\\
 &+e^{-\int_0^{\rho_n^\varepsilon}q(Z_u^\varepsilon,m_u^0)\,du}
 W\bigl(h_{\rho_n^\varepsilon}^0,m_{\rho_n^\varepsilon}^0,
 Z_{\rho_n^\varepsilon}^\varepsilon\bigr)\Bigg]. 
 \label{eq:Ito-W}
\end{align}
On the event
$\{\tau_r^\varepsilon<\tau_n^\varepsilon\wedge\delta\}$,
$$
 Z_{\tau_r^\varepsilon}^\varepsilon
 =b(\tau_r^\varepsilon)+r.
$$
The function
$s \mapsto
 W\bigl(h_s^0,m_s^0,b(s)+r\bigr)$
is continuous and strictly positive on $[0,\delta]$. Therefore,
\begin{equation}\label{eq:ar}
 a_r:=\min_{0\leq s\leq\delta}
 W\bigl(h_s^0,m_s^0,b(s)+r\bigr)>0.
\end{equation}
Since $0\leq m_s^0\leq1$ and $\beta_H$ is bounded, with
$$
 \overline q:=\nu+\sup_{z\in\mathbb R}\beta_H(z)+\eta,
$$
we obtain from \eqref{eq:Ito-W}, using also that $W>0$,
\begin{equation}\label{eq:W-prob-n}
 W(h^*,m^*,z^*+\varepsilon)
 \geq a_r e^{-\overline q\delta}
 \P_{z^*+\varepsilon}
 \bigl(\tau_r^\varepsilon<\tau_n^\varepsilon\wedge\delta\bigr).
\end{equation}
As $n\to\infty$, $\tau_n^\varepsilon\uparrow\tau_0^\varepsilon$ almost
surely. Hence
\begin{equation}\label{eq:W-prob}
 W(h^*,m^*,z^*+\varepsilon)
 \geq a_r e^{-\overline q\delta}
 \P_{z^*+\varepsilon}
 \bigl(\tau_r^\varepsilon<\tau_0^\varepsilon\wedge\delta\bigr).
\end{equation}
It remains to estimate the probability on the right-hand side of \eqref{eq:W-prob} from below. The proof is split into three steps.

\medskip
\noindent\underline{Step 1.}
Let
$$Y_s^\varepsilon:=Z_s^\varepsilon-b(s),
 \qquad
\tau^\varepsilon:=\tau_0^\varepsilon\wedge\tau_r^\varepsilon.
$$
Then $Y_0^\varepsilon=\varepsilon$, $0<Y_s^\varepsilon<r$ for
$s<\tau^\varepsilon$, and
$$
 dY_s^\varepsilon=(\mu-\dot b(s))\,ds+\sigma\,dB_s^c.
$$
Take $g(x):=x(r-x)$ on $[0,r]$. For every $T>0$, It\^o's formula yields
\begin{align}
 &\E\bigl[g(Y_{\tau^\varepsilon\wedge T}^\varepsilon)\bigr]-g(\varepsilon)
=
 \E\left[\int_0^{\tau^\varepsilon\wedge T}
 \left\{(\mu-\dot b(s))\bigl(r-2Y_s^\varepsilon\bigr)-\sigma^2\right\}ds\right].
 \label{eq:g-Ito}
\end{align}
On $\{s<\tau^\varepsilon\}$,
$|r-2Y_s^\varepsilon|\leq r$. Choose $r>0$ sufficiently small that
\begin{equation}\label{eq:rK}
 rK\leq\frac{\sigma^2}{2}.
\end{equation}
Since $g\geq0$, equations \eqref{eq:K} and \eqref{eq:g-Ito} imply
$$
 0\leq\E\bigl[g(Y_{\tau^\varepsilon\wedge T}^\varepsilon)\bigr]
 \leq g(\varepsilon)-\frac{\sigma^2}{2}
 \E[\tau^\varepsilon\wedge T].
$$
Letting $T\to\infty$ gives
\begin{equation}\label{eq:Etau}
 \E[\tau^\varepsilon]
 \leq\frac{2\varepsilon(r-\varepsilon)}{\sigma^2}
 \leq\frac{2r}{\sigma^2}\,\varepsilon.
\end{equation}
In particular, Markov's inequality yields
\begin{equation}\label{eq:tau-tail}
 \P(\tau^\varepsilon\geq\delta)
 \leq\frac{2r}{\sigma^2\delta}\,\varepsilon.
\end{equation}

\medskip
\noindent\underline{Step 2.} Recall the constant $K$ from \eqref{eq:K}. Assume first that $K>0$, set
$$
 \alpha:=\frac{2K}{\sigma^2},
 \qquad
 \psi(x):=\frac{e^{\alpha x}-1}{e^{\alpha r}-1},
 \qquad x\in[0,r].
$$
Then $\psi(0)=0$, $\psi(r)=1$, $\psi'>0$, and $\psi''=\alpha\psi'$.
Using \eqref{eq:K}, for almost every $s$ and every $x\in(0,r)$,
$$
 (\mu-\dot b(s))\psi'(x)+\frac{\sigma^2}{2}\psi''(x)
 \geq-K\psi'(x)+\frac{\sigma^2\alpha}{2}\psi'(x)=0.
$$
It\^o's formula and optional sampling therefore give
$$
\begin{aligned}
 \psi(\varepsilon)
 &\leq\E\bigl[\psi(Y_{\tau^\varepsilon}^\varepsilon)\bigr]=\P(\tau_r^\varepsilon<\tau_0^\varepsilon).
\end{aligned}
$$
Since $e^{\alpha\varepsilon}-1\geq\alpha\varepsilon$,
\begin{equation}\label{eq:upper-exit-K}
 \P(\tau_r^\varepsilon<\tau_0^\varepsilon)
 \geq\frac{\alpha}{e^{\alpha r}-1}\,\varepsilon.
\end{equation}
If $K=0$, the same argument with $\psi(x)=x/r$ gives
\begin{equation}\label{eq:upper-exit-zero}
 \P(\tau_r^\varepsilon<\tau_0^\varepsilon)
 \geq\frac{1}{r}\,\varepsilon.
\end{equation}
Thus, in either case,
\begin{equation}\label{eq:c0}
 \P(\tau_r^\varepsilon<\tau_0^\varepsilon)
 \geq c_0(r)\varepsilon,
\end{equation}
where
$$
 c_0(r):=
 \begin{cases}
 \displaystyle\frac{\alpha}{e^{\alpha r}-1},&K>0,\\[1.1ex]
 \displaystyle\frac1r,&K=0,
 \end{cases}
 \qquad\text{and}\qquad
 c_0(r)\sim\frac1r\quad\text{as }r\downarrow0.
$$

\medskip
\noindent\underline{Step 3.}
Combining \eqref{eq:tau-tail} and \eqref{eq:c0},
\begin{align}
 \P(\tau_r^\varepsilon<\tau_0^\varepsilon\wedge\delta)
 &\geq
 \P(\tau_r^\varepsilon<\tau_0^\varepsilon)
 -\P(\tau^\varepsilon\geq\delta)\geq
 \left(c_0(r)-\frac{2r}{\sigma^2\delta}\right)\varepsilon.
 \label{eq:prob-final}
\end{align}
Because $c_0(r)\sim r^{-1}$ as $r\downarrow0$, we may choose $r>0$
small enough so that both \eqref{eq:rK} holds and
$$
 C_1:=c_0(r)-\frac{2r}{\sigma^2\delta}>0.
$$
For all sufficiently small $\varepsilon\in(0,r)$,
\begin{equation}\label{eq:linear-W}
 W(h^*,m^*,z^*+\varepsilon)
 \geq a_r e^{-\overline q\delta}C_1\varepsilon.
\end{equation}
Since $\widehat W(0,z^*)=W(h^*,m^*,z^*)=0$, it follows that
\begin{equation}\label{eq:positive-derivative}
 \liminf_{\varepsilon\downarrow0}
 \frac{\widehat W(0,z^*+\varepsilon)-\widehat W(0,z^*)}{\varepsilon}
 \geq a_r e^{-\overline q\delta}C_1>0.
\end{equation}
By the regularity assumption of $V_H$, $\widehat{W}$ is differentiable in $z$ at $z^*$. Therefore, the statement of Lemma \ref{lem:hatW_z} follows from the previous inequality.

% On the other hand, the assumed smooth-fit conditions for the two equilibrium value
% functions (i.e.\ their continuous differentiability across the relative free boundaries) give
% $$
% \begin{aligned}
%  \partial_z^+\widehat W(0,z^*)
%  &=\partial_z^+V_H^{\cZ_1}
%    \bigl(h^*,m^*,\cZ_1(h^*,m^*)\bigr)-
%  \partial_z^+V_H^{\cZ_2}
%    \bigl(h^*,m^*,\cZ_2(h^*,m^*)\bigr)=0.
% \end{aligned}
% $$
% Hence a contradiction to \eqref{eq:positive-derivative}.

\section{Withdrawal from patient depositors}\label{app:betaL_high}

In this section, we consider the analogue of Proposition \ref{prop:Poisson} when $\beta_L > r+ \delta - (1-\gamma) \eta$. In this case, patient depositors are already at risk at time zero. Because $r+\delta - \beta_L - (1-\gamma) \eta<0$, if all depositors withdraw at time zero, the net flow of benefit is negative for everyone, hence, withdrawing at time zero is optimal for everyone. Therefore, the earliest-run equilibrium happens at time zero. In the latest-run equilibrium, both types of depositors stay as long as possible, until the exogenous withdrawn share is sufficiently high, which forces them to withdraw. In Figure \ref{fig:betaL_high}, those depositors who become impatient before $\overline{t}_H$ do not withdraw until $\overline{t}_H$. At $\overline{t}_H$, the mass of $e^{-\nu \overline{t}_H} - e^{-(\nu + \theta_{LH})\overline{t}_H}$ impatient depositors withdraw together. Between $\overline{t}_H$ and $\overline{t}_L$, depositors withdraw immediately when they experience discount rate shocks and become impatient. At $\overline{t}_L$, all remaining patient depositors withdraw together, hence the withdrawn share turns to one. Therefore, the latest-run equilibrium features two clustered withdrawals: all impatient depositors withdraw together at $\overline{t}_H$ and all remaining patient depositors withdraw together at $\overline{t}_L$. The following result provides a formal statement of the extreme equilibria and the form of any generic equilibrium. 

\begin{figure}[ht!]
\centering
\includegraphics[scale=0.6]{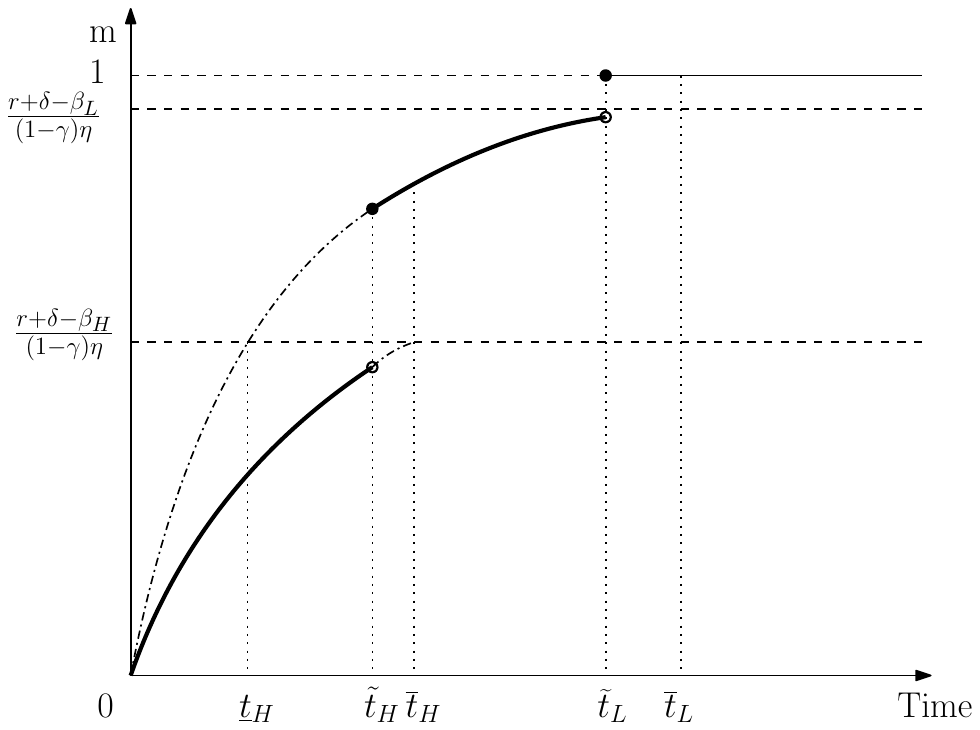}
\caption{Equilibrium withdrawn share}
 \begin{minipage}{\textwidth}
	\footnotesize
	\vspace{2mm}
    This figure presents all mean-field equilibria in Proposition \ref{prop:Possion_high_betaL}. The high discount rate $\beta_H$ satisfies $\beta_H < r+\delta$. In the earliest-run equilibrium, all depositors withdraw at time zero. In the latest-run equilibrium, all impatient depositors do not withdraw before $\overline{t}_L$, they withdraw at time $\overline{t}_L$. When patient depositors become impatient between $\overline{t}_L$ and $\overline{t}_H$, they withdraw immediately. At $\overline{t}_H$, all remaining patient depositors withdraw together.  The solid black line represents the withdrawn share in one equilibrium sandwiched between the earliest-run and the latest-run equilibria.   
    \end{minipage}
    \label{fig:betaL_high}
\end{figure}

\begin{prop}\label{prop:Possion_high_betaL}
Suppose that $r + \delta > \beta_L > r+\delta - (1-\gamma) \eta$. When $\beta_H < r+\delta$, define 
\begin{align}
 \overline{t}_H := & \left\{\begin{array}{ll} -\tfrac{1}{\nu} \log \big[1-\tfrac{r+\delta - \beta_H}{(1-\gamma) \eta}\big], & \nu >0, \\ \infty, & \nu=0, \end{array}\right. \quad \overline{t}_L =\left\{ \begin{array}{ll} -\tfrac{1}{\nu + \theta_{LH}} \log \big[1-\tfrac{r+\delta - \beta_L}{(1-\gamma)\eta}\big] \vee \overline{t}_H, & \nu >0,\\ \infty, & \nu = 0,\end{array}\right.\label{case2:ot_LH}\\
 \underline{t}_H :=& -\tfrac{1}{\nu + \theta_{LH}} \log\big[1-\tfrac{r+\delta - \beta_H}{(1-\gamma) \eta}\big].\label{case2:ut_L}
\end{align}
When $\beta_H \geq r+\delta$, $\overline{t}_H = \underline{t}_H=0$ and $\overline{t}_L$ is defined as in \eqref{case2:ot_LH}.  
Then,
\begin{enumerate}
\item[i)] The earliest-run equilibrium is $\underline{\tau}^*_w = 0$ and $\underline{m}_s = 1$ for any $s\geq 0$, i.e., all depositors strategically withdrawal immediately. The latest-run equilibrium is $\overline{\tau}^*_w = \overline{t}_L \wedge \big( \tau_\theta \vee \overline{t}_H\big)$ and 
\[
\overline{m}_s = \left\{\begin{array}{ll} 1- e^{-\nu s}, & s < \overline{t}_H,\\ 1- e^{-(\nu + \theta_{LH}) s}, & \overline{t}_H \leq s< \overline{t}_L,\\ 1, & \overline{t}_L \leq s. \end{array}\right.
\]
In the latest-run equilibrium, all impatient depositors do not strategically withdraw before $\overline{t}_H$, but all impatient depositors, of mass $e^{-\nu \overline{t}_H} - e^{-(\nu + \theta_{HL})\overline{t}_H}$, strategically withdraw together at $\overline{t}_H$. Between $\overline{t}_H$ and $\overline{t}_L$, depositors strategically withdraw when they become impatient. At $\overline{t}_L$, all remaining depositors withdraw together. 
\item[ii)] Each equilibrium is indexed by a pair $(\tilde{t}_H, \tilde{t}_L)$ with $ 
\tilde{t}_L \leq \overline{t}_L$ and $\underline{t}_H\wedge \tilde{t}_L \leq \tilde{t}_H  \leq \overline{t}_H \wedge \tilde{t}_L$. The optimal withdrawal time is $\tilde{\tau}^*_w = \tilde{t}_L \wedge \big(\tau_\theta \vee \tilde{t}_H \big)$ and 
\begin{equation}\label{tm_betaL_high}
\widetilde{m}_s = \left\{\begin{array}{ll} 1- e^{-\nu s}, & s< \tilde{t}_H,\\ 1- e^{-(\nu + \theta_{HL}) s}, & \tilde{t}_H \leq s< \tilde{t}_L,\\ 1, & \tilde{t}_L \leq s. \end{array}\right.
\end{equation}
When $\tilde{t}_L = \tilde{t}_H$, $\tilde{\tau}^*_w = \tilde{t}_L$ and the middle case in the previous equation is absent. 
\end{enumerate}
\end{prop}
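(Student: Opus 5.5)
I would mirror the proof of Proposition \ref{prop:Poisson}, now tracking two thresholds instead of one. First, for a given nondecreasing withdrawn share $m$ (with no common state), I would write the two value functions $V_L(t;m)$ and $V_H(t;m)$ exactly as in \eqref{Poisson:VH}--\eqref{Poisson:VL}. Since impatience is absorbing and $m$ is nondecreasing, the integrand $r+\delta-\beta_H-(1-\gamma)\eta m_s$ of $V_H$ is nonincreasing in $s$. Hence the impatient depositor's optimal rule is the first time this flow becomes nonpositive:
\[
\tau^*_w(H;m)=\inf\{s: r+\delta-\beta_H-(1-\gamma)\eta m_s\le 0\},
\]
and $V_H\ge 0$ vanishes from then on. For the patient depositor, the integrand $r+\delta-\beta_L-(1-\gamma)\eta m_s+\theta_{LH}V_H(s;m)$ is also nonincreasing in $s$. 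This follows because $V_H(\cdot;m)$ is nonincreasing in time, which I would check from the representation \eqref{decomp:1+J} together with the monotonicity of $m$. So the patient depositor's optimal time is the first time this flow becomes nonpositive. After $\tau^*_w(H;m)$ we have $V_H=0$, so this is
\[
\tau^*_w(L;m)=\inf\{s\ge\tau^*_w(H;m): r+\delta-\beta_L-(1-\gamma)\eta m_s\le 0\}.
\]
Because $\beta_L<\beta_H$, one has $\tau^*_w(L;m)\ge\tau^*_w(H;m)$ whenever the patient flow is still positive, and the two coincide otherwise. A representative depositor therefore withdraws at $\tau^*_w(L;m)\wedge(\tau_\theta\vee\tau^*_w(H;m))$.

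Second, I would write the consistency equation as the analogue of \eqref{Poisson:m_eqn}. With $\tilde t_H:=\tau^*_w(H;m)$ and $\tilde t_L:=\tau^*_w(L;m)$ this gives
\[
m_s=1-e^{-\nu s}+e^{-\nu s}(1-e^{-\theta_{LH}s})\mathbbm{1}_{\{\tilde t_H\le s\}}
\]
for $s<\tilde t_L$, and $m_s=1$ for $s\ge\tilde t_L$. Every equilibrium share is therefore of the form \eqref{tm_betaL_high}. Conversely, I would characterize the pairs $(\tilde t_H,\tilde t_L)$ that reproduce themselves by plugging \eqref{tm_betaL_high} into the two threshold formulas.

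For the $H$-threshold, self-consistency requires two things. At $\tilde t_H$, the post-jump level must satisfy $(1-\gamma)\eta(1-e^{-(\nu+\theta_{LH})\tilde t_H})\ge r+\delta-\beta_H$, which gives $\tilde t_H\ge\underline t_H$. Before the jump, $(1-\gamma)\eta(1-e^{-\nu s})<r+\delta-\beta_H$ must hold for $s<\tilde t_H$, which gives $\tilde t_H\le\overline t_H$. Both are exactly the computation of Proposition \ref{prop:Poisson}. If $\tilde t_L$ comes first, then $m$ jumps to $1$ and the $H$-flow is negative, since $\beta_H>\beta_L>r+\delta-(1-\gamma)\eta$. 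This produces the truncation $\tilde t_H\le\tilde t_L$ and the case $\underline t_H\wedge\tilde t_L$.

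For the $L$-threshold, at $\tilde t_L$ the share jumps to $1$, and the inequality $r+\delta-\beta_L-(1-\gamma)\eta<0$ makes the jump self-confirming at any time. Before $\tilde t_L$, the patient flow must stay positive under the share $1-e^{-(\nu+\theta_{LH})s}$ (or $1-e^{-\nu s}$). That is exactly the condition $\tilde t_L\le\overline t_L$, where the maximum with $\overline t_H$ comes from the patient flow not being able to turn before $V_H$ vanishes. Taking $\tilde t_L=\tilde t_H=0$ gives the earliest-run equilibrium $m\equiv1$, and taking $\tilde t_H=\overline t_H$, $\tilde t_L=\overline t_L$ gives the latest. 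The cases $\nu=0$ and $\beta_H\ge r+\delta$ are the degenerate specializations.

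The main obstacle is the patient depositor's problem, since her flow contains $\theta_{LH}V_H$. I would need to verify carefully that $s\mapsto V_H(s;m)$ is nonincreasing for the discontinuous $m$ in \eqref{tm_betaL_high}, so that a first-crossing rule is optimal. I would also need to justify that patient withdrawal cannot be triggered strictly before $\tau^*_w(H;m)$, where $V_H>0$ still adds to the flow. I would handle both by the integration-by-parts representation \eqref{decomp:1+J} and a direct comparison argument.
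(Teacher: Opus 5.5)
Your proposal is correct and follows essentially the same route as the paper. You derive first-crossing rules for each type (with $\tau^*_w(H;m)\le\tau^*_w(L;m)$ because $\beta_L<\beta_H$), write the consistency equation \eqref{Poisson:m_eqn_2}, and then read off the two extremes and the admissible pairs $(\tilde t_H,\tilde t_L)$.

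Two small remarks. First, the obstacle you flag needs no time-monotonicity of $V_H$. Before $\tau^*_w(L;m)$ the patient integrand is positive because $V_H\ge 0$. From then on it is nonpositive, because $V_H=0$ there and $m$ is nondecreasing. This sign argument is exactly what the paper uses. Second, $\tilde t_L\le\overline t_L$ is necessary but not ``exactly'' the condition. If $\tilde t_H<\tilde t_L$, one in fact needs $\tilde t_L\le -\frac{1}{\nu+\theta_{LH}}\log\big[1-\frac{r+\delta-\beta_L}{(1-\gamma)\eta}\big]$. This does not hurt your proof, since the statement only asserts necessity.
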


\begin{proof}[Proof of Proposition \ref{prop:Possion_high_betaL}]

For given $m$, the optimal withdrawal time for an impatient depositor is still $\tau^*_w(H;m)$ in \eqref{tau_Lm} and $V_H(t; m) =0$ when $t\geq \tau^*_w(H;m)$. For patient depositors, introduce 
\begin{equation}\label{tau_Hm}
\tau^*_w(L;m) = \inf\{s\geq t\,:\, r+\delta - \beta_L - (1-\gamma) \eta m_{s} \leq 0\}.
\end{equation}
Due to $\beta_L < \beta_H$, $\tau^*_w(H;m) \leq \tau^*_w(L,m)$. We claim that $\tau^*_w(L,m)$ is the optimal withdrawal time for a patient depositor. To see this, notice from \eqref{Poisson:VH} that patient depositors should optimally withdraw at the time
$$\inf\{s\geq t\,:\, r+\delta - \beta_L - (1-\gamma) \eta m_{s} + \theta_{LH} V_H(s;m) \leq 0\}.$$
Given that $V_H\geq 0$, we have that the latter time is larger or equal than $\tau^*_w(L;m)$ as in \eqref{tau_Hm}.
Hence,
\begin{align}
V_L(t;m)=&\int_t^{\tau^*_w(L;m)} 
     e^{-\int_t^s (\nu + \beta_L + \theta_{LH} + \eta m_u) du} \big( r+ \delta -\beta_L - (1-\gamma) \eta m_s + \theta_{LH} V_H (s; m)\big) ds 
     \nonumber \\
& + \sup_{\tau_w} \int_{\tau^*_w(L;m)}^{\tau_w} e^{-\int_t^s (\nu + \beta_L + \theta_{LH} + \eta m_u) du} \big( r+\delta - \beta_L - (1-\gamma) \eta m_s \big) ds,
\label{eq:optimalH}
\end{align}
where we have used that $V_H(s;m)=0$ for any $s\geq \tau^*_w(L;m)\geq \tau^*_w(H;m)$.
By definition of $\tau^*_w(L;m)$ and by the fact that $t\mapsto m_t$ is nondecreasing, we have that the supremum on the right-hand side of \eqref{eq:optimalH} is attained at $\tau^*_w(L;m)$. Hence, the claimed optimality of $\tau^*_w(L;m)$ for patient depositors is shown.

%we recall the high-type value function from \eqref{Poisson:VH}. For any $t\geq \tau^*_w(H; m)$, note that $V_L(t; m) =0$ because $\tau^*_w(H;m) \geq \tau^*_w(L;m)$ and $V_L(t;m) =0$ for any $t \geq \tau^*_w(L;m)$. Meanwhile, $t\mapsto m_t$ is nondecreasing. Therefore $\lambda (R_H -1) - (1-\gamma) \eta m_t + \theta V_L(t; m) \leq 0$ for any $t\geq \tau^*_w(H;m)$. It then follows from the form of $V_H(t; m)$ that it is optimal to stop immediately when $t\geq \tau^*_w(H;m)$. When $t< \tau^*_w(H;m)$, $\lambda(R_H-1) - (1-\gamma) \eta m_t >0$ and $V_L(t;m)\geq 0$. Then $V_H(t;m) >0$ for any $t<\tau^*_w(H;m)$. Putting two cases together, we conclude that $\tau^*_w$ is the optimal withdrawal time for a high-type depositor. 

In summary, for any given $m$, a representative depositor either optimally withdraws at $\tau^*_w(L;m)$ if she is still patient, or withdraws at the later time between $\tau^*_w(H;m)$ and $\tau_\theta$ when she experiences a discount rate shock and becomes impatient. Therefore, the optimal withdrawal time for the representative depositor is 
\[
\tau^*_w = \tau^*_w(L; m) \wedge \big(\tau_\theta \vee \tau^*_w(H;m)\big).
\]

By the consistency condition \eqref{def:m_prob} for the case without common state, for any $s\geq0$, we have 
\begin{align}
m_s = &\mathbb{P}(\tau_\nu \wedge \tau^*_w \leq s) = \mathbb{P} \Big(\tau_\nu \wedge \tau^*_w(L; m) \wedge \big(\tau_\theta \vee \tau^*_w(H;m)\big) \leq s \Big) \nonumber\\
= & \mathbb{P}(\tau_\nu \leq s) + \mathbb{P}\Big(\tau_\nu > s, \tau^*_w(L; m) \wedge \big(\tau_\theta \vee \tau^*_w(H;m)\big) \leq s\Big)\nonumber\\
= & 1- e^{-\nu s} + e^{-\nu s}\,  \mathbb{P} \Big(\tau^*_w(L; m) \wedge \big(\tau_\theta \vee \tau^*_w(H;m)\big) \leq s \Big)\nonumber\\
= & 1- e^{-\nu s} + e^{-\nu s} \Big( (1-e^{-\theta_{LH} s}) 1_{\{\tau^*_w(H;m) \leq s < \tau^*_w(L;m)\}} + 1_{\{\tau^*_w(L;m) \leq s\}}\Big).\label{Poisson:m_eqn_2}
\end{align}
The reason for the last equality is the following. When $s<\tau^*_w(H;m)$, $\tau^*_w(L;m) \wedge \big(\tau_\theta \vee \tau^*_w(H;m)\big)$ is at least $\tau^*_w(L;m)$, hence the set $\big\{\tau^*_w(L; m) \wedge \big(\tau_\theta \vee \tau^*_w(H;m)\big) \leq s \big\}$ is empty. When $\tau^*_w(H;m) \leq s< \tau^*_w(L;m)$, the set $\big\{\tau^*_w(L; m) \wedge \big(\tau_\theta \vee \tau^*_w(H;m)\big) \leq s \big\}$ is $\{\tau_\theta \leq s\}$. Finally, when $\tau^*_w(L;m) \leq s$, the inequality $\tau^*_w(L; m) \wedge \big(\tau_\theta \vee \tau^*_w(H;m)\big) \leq s$ always holds. The withdrawn share $m$ shows up on both sides of \eqref{Poisson:m_eqn_2}, hence it is an equation for $m$. 

To identify all solutions of \eqref{Poisson:m_eqn_2}, we first observe that the right-hand side of \eqref{Poisson:m_eqn_2} is bounded from above by $\underline{m}_t = 1$ for any $t\geq 0$. This corresponds to the case where all depositors withdraw at time $\underline{t}=0$. Due to the assumption $(1-\gamma) \eta \geq r+\delta - \beta_L > r+\delta-\beta_H$, it follows from \eqref{tau_Lm} and \eqref{tau_Hm} that $\tau^*_w(L, \underline{m}) = \tau^*_w(H, \underline{m}) = \underline{t} =0$. Therefore $(\underline{t}, \underline{m}) = (0,1)$ is the earliest equilibrium. 

To identify the latest equilibrium, define
\begin{align*}
 \overline{t}_H := & \inf\{s\geq 0\,:\, r+\delta - \beta_H - (1-\gamma)\eta (1-e^{-\nu s}) \leq 0\} \quad \text{and}\\
 \overline{t}_L := & \inf\{s\geq \overline{t}_H\,:\, r+\delta - \beta_L - (1-\gamma) \eta (1- e^{-(\nu+\theta_{LH}) s})\leq 0\}. 
\end{align*}
They are the upper bounds for $\tau^*_w(H;m)$ and $\tau^*_w(L,m)$, respectively. Introduce  
\[
\overline{m}_s = \left\{\begin{array}{ll} 1- e^{-\nu s}, & s < \overline{t}_H,\\ 1- e^{-(\nu + \theta_{LH}) s}, & \overline{t}_H \leq s< \overline{t}_L,\\ 1, & \overline{t}_L \leq s. \end{array}\right.
\]
From this definition, we can check $\tau_w^*(H; \overline{m}) = \overline{t}_H$. Meanwhile, for any $s< \overline{t}_L$, $r+\delta - \beta_L - (1-\gamma) \eta \overline{m}_s >0$; for any $s\geq \overline{t}_L$, $\overline{m}_s =1$, because $r+\delta - \beta_L - (1-\gamma) \eta \overline{m}_s \leq 0$. Therefore, $\tau^*_w(L; \overline{m}) = \overline{t}_L$. This shows that $\overline{\tau}^*_w = \overline{t}_L \wedge \big( \tau_\theta \vee \overline{t}_H\big)$ together with $\overline{m}$ forms an equilibrium. Given that $\overline{t}_L$ and $\overline{t}_H$ are the upper bounds for the optimal withdrawal time for low-type and high-type depositors, $(\overline{\tau}^*_w, \overline{m})$ is the latest equilibrium. 

The proof for the form of an arbitrary equilibrium in \eqref{tm_betaL_high} is similar to the proof of Proposition \ref{prop:Poisson} ii).
\end{proof}

\section{Comparative statics for equilibria with continuous private states}\label{app:comp_statics}

Figure \ref{fig:comp} presents the impact of the bank failure sensitivity $\eta$, the private state volatility $\sigma$, and the initial private state dispersion $\sigma_0$ on the withdrawn share dynamics and the stopping boundary in the model setting of Section \ref{Section: continuous state multiple clusters}. Panel (a) is consistent with Proposition \ref{prop comparative statics strength parameter} --- the withdrawn share increases with $\eta$ and the clustered withdrawal happens earlier. As the private state becomes more volatile, the left-tail of the net flow benefit, $r+\delta - \beta(X_t)$, forms earlier. Hence, more depositors become marginal sooner, leading to an earlier clustered withdrawal, as Panel (b) shows. The same intuition applies when the initial private state dispersion is larger. When $\sigma_0=1$, Panel (c) shows that a group of marginal depositors, whose initial private states are sufficiently bad, already withdraw together at time zero. 

\begin{figure}
\centering
\begin{subfigure}{\textwidth}
    \includegraphics[scale=0.6]{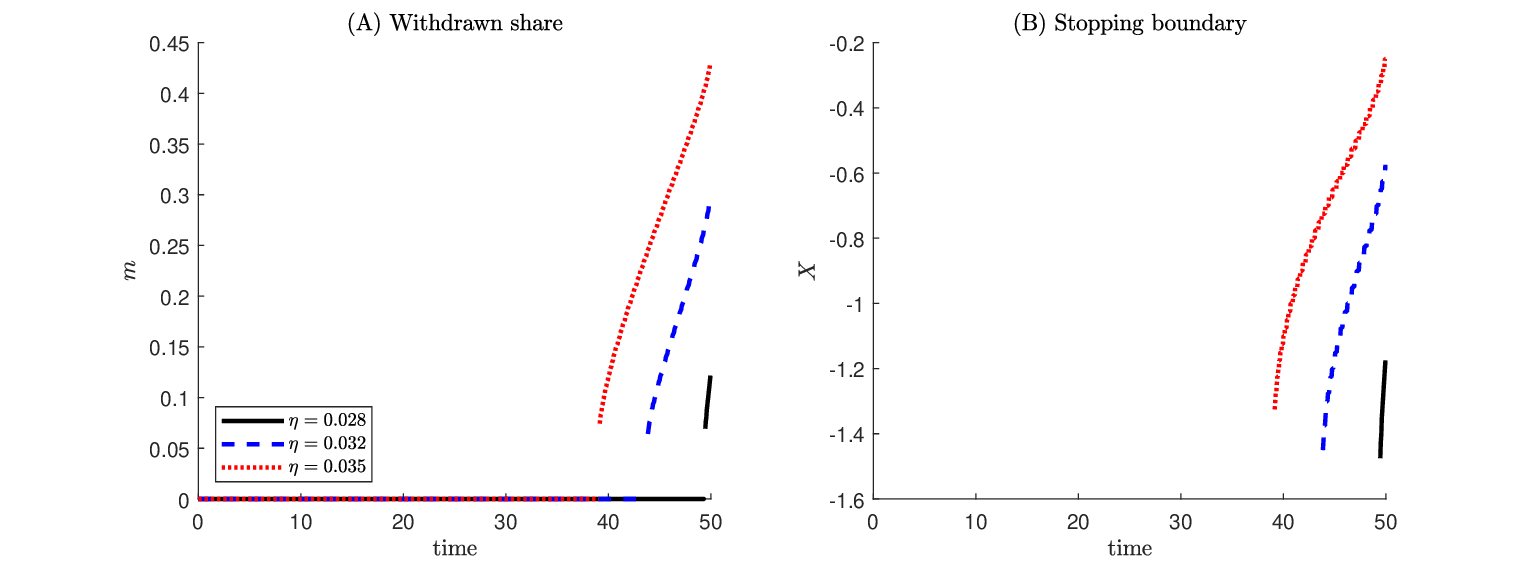}
\caption{Different $\eta$}
\end{subfigure}
\begin{subfigure}{\textwidth}
   \includegraphics[scale=0.6]{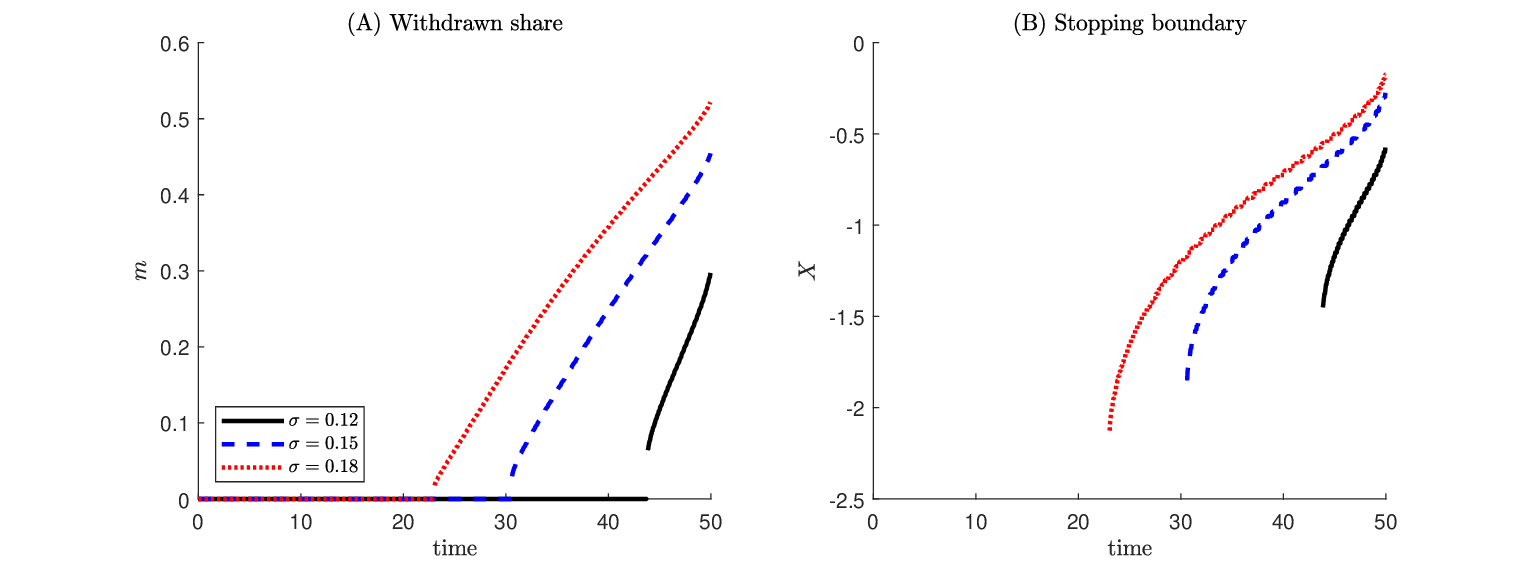}
\caption{Different $\sigma$}
\end{subfigure}

\begin{subfigure}{\textwidth}
  \includegraphics[scale=0.6]{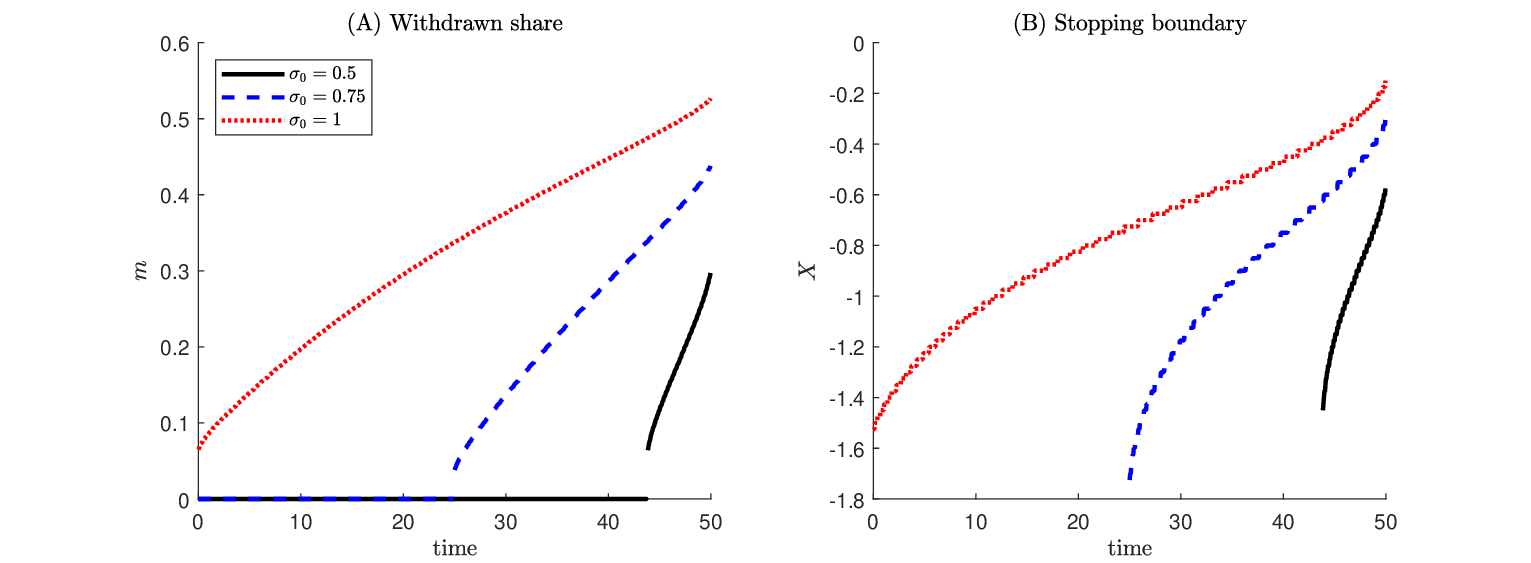}
\caption{Different $\sigma_0$}
\end{subfigure}
\caption{Comparative statics}
 \begin{minipage}{\textwidth}
	\footnotesize
	\vspace{2mm}
    This figure presents the impact of $\eta$, $\sigma$, and $\sigma_0$ on the withdrawn share and the stopping boundary in the earliest-run equilibrium. Parameters are the same as in Figure \ref{fig:BM}.
 \end{minipage}   
    \label{fig:comp}
\end{figure}

\end{document}